\newtheorem{theorem}{Theorem}
\newtheorem{proposition}{Proposition}
\newtheorem{conjecture}{Conjecture}
\newtheorem{lemma}{Lemma}
\newtheorem{claim}{Claim}
\newcommand{\mc}[1]{\mathcal{#1}}
\newcommand{\mb}[1]{\mathbb{#1}}
\newcommand{\ccw}{counterclockwise }
\newcommand{\ccww}{counterclockwise}
\newcommand{\cw}{clockwise }
\newcommand{\cww}{clockwise}
\newcommand{\pss}{PS }
\newcommand{\ps}{Poulalhon and Schaeffer }
\newcommand{\npss}{Poulalhon and Schaeffer's }
\newcommand{\aps}{{\sc Algorithm PS} }
\newcommand{\apss}{{\sc Algorithm PS}}
\newenvironment{proof}{\noindent \emph{Proof.}\ }{\hfill
    $\Box$\vspace{1em}}
  \title{Encoding toroidal triangulations\thanks{This work was
      supported by the grant EGOS ANR-12-JS02-002-01 and the
    project-team GALOIS supported by LabEx PERSYVAL-Lab
    ANR-11-LABX-0025.}}
\author[1]{Vincent Despré
  \thanks{{vincent.despre@gipsa-lab.fr}}} \author[2]{Daniel
  Gon\c{c}alves\thanks{{daniel.goncalves@lirmm.fr}}}
\author[2]{ Benjamin
  L\'ev\^eque\thanks{{benjamin.leveque@lirmm.fr} }}
 \affil[1]{Université Joseph Fourier,
       GIPSA-Lab, Grenoble, France}
 \affil[2]{CNRS, Université de Montpellier, LIRMM, Montpellier, France}
\begin{document}
\maketitle

\begin{abstract}
  \ps introduced an elegant method to linearly encode a planar
  triangulation optimally. The method is based on performing a special
  depth-first search algorithm on a particular orientation of the
  triangulation: the minimal Schnyder wood. Recent progress toward
  generalizing Schnyder woods to higher genus enables us to generalize
  this method to the toroidal case.  In the plane, the method leads to
  a bijection between planar triangulations and some particular
  trees. For the torus we obtain a similar bijection but with
  particular unicellular maps (maps with only one face).
\end{abstract}


\section{Introduction}
\label{sec:introduction}

A closed curve on a surface is \emph{contractible} if it can be
continuously transformed into a single point. In this paper, we
consider graphs embedded on a surface such that they do not have
contractible cycles of size $1$ or $2$ (i.e. no contractible loops and
no multiple edges forming a contractible cycle). Note that this is a
weaker assumption, than the graph being \emph{simple}, i.e. not having
any cycle of size $1$ or $2$ (i.e. no loops and no multiple
edges). A graph embedded on a surface is called a \emph{map} on this
surface if all its faces are homeomorphic to open disks.  A map is a
triangulation if all its faces are triangles.  We denote by $n$ the
number of vertices, $m$ the number of edges and $f$ the number of
faces of a given map.

\ps introduced in~\cite{PS06} a method (called here PS method for
short) to linearly encode a planar triangulation with a binary word of
length
$\log_2\binom{4n}{n}\sim n\, \log_2(\frac{256}{27})\approx 3,2451\,n$
bits. This is asymptotically optimal since it matches the information
theory lower bound. The method is the following. Given a planar
triangulation $G$, it considers the minimal Schnyder wood of $G$ (that
is the orientation where all inner vertices have outdegree $3$ and
that contains no cycle oriented \cww). Then a special depth-first
search algorithm is applied by ``following'' ingoing edges and
``cutting'' outgoing ones. The algorithm outputs a rooted spanning tree
with exactly two leaves (also called stems) on each vertex from which
the original triangulation can be recovered in a straightforward
way. This tree can be encoded very efficiently.  A nice aspect of this
work, besides its interesting encoding properties, is that the method
gives a bijection between planar triangulations and a particular type
of plane trees.
 
Castelli Aleardi, Fusy and Lewiner~\cite{CFL10} adapt \pss method
to encode planar triangulations with boundaries. A consequence is that
a triangulation of any oriented surface can be encoded by cutting the
surface along non-contractible cycles and see the surface as a planar
map with boundaries. This method is a first attempt to generalize \pss
algorithm to higher genus. The obtained algorithm is asymptotically
optimal (in terms of number of bits) but it is not linear, nor
bijective.

The goal of this paper is to present a new generalization of
\pss algorithm to higher genus based on some strong structural
properties.  Applied on a  well chosen orientation of a
toroidal triangulation, what remains after the execution of the
algorithm is a rooted unicellular map (which corresponds to the
natural generalization of trees when going to higher genus) that can
be encoded optimally using $3,2451\,n$ bits.  Moreover, the algorithm
can be performed in linear time and leads to a new bijection between
toroidal triangulations and a particular type of unicellular maps.

The two main ingredients that make \pss algorithm work in an
orientation of a planar map are minimality and accessibility of the
orientation. \emph{Minimality} means that there is no \cw cycle.
\emph{Accessibility} means that there exists a root vertex such that
all the vertices have an oriented path directed toward the root
vertex.  Given $\alpha:V\to \mb{N}$, an orientation of $G$ is an
$\alpha$-orientation if for every vertex $v\in V$ its outdegree
$d^+(v)$ equals $\alpha(v)$. The existence and uniqueness of
  minimal orientations in the plane is given by the following result
of Felsner~\cite{Fel04} (related to older results of
Propp~\cite{Pro93} and Ossona de Mendez~\cite{Oss94}): the set of
$\alpha$-orientations of a given planar map carries a structure of
distributive lattice. This gives the existence and uniqueness of a
minimal $\alpha$-orientation as soon as an $\alpha$-orientation
exists.  Felsner's result enables several applications of \pss method
to other kind of planar maps, see~\cite{AP13,Ber07,DPS13}. In all
these cases the accessibility of the considered $\alpha$-orientations
is a consequence of the natural choice of $\alpha$, like in \npss
original work~\cite{PS06} where any $3$-orientation of the inner edges
of a planar triangulation is accessible for any choice of root vertex
on the outer face. (Note that the conventions may differs in the
literature: the role of outgoing and incoming edges are sometimes
exchanged and/or the role of \cw and \ccww.)

For higher genus, the minimality can be obtained by the following
generalization of Felsner's result.  The second author, Knauer and the
third author~\cite{GKL15} showed that on any oriented surface the set
of orientations of a given map having the same homology carries a
structure of distributive lattice. Note that $\alpha$ has been removed
here since it is captured by the homology (see
Section~\ref{sec:homology} for a brief introduction to homology).
Note also that this result is equivalent to an older result of
Propp~\cite{Pro93} where the lattice structure is described in the
dual setting.  Since this result is very general, there is hope to be
able to further generalize \pss method to surfaces. Note that a given
map on an oriented surface can have several $\alpha$-orientations (for
the same given $\alpha$) that are not homologous. So the set of
$\alpha$-orientations of a given map is now partitioned into
distributive lattices contrarily to the planar case where there is
only one lattice (and thus only one minimal element). In the case of
toroidal triangulations we manage to face this problem and maintain a
bijection by recent results on the structure of $3$-orientations of
toroidal triangulations. We identify a special lattice (and thus a
special minimal orientation) using the notion of Schnyder woods
generalized to the torus by the second and third author in~\cite{GL13}
(further generalized in~\cite{GKL15}, see also~\cite{LevHDR} for a
unified presentation).

The main issue while trying to extend \pss algorithm to higher genus
is the accessibility.  Accessibility toward the outer face is given
almost for free in the planar case because of Euler's formula that
sums to a strictly positive value. For an oriented surface of genus
$g\geq 1$ new difficulties occur. Already in genus $1$ (the torus),
even if the orientation is minimal and accessible \pss algorithm can
visit all the vertices but not all the angles of the map because of
the existence of non-contractible cycles.  We can show that the
special minimal orientation that we choose has the nice property that
this problem never occurs. In genus $g\geq 2$ things get even more
difficult with separating non-contractible cycles that make having
accessibility of the vertices already difficult to obtain.

Another problem is to recover  the original map after the
execution of the algorithm. If what remains after the execution of PS
method is an unicellular map then the map can be recovered with the
same simple rules as in the plane. Unfortunately for many minimal
orientations the algorithm leads to an unicellular embedded graph
covering all the edges but that is not a map (the only face is not a
disk) and it is not possible to directly recover the original map.
Here again, the choice of our special orientation ensures that this
never happens.

Finally the method presented here can be implemented in linear time.
Clearly the execution of \pss algorithm is linear but the difficulty
lies in providing the algorithm with the appropriate orientation in
input.  Computing the minimal Schnyder wood of a planar triangulation
can be done in linear time quite easily by using a so-called shelling
order (or canonical order, see~\cite{Kan96}). Other similar ad-hoc
linear algorithms can be found for other kinds of
$\alpha$-orientations of planar maps (see for
example~\cite[Chapter~3]{Fus07}). Such methods are not known in higher
genus. We solve this problems by first computing an orientation in our
special lattice and then go down in the lattice to find the minimal
orientation. All this can be performed in linear time.

Generalizing the method presented here to higher genus and other kind
of maps thus raises several challenging questions and we hope that the
present paper will lead to further generalizations of planar
bijections, coding, counting and sampling, to higher genus.

A brief introduction to homology and to the corresponding terminology
used in the paper is given in Section~\ref{sec:homology}.  In
Section~\ref{sec:schnyderwoods}, we present the definitions and
results we need concerning the generalization of Schnyder woods to the
toroidal case. In Section~\ref{sec:algo}, we introduce a reformulation
of \npss original algorithm that is applicable to any orientation of
any map on an oriented surface.  The main theorem of this paper is
proved in Section~\ref{sec:open}, that is, for a toroidal
triangulation given with an appropriate root and orientation, the
output of the algorithm is a toroidal unicellular map covering all the
edges of the triangulation. In Section~\ref{sec:close}, we show how
one can recover the original triangulation from the output.  This
output is then used in Section~\ref{sec:coding} to optimally encode a
toroidal triangulation. The linear time complexity of the method is
discussed in Section~\ref{sec:complexity}.  In
Section~\ref{sec:bijection} (resp. Section~\ref{sec:bij2}), we exhibit
a bijection between appropriately rooted toroidal triangulations and
rooted (resp. non-rooted) toroidal unicellular maps. To obtain the
non-rooted bijection, further structural results concerning the
particular Schnyder woods considered in this paper are given in
Section~\ref{sec:flip}.  Finally, a possible generalization to higher
genus is discussed in Section~\ref{sec:highergenus}.

\section{A bit of homology}
\label{sec:homology}

We need a bit of surface homology of general maps, which we discuss
now. The presentation is not standard but it is short and sufficient
to fit our needs. For a deeper introduction to homology we refer
to~\cite{Gib10}.

Consider a map $G$ with edge set $E$, on an orientable surface of
genus $g$, given with an arbitrary orientation of its edges. This
fixed arbitrary orientation is implicit and is used
to manipulate flows.  A \emph{flow} $\phi$ on $G$ is a vector in
$\mb{Z}^{E}$. For any $e\in E$, we denote by $\phi_e$ the
coordinate $e$ of $\phi$.

A \emph{walk} $W$ of $G$ is a sequence of edges such that consecutive
edges are incident.  A walk is \emph{closed} if the starting and
ending vertices are the same.  A walk has a \emph{characteristic flow}
$\phi(W)$ defined by:
$$\phi(W)_e:=\text{times }W\text{ traverses } e \text{ forward} - \text{times
}\\
W\text{ traverses } e \text{ backward}$$

This definition naturally extends to sets of walks.  From now on we
consider that a set of walks and its characteristic flow are the same
object.  We do similarly for oriented subgraphs as they can be seen as
sets of walks.

A \emph{facial walk} is a closed walk bounding a face.  Let $\mc{F}$
be the set of counterclockwise facial walks and let
$\mb{F}=<\phi(\mc{F})>$ the subgroup of $\mb{Z}^E$ generated by
$\mc{F}$.  Two flows $\phi, \phi'$ are said to be \emph{homologous} if
$\phi -\phi' \in \mb{F}$.  A flow $\phi$ is $0$-homologous if it is
homologous to the zero flow, i.e. $\phi \in \mb{F}$.

Let $\mc{W}$ be the set of \emph{closed} walks and let
$\mb{W}=<\phi(\mc{W})>$ the subgroup of $\mb{Z}^E$ generated by
$\mc{W}$.  The group $H(G)=\mb{W}/\mb{F}$ is the \emph{first homology
  group} of $G$.  Since $\dim(\mb{W})=m-n+1$ and $\dim(\mb{F})=f-1$,
Euler's Formula gives $\dim(H(G))=2g$.  So $H(G)\cong\mb{Z}^{2g}$ only
depends on the genus of the map.  A set $(B_1,\ldots,B_{2g})$ of
(closed) walks of $G$ is said to be a \emph{basis for the homology} if
$(\phi(B_1),\ldots,\phi(B_{2g}))$ is a basis of $H(G)$.

\section{Toroidal Schnyder woods}
\label{sec:schnyderwoods}

Schnyder~\cite{Sch89} introduced Schnyder woods for planar
triangulations using the following local property:

Given a map $G$, a vertex $v$ and an orientation and coloring of the
edges incident to $v$ with the colors $0$, $1$, $2$, we say that a
vertex $v$ satisfies the \emph{Schnyder property} if (see
Figure~\ref{fig:LSP}):

\begin{itemize}
\item Vertex $v$ has out-degree one in each color.
\item The edges $e_0(v)$, $e_1(v)$, $e_2(v)$ leaving $v$ in colors
  $0$, $1$, $2$, respectively, occur in counterclockwise order.
\item Each edge entering $v$ in color $i$ enters $v$ in the
  counterclockwise sector from $e_{i+1}(v)$ to $e_{i-1}(v)$ (where
  $i+1$ and $i-1$ are understood modulo $3$).
\end{itemize}

\begin{figure}[!h]
\center
\includegraphics[scale=0.5]{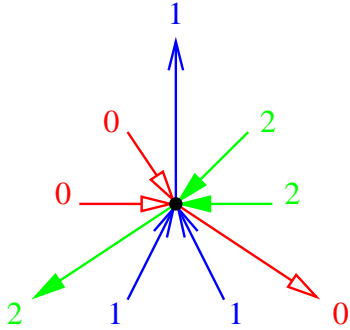}
\caption{The Schnyder property. The correspondence between red, blue,
  green and 0, 1, 2 and the arrow shapes used here serves as a
  convention for all figures in the paper.}
\label{fig:LSP}
\end{figure}

Given a planar triangulation $G$, a \emph{(planar) Schnyder wood} of
$G$ is an orientation and coloring of the inner edges of $G$ with the
colors $0$, $1$, $2$, where each inner vertex $v$ satisfies the
\emph{Schnyder property}.  In~\cite{GL13, GKL15} (see also the HDR
thesis of the third author~\cite{LevHDR}) a generalization of Schnyder
woods for higher genus has been proposed.  Since this paper deals with
triangulations of the torus only, we use a simplified version of the
definitions and results from~\cite{GL13, GKL15}.

The definition of Schnyder woods for toroidal triangulations is the
following.  Given a toroidal triangulation $G$, a \emph{(toroidal)
  Schnyder wood} of $G$ is an orientation and coloring of the edges of
$G$ with the colors $0$, $1$, $2$, where each vertex satisfies the
Schnyder property (see Figure~\ref{fig:tore-primal} for an
example). The three colors $0$, $1$, $2$ are completely symmetric in
the definition, thus we consider that two Schnyder woods that are
obtained one from the other by a (cyclic) permutation of the colors
are in fact the same object.  We consider that a Schnyder wood and its
underlying orientation are the same object since one can easily
recover a coloring of the edges in a greedy way (by choosing the color
of an edge arbitrarily and then satisfying the Schnyder property at
every vertex).

\begin{figure}[h!]
\center
\includegraphics[scale=0.5]{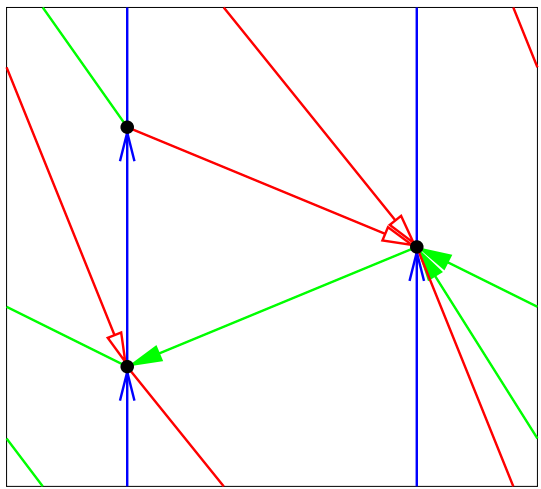}
\caption{A Schnyder wood of a toroidal triangulation (opposite sides
  are identified in order to form a torus).}
\label{fig:tore-primal}
\end{figure}

Note that the situation is quite different from the planar case.  In a
Schnyder wood of a toroidal triangulation, each vertex has exactly one
outgoing arc in each color, so there are monochromatic cycles
contrarily to the planar case (one can show that these monochromatic
cycles are not contractible). Moreover the graph induced by one color
is not necessarily connected.  Also, by a result of De Fraysseix and
Ossona de Mendez~\cite{FO01}, there is a bijection between
orientations of the internal edges of a planar triangulation where
every inner vertex has outdegree $3$ and Schnyder woods. Thus, in the
planar case, any orientation with the proper outdegrees corresponds to
a Schnyder wood. This is not true for toroidal triangulations since
there exists $3$-orientations that do not correspond to a Schnyder wood
(see Figure~\ref{fig:orientation}).

\begin{figure}[!h]
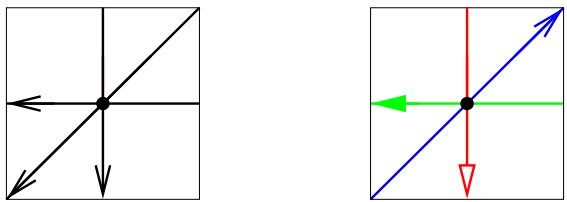

\center
\includegraphics[scale=0.5]{orientation.pdf} 
\hspace{5em}
\includegraphics[scale=0.5]{orientation-col.pdf}
\caption{Two different orientations of a toroidal triangulation. Only
  the one on the right corresponds to a Schnyder wood.}
\label{fig:orientation}
\end{figure}

A Schnyder wood of a toroidal triangulation is said to be
\emph{crossing}, if for each pair $i,j$ of different colors, there
exists a monochromatic cycle of color $i$ intersecting a monochromatic
cycle of color $j$. The existence of crossing Schnyder woods is proved
in~\cite[Theorem 1]{GL13} (note that in~\cite{GL13} the crossing
property is included in the definition of Schnyder woods,
see~\cite{LevHDR} for a unified presentation):

  \begin{theorem}[\cite{GL13}]
\label{th:crossing}
    A toroidal triangulation admits a crossing Schnyder wood.
  \end{theorem}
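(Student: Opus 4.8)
The plan is to build a crossing Schnyder wood in three stages: first obtain an orientation in which every vertex has outdegree exactly $3$ (a \emph{$3$-orientation}); then repair it so that the Schnyder property holds at every vertex; then repair it further so that the crossing condition holds. \textbf{Stage 1 (a $3$-orientation exists).} Since $G$ is a toroidal triangulation, Euler's formula forces $m=3n$ and $f=2n$, so an orientation with $d^+(v)=3$ for all $v$ is consistent on average. By the classical orientation theorem, such an orientation exists iff for every $S\subseteq V$ the number of edges with both endpoints in $S$ is at most $3|S|$; this holds because the submap induced by $S$ is itself a toroidal map and toroidal maps without contractible multiple edges have at most three times as many edges as vertices (the finitely many degenerate small cases being checked directly). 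Equivalently one exhibits a feasible integral flow in the bipartite vertex--edge incidence network. This is the routine part.

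\textbf{Stage 2 (from a $3$-orientation to a Schnyder wood).} Fix a $3$-orientation $D$ and color its edges greedily so that at each vertex the three outgoing edges receive the colors $0,1,2$ in counterclockwise order. Then the only way $D$ can fail to be a Schnyder wood is that at some vertex an incoming edge of color $i$ enters outside the prescribed counterclockwise sector from $e_{i+1}$ to $e_{i-1}$ --- call this a \emph{bad incidence}. The structural heart of the argument is to show that as long as a bad incidence exists there is a directed cycle $C$ in $D$ whose reversal removes bad incidences without creating new ones while strictly decreasing a suitable potential: flips inside the current homology class that descend in the distributive lattice of $3$-orientations guaranteed by~\cite{GKL15}, together with homology-changing flips along monochromatic cycles. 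Since each homology class contains only finitely many $3$-orientations and only finitely many homology classes are relevant, the repair process terminates, and a $3$-orientation with no bad incidence is exactly a Schnyder wood.

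\textbf{Stage 3 (making it crossing).} In a Schnyder wood each vertex has outdegree $1$ in color $i$, so the edges of color $i$ form a functional digraph whose every component contains a unique directed cycle; an Euler count inside any disk bounded by such a cycle shows that every monochromatic cycle is non-contractible. On the torus two disjoint non-contractible cycles are homologous, so all monochromatic cycles of color $i$ share a primitive homology class $\gamma_i$, and the crossing property for a pair $i,j$ is equivalent to $\gamma_i$ and $\gamma_j$ being non-proportional. If some $\gamma_i=\gamma_j$, reversing one monochromatic cycle of color $i$ changes the homology class of the orientation and reroutes color $i$ transversally; combined with the Stage~2 flips that restore the Schnyder property, this yields an orientation in which that pair now crosses, and a homology/counting argument shows the three pairs can be handled without undoing one another. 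Iterating produces a crossing Schnyder wood.

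The main obstacle I expect is in Stage 2: controlling how reversing a directed cycle interacts with the cyclic order of edges around each vertex, so as to prove that repairing a bad incidence at one vertex does not create one elsewhere and that the chosen potential genuinely decreases. The existence of a $3$-orientation (Stage~1) and the homological bookkeeping for the crossing condition (Stage~3) are comparatively mechanical once Stage~2 is in place.
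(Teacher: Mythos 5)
This theorem is not proved in the present paper: it is imported from \cite{GL13}, where the argument proceeds by repeatedly contracting a contractible edge until one reaches the unique one-vertex toroidal triangulation (which carries an explicit crossing Schnyder wood) and then decontracting, with local recoloring rules that preserve both the Schnyder property and the crossing property at each decontraction step. Your proposal takes an entirely different route (3-orientation, then local repair, then homological repair), and as written it has genuine gaps precisely where the difficulty lies. In Stage~2, the ``greedy'' coloring is not well defined: assigning colors $0,1,2$ to the three outgoing edges of each vertex independently in counterclockwise order involves a free cyclic choice at every vertex, and these choices are coupled globally because each edge must simultaneously be the $i$-colored out-edge of its tail and enter its head in the correct sector. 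More importantly, not every $3$-orientation of a toroidal triangulation admits \emph{any} Schnyder coloring (see Figure~\ref{fig:orientation} of the paper), so the repair must change the orientation, and in fact must change its homology class --- the $\gamma_0$ discussion in Section~\ref{sec:bijection} shows that being a (HTC) Schnyder wood is determined by the homology class of the orientation. Your assertion that ``as long as a bad incidence exists there is a directed cycle whose reversal removes bad incidences without creating new ones while strictly decreasing a suitable potential'' is exactly the theorem to be proved, and no candidate cycle, potential, or termination argument is supplied; you acknowledge this yourself. The analogous step that does exist in the literature (\cite[Section~9]{GKL15}, turning an arbitrary $3$-orientation into a half-crossing Schnyder wood by reversing ``middle cycles'' found via ``middle paths'') requires a careful and specific construction, not a generic lattice descent.

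Stage~3 has the same character: after reversing a monochromatic cycle of color $i$, the reversed edges must be recolored and the Schnyder property is violated at every vertex of that cycle, so the claim that one can ``restore the Schnyder property'' and that the pair $(i,j)$ ``now crosses'' --- and that fixing one pair does not break another --- is unsupported. (Your Stage~1 and the homological observations at the start of Stage~3, e.g.\ that monochromatic cycles are non-contractible and that disjoint non-contractible cycles on the torus are homologous up to sign, are correct and standard.) In short, the skeleton is a reasonable alternative strategy, but the two load-bearing steps are stated rather than proved, so this does not constitute a proof; if you want a complete argument you should either carry out the contraction/decontraction scheme of \cite{GL13} or the middle-cycle reversal scheme of \cite{GKL15}.
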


  Figure~\ref{fig:crossing} depicts two different Schnyder woods of
  the same graph where just the one on the left is crossing (on the
  right case the red and green monochromatic cycles do not intersect,
  we say that the Schnyder wood is ``half-crossing'' since blue crosses
  both green and red, see~\cite{GKL15,LevHDR} for a formal
  definition). Note that the Schnyder wood on the right is obtained
  from the one on the left by flipping a \cw triangle into a \ccw
  triangle.

 \begin{figure}[!h]
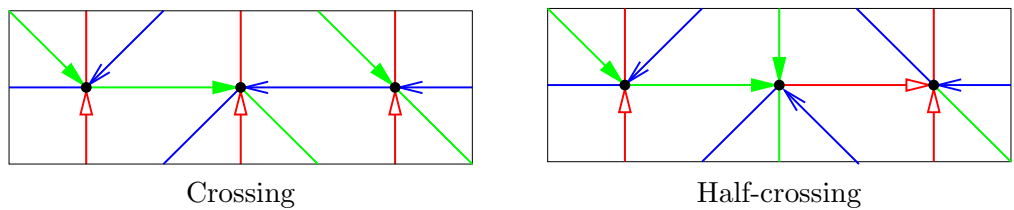

 \center
 \begin{tabular}{cc}
 \includegraphics[scale=0.4]{gamma0fullcrossing.pdf} \ \ & \ \
 \includegraphics[scale=0.4]{halfcrossing-.pdf} \\
 Crossing \ \ & \ \ Half-crossing \\
 \end{tabular}
 \caption{A crossing and an half-crossing Schnyder wood.}
 \label{fig:crossing}
 \end{figure}

 Consider a toroidal triangulation $G$ given with a crossing Schnyder
 wood. Let $D_0$ be the corresponding $3$-orientation of $G$.  Let
 $O(G)$ be the set of all the orientations of $G$ that are homologous
 to $D_0$. A consequence of~\cite[Theorem~5 and Corollary~2]{GKL15} is
 that all the crossing Schnyder woods of $G$ are homologous to each
 other.  So $O(G)$ contains all the crossing Schnyder woods of
 $G$. Thus the definition of $O(G)$ does not depend on the
 particular choice of $D_0$ and thus it is uniquely defined. A
 consequence of~\cite[Theorem~4 and Corollary~2]{GKL15} is that every
 orientation of $O(G)$ corresponds to a Schnyder wood. Thus we call
 the elements of $O(G)$ the \emph{homologous-to-crossing Schnyder
 woods} (or \emph{HTC Schnyder woods} for short).  Note that all the
 crossing Schnyder woods are HTC.

 Figure~\ref{fig:noncrossing} gives an example of an HTC Schnyder wood
 that is not crossing and a Schnyder woods that is not HTC. The
 example on the left is obtained from the crossing Schnyder wood of
 Figure~\ref{fig:crossing} by flipping two triangles (one to obtain
 the half-crossing Schnyder wood of Figure~\ref{fig:crossing} and then
 another one flipped from \ccw to \cww). Thus it is HTC since the
 difference with a crossing Schnyder wood is a 0-homologous oriented
 subgraph.  The example on the right of Figure~\ref{fig:noncrossing}
 is obtained from the crossing Schnyder wood of
 Figure~\ref{fig:crossing} by reversing the three vertical red
 monochromatic cycles. The union of these three cycles is not a
 0-homologous oriented subgraph, thus the resulting orientation is not
 HTC.

 \begin{figure}[!h]
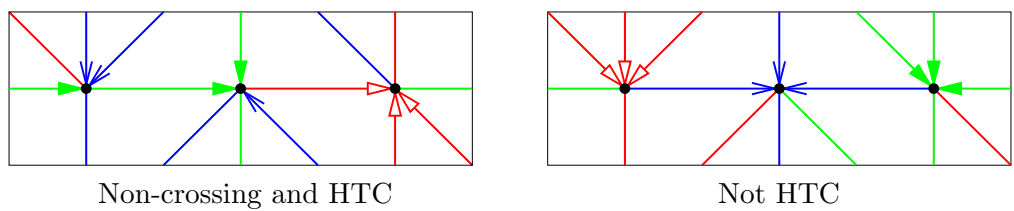

 \center
 \begin{tabular}{cc}
 \includegraphics[scale=0.4]{gamma0--.pdf} \ \ & \ \
 \includegraphics[scale=0.4]{gammanot0-.pdf} \\
 Non-crossing and HTC\ \ & \ \ Not HTC\\
 \end{tabular}
 \caption{Non-crossing Schnyder woods.}
 \label{fig:noncrossing}
 \end{figure}

 It is proved in~\cite{GKL15} that on any oriented surface the set of
 orientations of a given map having the same homology carries a
 structure of distributive lattice for a particular order defined
 below. Thus in particular the set of HTC Schnyder woods carries a
 structure of distributive lattice.
 
 Let us define an order on the orientations of $G$. For that purpose,
 choose an arbitrary face $f_0$ of $G$ and let $F_0$ be its
 counterclockwise facial walk (this choice of a particular face
 corresponds to the choice of the outer face in the planar case). Let
 $\mc{F}$ be the set of counterclockwise facial walks of $G$ and
 $\mc{F}'=\mc{F}\setminus F_0$.  We say that a $0$-homologous oriented
 subgraph $T$ of $G$ is \emph{counterclockwise} (resp.
 \emph{clockwise}) w.r.t.~$f_0$, if its characteristic flow can be
 written as a combination with positive (resp. negative) coefficients
 of characteristic flows of $\mc{F}'$,
 i.e. $\phi(T)=\sum_{F\in\mc{F}'}\lambda_F\phi(F)$, with
 $\lambda\in\mathbb{N}^{|\mc{F}'|}$
 (resp. $-\lambda\in\mathbb{N}^{|\mc{F}'|}$).  Given two orientations
 $D$ and $D'$ of $G$, let $D\setminus D'$ denote the subgraph of $D$
 induced by the edges that are not oriented as in $D'$.  We set
 $D\leq_{f_0} D'$ if and only if $D\setminus D'$ is counterclockwise.
 In~\cite[Theorem 7]{GKL15} the following is proved:

\begin{theorem}[\cite{GKL15}]
\label{th:lattice}
$(O(G),\leq_{f_0})$ is a distributive lattice.
\end{theorem}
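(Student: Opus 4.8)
The plan is to follow the classical strategy of Propp and Felsner for lattices of $\alpha$-orientations, transposed to the homological setting of $O(G)$: encode each orientation of $O(G)$ by an integer ``height function'' on the faces, and identify $(O(G),\leq_{f_0})$ with a sublattice of the product lattice $(\mathbb{Z}^{\mathcal{F}},\min,\max)$, which is distributive.

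First I would build the encoding. Fix a reference orientation $D_0\in O(G)$. For any $D\in O(G)$ the subgraph $D_0\setminus D$ is $0$-homologous: indeed $\phi(D_0)-\phi(D)=2\,\phi(D_0\setminus D)$ lies in $\mathbb{F}$ by definition of $O(G)$, and $\mathbb{F}$ is saturated in $\mathbb{Z}^E$ because the successive quotients $\mathbb{Z}^E/\mathbb{W}$ and $\mathbb{W}/\mathbb{F}\cong\mathbb{Z}^{2g}$ are torsion-free, so $\phi(D_0\setminus D)\in\mathbb{F}$ itself. Since $\sum_{F\in\mathcal{F}}\phi(F)=0$ and $\dim\mathbb{F}=f-1=|\mathcal{F}'|$, the family $(\phi(F))_{F\in\mathcal{F}'}$ is a $\mathbb{Z}$-basis of $\mathbb{F}$, so there is a unique $\lambda^D\in\mathbb{Z}^{\mathcal{F}'}$ with $\phi(D_0\setminus D)=\sum_{F\in\mathcal{F}'}\lambda^D_F\,\phi(F)$; I extend $\lambda^D$ to $\mathcal{F}$ by $\lambda^D_{f_0}:=0$. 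This gives an injection $D\mapsto\lambda^D$ of $O(G)$ into $\mathbb{Z}^{\mathcal{F}}$. A routine computation with characteristic flows gives $\phi(D\setminus D')=\sum_{F\in\mathcal{F}'}(\lambda^{D'}_F-\lambda^D_F)\,\phi(F)$ for any $D,D'\in O(G)$, whence, by definition of $\leq_{f_0}$, one has $D\leq_{f_0}D'$ if and only if $\lambda^D_F\le\lambda^{D'}_F$ for all $F\in\mathcal{F}$. In particular $\leq_{f_0}$ is a genuine partial order, pulled back from the product order on $\mathbb{Z}^{\mathcal{F}}$.

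It then suffices to show that the image $\Lambda:=\{\lambda^D:D\in O(G)\}$ is closed under coordinatewise $\min$ and $\max$. The local input is this: each edge $e$ of the triangulation is incident to two faces $F,F'$ with $\phi(F)_e=-\phi(F')_e=\varepsilon_e\in\{\pm1\}$ and $\phi(F'')_e=0$ for every other face $F''$, so $\phi(D_0\setminus D)_e=\varepsilon_e(\lambda^D_F-\lambda^D_{F'})$. As $\phi(D_0\setminus D)_e\in\{-1,0,1\}$ and equals $0$ exactly when $D$ and $D_0$ agree on $e$, the difference $\lambda^D_F-\lambda^D_{F'}$ takes, over all $D\in O(G)$, only two consecutive values $\{c_e,c_e+1\}$ with $c_e\in\{-1,0\}$ depending only on $e$ and $D_0$ --- one value per orientation of $e$. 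Now take $\lambda=\lambda^D$, $\mu=\lambda^{D'}$ in $\Lambda$ and set $\nu:=\min(\lambda,\mu)$ componentwise. From the elementary inequalities $\min(a_1-a_2,b_1-b_2)\le\min(a_1,b_1)-\min(a_2,b_2)\le\max(a_1-a_2,b_1-b_2)$ it follows that $\nu_F-\nu_{F'}$ again lies in $\{c_e,c_e+1\}$ for every edge $e$, and hence prescribes a legal orientation of $e$; collecting these over all edges produces an orientation $\tilde D$ of $G$ satisfying $\phi(D_0\setminus\tilde D)_e=\varepsilon_e(\nu_F-\nu_{F'})=\bigl(\textstyle\sum_{F''}\nu_{F''}\phi(F'')\bigr)_e$ on every edge. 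Therefore $\phi(D_0\setminus\tilde D)=\sum_F\nu_F\phi(F)\in\mathbb{F}$, so $\tilde D$ is homologous to $D_0$, i.e. $\tilde D\in O(G)$, and by uniqueness $\lambda^{\tilde D}=\nu$. Thus $\nu\in\Lambda$ and, because the order is the product order, $\tilde D=D\wedge D'$; the symmetric argument with $\max$ realizes $D\vee D'$. Hence $(O(G),\leq_{f_0})$ is a sublattice of the distributive lattice $(\mathbb{Z}^{\mathcal{F}},\min,\max)$, and is therefore a distributive lattice.

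The bookkeeping parts --- the saturation of $\mathbb{F}$, the flow identity for $\phi(D\setminus D')$, and the per-edge identity $\phi(D_0\setminus D)_e=\varepsilon_e(\lambda^D_F-\lambda^D_{F'})$ --- are routine sign manipulations. The crux, which is exactly where positive genus bites and why one must restrict to a single homology class $O(G)$ rather than to all orientations with the prescribed outdegrees, is the ``two consecutive values'' statement: it is what makes a globally consistent height function exist and what makes the coordinatewise $\min$ (and $\max$) of two realizable height functions realizable again. On the torus this must be argued homologically, since the planar ``winding-number potential'' that produces it in the classical proof is no longer well defined across different homology classes.
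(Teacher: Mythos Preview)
The paper does not actually prove Theorem~\ref{th:lattice}; it simply quotes it from~\cite[Theorem~7]{GKL15}. So there is no in-paper proof to compare against. Your argument is correct and is precisely the height-function strategy that underlies the cited result: encode each $D\in O(G)$ by the unique $\lambda^D\in\mathbb{Z}^{\mathcal{F}}$ with $\lambda^D_{f_0}=0$ and $\phi(D_0\setminus D)=\sum_F\lambda^D_F\phi(F)$, observe that $\leq_{f_0}$ becomes the componentwise order, and use the ``adjacent faces differ by $0$ or $\pm1$'' constraint to show closure under $\min$ and $\max$. The saturation argument for $\mathbb{F}\subset\mathbb{Z}^E$ (needed to pass from $2\phi(D_0\setminus D)\in\mathbb{F}$ to $\phi(D_0\setminus D)\in\mathbb{F}$) and the elementary sandwich inequality for $\min$ are exactly the right ingredients. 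One small point worth making explicit: in a toroidal triangulation with no contractible $1$- or $2$-cycle every edge is incident to two \emph{distinct} faces, so your per-edge identity $\phi(D_0\setminus D)_e=\varepsilon_e(\lambda^D_F-\lambda^D_{F'})$ is indeed well posed; this is automatic here but would need a word in the general setting of~\cite{GKL15}.
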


Since $(O(G),\leq_{f_0})$ is a distributive lattice, it has a unique
minimal element.  The following lemma gives a property of
this minimum that is essential to apply \npss method.

\begin{lemma}
\label{lem:maximal}
The minimal element of $(O(G),\leq_{f_0})$ is the only HTC Schnyder
wood that contains no clockwise (non-empty) $0$-homologous oriented
subgraph w.r.t.~$f_0$.
\end{lemma}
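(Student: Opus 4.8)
The plan is to prove the two directions of the characterization separately, using only that $(O(G),\leq_{f_0})$ is a partial order with a unique minimal element $D_{\min}$ (Theorem~\ref{th:lattice}), that $O(G)$ consists of \emph{all} orientations of $G$ homologous to $D_0$, and the elementary ``flip'' observation that reversing every edge of a $0$-homologous oriented subgraph of an orientation yields another orientation in the same homology class. I will also use repeatedly that, for $D,D'\in O(G)$, the oriented subgraph $D\setminus D'$ is $0$-homologous (so that the terms ``clockwise''/``counterclockwise'' apply to it), that $D'\setminus D$ is its reverse, and that reversing a clockwise (resp.\ counterclockwise) $0$-homologous subgraph produces a counterclockwise (resp.\ clockwise) one --- all immediate from the definitions of Section~\ref{sec:homology}.

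First I would show that $D_{\min}$ contains no nonempty clockwise $0$-homologous oriented subgraph w.r.t.\ $f_0$. Assume for contradiction that $T\subseteq D_{\min}$ is such a subgraph. Since $T$ is a subgraph of $D_{\min}$ each of its edges carries a single orientation, so reversing those edges inside $D_{\min}$ gives a well-defined orientation $D'$ of $G$; as $D'$ differs from $D_{\min}$ by reversing the $0$-homologous subgraph $T$ (so $\phi(D')-\phi(D_{\min})=-2\,\phi(T)\in\mathbb{F}$), $D'$ is homologous to $D_{\min}$ and hence $D'\in O(G)$. By construction $D'\setminus D_{\min}$ is exactly the reverse of $T$, which is counterclockwise because $T$ is clockwise; thus $D'\leq_{f_0}D_{\min}$. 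But $D_{\min}$ is the minimum of $(O(G),\leq_{f_0})$, so also $D_{\min}\leq_{f_0}D'$, and antisymmetry of $\leq_{f_0}$ forces $D'=D_{\min}$, i.e.\ $T$ is empty --- a contradiction.

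Conversely I would show that any $D\in O(G)$ containing no nonempty clockwise $0$-homologous oriented subgraph w.r.t.\ $f_0$ must equal $D_{\min}$. Since $D_{\min}$ is the minimum, $D_{\min}\leq_{f_0}D$, that is, $D_{\min}\setminus D$ is counterclockwise; hence its reverse $D\setminus D_{\min}$ is a clockwise $0$-homologous oriented subgraph of $D$, so by hypothesis it is empty. Therefore $D$ and $D_{\min}$ agree on every edge, i.e.\ $D=D_{\min}$. Together with the first part and the uniqueness of the minimal element of $(O(G),\leq_{f_0})$, this establishes that $D_{\min}$ is the only HTC Schnyder wood with the stated property.

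The step that requires care, and the real content of the argument, is the first implication: one must genuinely exhibit a strictly smaller element of the lattice out of a clockwise subgraph --- which is precisely the flip move --- and then conclude by antisymmetry of $\leq_{f_0}$, rather than attempting to compare characteristic flows directly (which would additionally require the fact that a nonempty $0$-homologous oriented subgraph cannot be simultaneously clockwise and counterclockwise). The remaining ingredients --- that $O(G)$ is closed under flips of $0$-homologous subgraphs, and that $D\setminus D'$ is $0$-homologous whenever $D,D'\in O(G)$ --- are routine consequences of the definitions together with $H(G)$ being torsion-free.
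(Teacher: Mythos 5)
Your proof is correct and follows essentially the same route as the paper's: reverse the hypothetical clockwise $0$-homologous subgraph $T$ of $D_{\min}$ to produce an element of $O(G)$ lying below $D_{\min}$ (contradicting minimality), and for uniqueness observe that any other $D$ satisfies $D_{\min}\leq_{f_0}D$, so $D\setminus D_{\min}$ is a non-empty clockwise $0$-homologous subgraph of $D$. The only difference is that you spell out the antisymmetry step and the closure of $O(G)$ under flips, which the paper leaves implicit.
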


\begin{proof}
  Let $D_{\min}$ be the minimal element of $(O(G),\leq_{f_0})$.
  Suppose by contradiction that $D_{\min}$ contains a clockwise
  non-empty $0$-homologous oriented subgraph $T$ w.r.t.~$f_0$.  The
  orientation of $G$ obtained from $D_{\min}$ by reversing all the
  edges of $T$ gives an orientation $D\in O(G)$ such that
  $T=D_{\min}\setminus D$.  Furthermore, by definition of
  $\leq_{f_0}$, we have $D\leq_{f_0} D_{\min}$, a contradiction to the
  minimality of $D_{\min}$.  So $D_{\min}$ contains no clockwise
  non-empty $0$-homologous oriented subgraph w.r.t.~$f_0$.
  
  We now show that this characterizes $D_{\min}$.  For any
  $D\in O(G)$, distinct from $D_{\min}$, we have
  $D_{\min}\leq_{f_0} D$. Thus $T=D\setminus D_{\min}$ is a non-empty
  clockwise $0$-homologous oriented subgraph of $D$.
\end{proof}

The crossing Schnyder wood of Figure~\ref{fig:tore-primal-min} is the
minimal HTC Schnyder wood for the choice of $f_0$ corresponding to the
shaded face. This example is used in the next sections to illustrate 
\npss method.

\begin{figure}[h!]
\center
\includegraphics[scale=0.5]{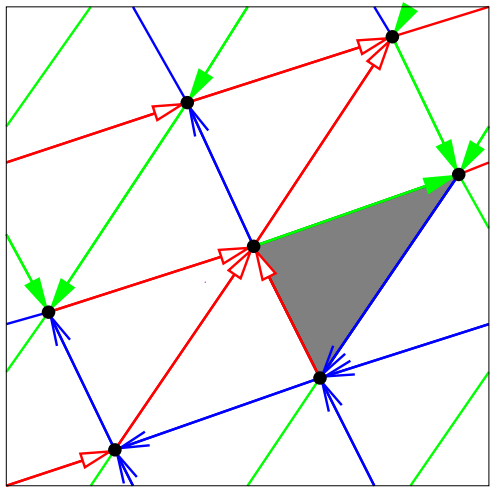}
\caption{The minimal HTC Schnyder wood of $K_7$ w.r.t.~the shaded
  face.}
\label{fig:tore-primal-min}
\end{figure}

The two HTC Schnyder woods of Figure~\ref{fig:crossing} are not
minimal (for any choice of special face $f_0$) since they contain
several triangles that are oriented clockwise. On the contrary, the
HTC Schnyder wood of Figure~\ref{fig:noncrossing} is minimal w.r.t to
its only face oriented clockwise. These examples shows that the
minimal HTC Schnyder wood is not always crossing.

We define the dual orientation $D^*$ of an orientation $D$ of $G$ as
an orientation of the edges of the dual map $G^*$ of $G$ satisfying
the following rule: the dual $e^*$ of an edge $e$ goes from the face
on the left of $e$ to the face on the right of $e$.  The following
lemma gives the key property of HTC Schnyder woods that we need in
this paper:

\begin{lemma}
\label{lem:incomingedges}
If $D$ is an orientation corresponding to an HTC Schnyder wood, then
the dual orientation $D^*$ contains no oriented non-contractible
cycle.
\end{lemma}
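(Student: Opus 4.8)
The plan is to argue by contradiction. Suppose $D$ is an HTC Schnyder wood but $D^*$ contains an oriented non-contractible cycle $C^*$ in the dual map $G^*$. The cycle $C^*$ corresponds to a set of edges of $G$ that, together with the orientation rule ``$e^*$ goes from the left face of $e$ to the right face of $e$,'' separates the faces of $G$ into a coherent ``left side'' and ``right side'' along $C^*$. First I would translate this into a statement about the primal orientation $D$: the edges of $G$ dual to the edges of $C^*$ all cross $C^*$ in the same rotational sense, so in $D$ they form a directed edge-cut-like structure. More precisely, walking along the non-contractible dual cycle $C^*$ and looking at the primal edges it crosses, all these primal edges are oriented consistently from one side of $C^*$ to the other. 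I want to derive from this a non-empty $0$-homologous oriented subgraph of $D$ that is clockwise w.r.t.~$f_0$ (which would contradict Lemma~\ref{lem:maximal} if $D$ were minimal) — but since $D$ is only assumed HTC, not minimal, I instead need to contradict the \emph{Schnyder property} itself, i.e.~show that no orientation in $O(G)$ can have such a dual cycle.

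The key structural input should be that in any HTC Schnyder wood, the three monochromatic sets of outgoing edges each form a collection of non-contractible cycles with a well-understood homology (this follows from the results of~\cite{GL13,GKL15} cited just before the lemma: every vertex has exactly one outgoing edge in each color, so color $i$ forms a spanning subgraph where every vertex has out-degree $1$, hence a disjoint union of cycles, and these cycles are non-contractible). I would then consider the dual non-contractible cycle $C^*$ and the primal edges crossing it, and examine their colors. Because $C^*$ is non-contractible, it must cross the monochromatic cycles of at least one color, say color $i$; and because $D^*$ orients all of $C^*$'s edges coherently, all color-$i$ edges crossing $C^*$ point the same way across it. The contradiction should come from the Schnyder property: following a color-$i$ monochromatic cycle $\gamma_i$ of $D$, each time $\gamma_i$ crosses $C^*$ the local picture at the crossing vertex, combined with the Schnyder angular condition (incoming edges of color $i$ lie in the sector from $e_{i+1}$ to $e_{i-1}$), forces $\gamma_i$ to cross $C^*$ alternately in the two directions — so $\gamma_i$ crosses $C^*$ an even number of times with zero algebraic intersection, for every color. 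But zero algebraic intersection of $C^*$ with a full homology basis (the monochromatic cycles of the three colors generate the homology of the torus, by the crossing/HTC structure) forces $C^*$ to be $0$-homologous, hence contractible — contradiction.

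The main obstacle will be making the crossing-parity argument rigorous: I need to show carefully that at a vertex $v$ where a color-$i$ monochromatic cycle meets the dual cycle $C^*$, the Schnyder property genuinely constrains the direction in which $C^*$'s edges cross, and to set up the algebraic-intersection bookkeeping on the torus so that ``$C^*$ has zero intersection number with each monochromatic cycle'' is correctly deduced and correctly implies $C^*$ is $0$-homologous. A cleaner route, which I would try first, is purely homological: a directed cycle in $D^*$ has strictly positive ``winding'' across it, so the primal edge set it crosses, oriented by $D$, is a non-trivial directed cocycle; dualizing back, I want to show this cocycle structure is incompatible with every vertex of $D$ having out-degree exactly one in each color — essentially a discharging/counting argument on the two sides of $C^*$ using that each color contributes exactly one outgoing edge per vertex, so the net color-$i$ flow across $C^*$ must be balanced, contradicting coherent orientation. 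Either way, the crux is converting ``oriented non-contractible cycle in $D^*$'' into ``unbalanced monochromatic flow across a cut in $D$,'' which the Schnyder property forbids.
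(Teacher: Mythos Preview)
Your central mechanism is inverted. A primal cycle and a dual cycle do not meet at vertices of $G$; they meet at \emph{edges} (an edge of $\gamma_i$ whose dual lies on $C^*$). By the dual-orientation rule, if $C^*$ is an oriented cycle of $D^*$ then every primal edge it crosses is oriented from the right side of $C^*$ to the left side of $C^*$. Hence any oriented primal cycle (in particular any monochromatic cycle $\gamma_i$) that crosses $C^*$ does so \emph{always in the same direction}, not alternately; the algebraic intersection number equals the number of crossings and is nonnegative, not zero. So the plan ``show $C^*$ has zero intersection with every monochromatic cycle, hence $C^*$ is $0$-homologous'' collapses. The Schnyder angular condition at a vertex plays no role here, since there is no vertex where $\gamma_i$ and $C^*$ interact.

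The paper's argument runs in the opposite direction and in two steps. First, for a \emph{crossing} Schnyder wood $D_0$, the monochromatic cycles $C_0,C_1,C_2$ give three homology bases $B_i=(C_{i-1},C_{i+1})$, and the Schnyder property forces $C_{i-1}$ to cross $C_i$ from left to right; this yields that any non-contractible $C^*$ can be written with positive coefficients in some $B_i$, so some $C_{i+1}$ must cross $C^*$ from left to right --- impossible by the previous paragraph. Second, for a general HTC orientation $D$, one uses that $T=D\setminus D_0$ is $0$-homologous: such a $T$ has algebraic intersection $0$ with $C^*$, but can only cross $C^*$ from right to left, hence does not cross it at all; therefore $C^*$ survives in $D_0^*$, contradicting step one. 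Your plan is missing both the correct direction of the intersection argument and this reduction from HTC to crossing (you assume the monochromatic cycles span homology in any HTC Schnyder wood, which is exactly what fails in the non-crossing examples of Figure~\ref{fig:noncrossing}).
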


\begin{proof}
   We first prove the property for a crossing Schnyder
  wood and then show that it is stable by reversing a 0-homologous
  oriented subgraph.  Thus it is true for all HTC Schnyder woods.

  Consider a crossing Schnyder wood of $G$ by
  Theorem~\ref{th:crossing} and let $D_0$ be the corresponding
  orientation.  For $i\in\{0,1,2\}$, let $C_i$ be a monochromatic
  cycle of color $i$. By~\cite[Theorem 7]{GL13}, in a crossing
  Schnyder wood, the monochromatic cycles are not contractible and any
  two monochromatic cycles of different colors are not homologous and
  intersecting. Thus for $i\in\{0,1,2\}$, the two cycles $C_{i-1}$ and
  $C_{i+1}$ form a basis $B_i$ for the homology.  By the Schnyder
  property, cycle $C_{i-1}$ is crossing $C_i$ (maybe several time)
  from left to right.  Thus the homology of any closed curve can be
  expressed in at least one of the basis $B_i$ with only positive
  coefficients.

  Suppose now by contradiction that $D_0^*$ contains an oriented
  non-contractible cycle $C^*$. Let $i$ in $\{0,1,2\}$, such that
  $C^*$ is homologous to $\lambda_{i-1}C_{i-1}+\lambda_{i+1}C_{i+1}$
  with $\lambda_{i-1}>0$ and $\lambda_{i+1}\geq 0$. Then $C_{i+1}$ is
  crossing $C^*$ at least once from left to right, contradicting the
  fact that $C^*$ is an oriented cycle of $D_0^*$. So $D_0^*$ contains
  no oriented non-contractible cycle.

  Consider now an HTC Schnyder wood of $G$ and let $D$ be the
  corresponding orientation. Since $D$ and $D_0$ are both element of
  $O(G)$ they are homologous to each other. Let $T$ be the
  $0$-homologous oriented subgraph of $D$ such that
  $T=D\setminus D_0$. Thus $D_0$ is obtained from $D$ by reversing the
  edges of $T$.

  Suppose by contradiction that $D^*$ contains an oriented
  non-contractible cycle $C^*$. The oriented subgraph $T$ is
  $0$-homologous thus it intersects $C^*$ exactly the same number of
  time from right to left than from left to right. Since $C^*$ is
  oriented $T$ cannot intersect it from left to right. So $T$ does not
  intersect $C^*$ at all. Thus reversing $T$ to go from $D$ to $D_0$
  does not affect $C^*$. Thus $C^*$ is an oriented non-contractible
  cycle of $D_0^*$, a contradiction.
\end{proof}

For the non-HTC Schnyder wood of Figure~\ref{fig:noncrossing}, one can
see that there is an horizontal oriented non-contractible cycle in the
dual, so it does not satisfy the conclusion of
Lemma~\ref{lem:incomingedges}. Note that this property is not a
characterization of being HTC.  Figure~\ref{fig:gamma0glue} is a
Schnyder wood that is not HTC but satisfies the conclusion of
Lemma~\ref{lem:incomingedges} (we leave the reader check that this
Schnyder wood is not HTC, it will be easier after
Section~\ref{sec:bijection} and the definition of $\gamma$).

\begin{figure}[!h]
\center
\includegraphics[scale=0.4]{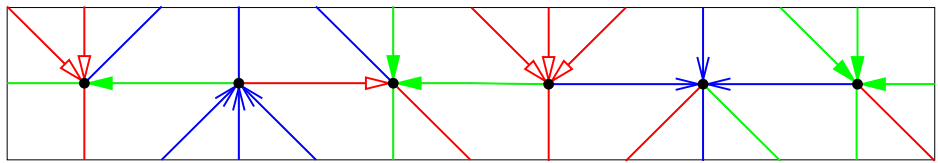}
\caption{A Schnyder wood that is not HTC but contains no oriented
  non-contractible cycle in the dual.}
\label{fig:gamma0glue}
\end{figure}

\section{\npss algorithm on oriented surfaces}
\label{sec:algo}

In this section we introduce a reformulation of \npss original
algorithm. This version is more general in order to be applicable to
any orientation of any map on an oriented surface. The execution
slightly differs from the original formulation, even on planar
triangulations. In~\cite{PS06}, the authors first delete some outer
edges of the triangulation before executing the algorithm. We do not
consider some edges to be special here since we want to apply the
algorithm on any surface but the core of the algorithm is the same. We
show general properties of the algorithm in this section before
considering toroidal triangulations in the forthcoming sections.

\

\noindent {\bf \sc Algorithm PS } 

{\sc Input} : An oriented map $G$, a root vertex $v_0$ and a root edge
$e_0$ incident to $v_0$.

{\sc Output} : An embedded graph $U$ with stems.
\begin{enumerate}
\item Let $v:=v_0$, $e:=e_0$, $U:=\emptyset$.
  \item 
Let $v'$ be the extremity of $e$ different from $v$.
\begin{itemize}
\item[\underline{Case 1} :] \emph{$e$ is non-marked and entering $v$.}
Add $e$ to $U$ and let $v:=v'$.
\item[\underline{Case 2} :] \emph{$e$ is non-marked and leaving $v$.}  Add a stem
  to $U$ incident to $v$ and corresponding to $e$.
\item[\underline{Case 3} :] \emph{$e$ is already marked and entering $v$.}
Do nothing.
\item[\underline{Case 4} :] \emph{$e$ is already marked and leaving $v$.}
Let $v:=v'$.
\end{itemize}
\item 
Mark $e$. 
\item Let  $e$ be the  next edge around $v$ in \ccw order after the current $e$.
\item While $(v,e)\neq (v_0,e_0)$ go back to 2.
\item Return $U$.
\end{enumerate}

We insist on the fact that the output of \aps is a graph embedded on
the same surface as the input map but that this embedded graph is not
necessarily a map (i.e some faces may not be homeomorphic to open
disks). In the following section we show that in our specific case the
output $U$ is an unicellular map.

Consider any oriented map $G$ on an oriented surface given with a root
vertex $v_0$ and a root edge $e_0$ incident to $v_0$. When \aps is
considering a couple $(v,e)$ we see this like it is considering the
angle at $v$ that is just before $e$ in clockwise order. The
particular choice of $v_0$ and $e_0$ is thus in fact a particular
choice of a root angle $a_0$ that automatically defines a root vertex
$v_0$, a root edge $e_0$, as well as a root face $f_0$. From now on we
 consider that the input of \aps is an oriented map plus a root
angle (without specifying the root vertex, face and edge).

The \emph{angle graph} of $G$, is the graph defined on the angles of
$G$ and where two angles are adjacent if and only if they are
consecutive around a vertex or around a face.  An execution of \aps
can be seen as a walk in the angle graph. Figure~\ref{fig:anglerule}
illustrates the behavior of the algorithm corresponding to Case~1
to~4. In each case, the algorithm is considering the angle in top left
position and depending on the marking of the edge and its orientation
the next angle that is considered is the one that is the end of the
magenta arc of the angle graph. The cyan edge of Case 1 represents the
edge that is added to $U$ by the algorithm. The stems of $U$ added in
Case~2 are not represented in cyan, in fact we will represent them
later by an edge in the dual. Indeed seeing the execution of \aps as a
walk in the angle graph enables us to show that \aps behaves exactly
the same in the primal or in the dual map (as explained later).

\begin{figure}[h!]
\center
\begin{tabular}[c]{ll}
\includegraphics[scale=0.5]{angle-rule-algo-1.pdf} \ \ \ \ &
\includegraphics[scale=0.5]{angle-rule-algo-2.pdf} \\
\hspace{1em} Case 1 & \hspace{1em} Case 2 \\
\ &\  \\
\includegraphics[scale=0.5]{angle-rule-algo-3.pdf} \ \ \ \ &
\includegraphics[scale=0.5]{angle-rule-algo-4.pdf} \\
\hspace{1em} Case 3 & \hspace{1em} Case 4
\end{tabular}
\caption{The four cases of \aps.}
\label{fig:anglerule}
\end{figure}

On Figure~\ref{fig:tore-example}, we give an example of an execution
of \aps on the orientation corresponding to the minimal HTC Schnyder
wood of $K_7$ of Figure~\ref{fig:tore-primal-min}.

\begin{figure}[h!]
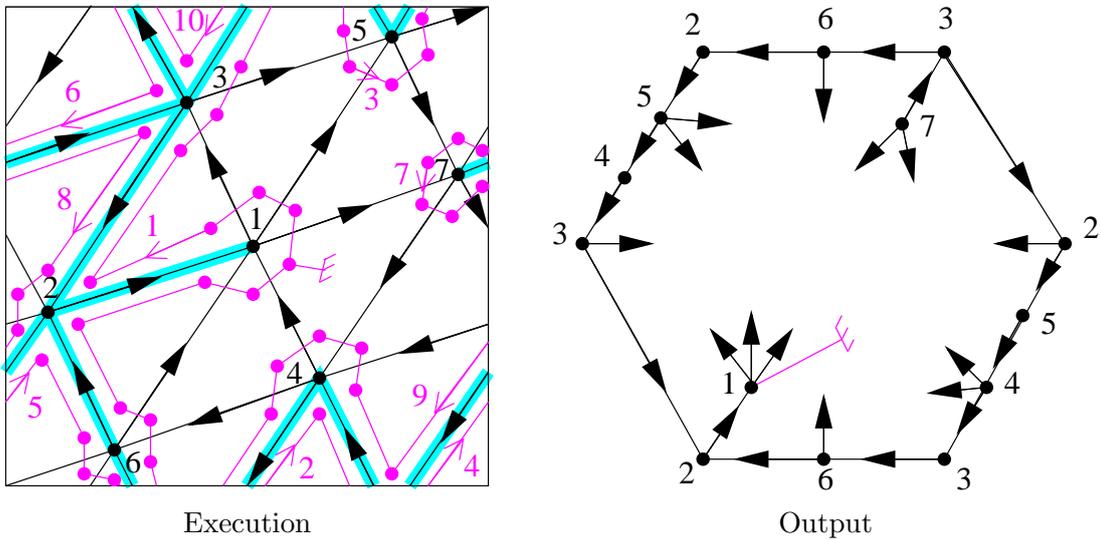

\center
\begin{tabular}{cc}
\includegraphics[scale=0.5]{tore-tri-exe-2--.pdf} \ & \
\includegraphics[scale=0.5]{tore-tri-exe-4---.pdf}\\
Execution \ &\ Output \\
\end{tabular}
\caption{An execution of \aps on $K_7$ given with the orientation
  corresponding to the minimal HTC Schnyder wood of
  Figure~\ref{fig:tore-primal-min}. Vertices are numbered in
  black. The root angle is identified by a root symbol and chosen in
  the face for which the orientation is minimal (i.e. the shaded face
  of Figure~\ref{fig:tore-primal-min}). The magenta arrows and numbers
  are here to help the reader to follow the cycle in the angle graph.
  The output $U$ is a toroidal unicellular map, represented here as an
  hexagon where the opposite sides are identified.}
\label{fig:tore-example}
\end{figure}

Let $a$ be a particular angle of the map $G$. It is adjacent to four other angles
in the \emph{angle graph} (see Figure~\ref{fig:angledirected}). Let
$v,f$ be such that $a$ is an angle of vertex $v$ and face $f$. The
\emph{next-vertex} (resp. \emph{previous-vertex}) angle of $a$ is the
angle appearing just after (resp. before) $a$ in \ccw order around
$v$. Similarly, the \emph{next-face} (resp. \emph{previous-face})
angle of $a$ is the angle appearing just after (resp. before) $a$ in
\cw order around $f$. These definitions enable one to orient
consistently the edges of the angle graph like in
Figure~\ref{fig:angledirected} so that for every oriented edge
$(a,a')$, $a'$ is a next-vertex or next-face angle of $a$.

\begin{figure}[h!]
\center
\includegraphics[scale=0.5]{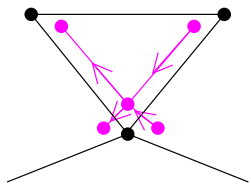} 
\caption{Orientation of the edges of the angle graph.}
\label{fig:angledirected}
\end{figure}

The different cases depicted in Figure~\ref{fig:anglerule} show that
an execution of \aps is just an oriented walk in the angle graph
(i.e. a walk that is following the orientation of the edges described in
Figure~\ref{fig:angledirected}).  The condition in the while loop
ensures that when the algorithm terminates, this walk is back to the
root angle. The following proposition shows that the algorithm actually
terminates:

\begin{proposition}
  \label{prop:terminates}
  Consider an oriented map $G$ on an oriented surface and a root angle
  $a_0$. The execution of \aps on $(G,a_0)$ terminates and corresponds
  to a cycle in the angle graph.
\end{proposition}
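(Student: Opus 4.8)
The plan is to show that the execution of {\sc Algorithm PS} defines an oriented walk in the angle graph whose first step from the root angle is uniquely determined, and then to argue that this walk can neither get stuck nor enter a loop that avoids the root, so that finiteness of the angle set forces it to return to $a_0$. The starting observation is that, by the case analysis summarized in Figure~\ref{fig:anglerule}, from any angle $a$ the algorithm moves to exactly one of the four neighbours of $a$ in the angle graph, and in every case the step follows the orientation of Figure~\ref{fig:angledirected}: in Cases~1 and~3 (edge entering $v$) it moves to a next-vertex/next-face angle, and in Cases~2 and~4 (edge leaving $v$) it moves along the other direction, again consistently oriented. Hence an execution is an oriented walk $a_0 \to a_1 \to a_2 \to \cdots$ in the angle graph, and the while loop stops precisely when some $a_k = a_0$.

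The key structural point is that this oriented walk is in fact \emph{reversible}: the map sending an angle to its successor in the execution is a well-defined permutation-like operation on the angle set, and one can reconstruct the predecessor of any angle from the local data (the marking and orientation of the edge involved, together with whether the previous step arrived via the vertex or via the face). First I would make this precise by checking, for each of the four cases, that knowing $a_{k+1}$ together with whether $a_{k+1}$ was reached as a next-vertex angle or as a next-face angle determines $a_k$ uniquely; equivalently, the successor function on (angle, incoming-direction) pairs is injective. From injectivity on a finite set it follows that the orbit of $a_0$ under iteration is eventually periodic, i.e.\ there exist $i < j$ with $a_i = a_j$ and the same incoming direction; injectivity then propagates backward to give $a_0 = a_{j-i}$, so the walk returns to the root angle and the algorithm terminates.

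The main obstacle I anticipate is bookkeeping the ``incoming direction'' correctly so that the successor map is genuinely injective — an angle has two vertex-neighbours and two face-neighbours in the angle graph, and the algorithm may in principle pass through a given angle more than once, entering from different sides, so one must track the pair (angle, side) rather than the angle alone. Concretely, the subtle case is Case~3 (edge already marked, entering $v$), where the algorithm ``does nothing'' to $U$ but still advances $e$ to the next edge around $v$; one must verify that this still corresponds to a well-defined oriented edge of the angle graph and that the marking status of edges, which only ever increases, makes the global process deterministic and its inverse well-defined. Once the successor map is shown to be injective on (angle, side) pairs, termination is immediate, and the returned walk is a closed oriented walk through $a_0$; to upgrade ``closed walk'' to ``cycle'' in the statement one notes that the first edge out of $a_0$ is forced, so the walk cannot revisit $a_0$ strictly before closing up, hence no proper sub-walk is closed and the walk is a simple cycle in the angle graph.
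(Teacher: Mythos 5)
Your overall instinct---that the traversal should be ``backward deterministic'' so that it cannot merge into a loop avoiding $a_0$---is exactly the idea behind the paper's proof, but your write-up does not actually establish the one fact that carries all the weight. The difficulty you flag as ``the main obstacle'' is the entire content of the proposition. First, there is no static successor function on angles (or on (angle, incoming-direction) pairs): the step taken at an angle depends on whether the current edge is already marked, which is history-dependent, so the dynamics on angles is not autonomous and the ``injective map on a finite set $\Rightarrow$ the orbit returns to its start'' argument does not apply as stated. (If you enlarge the state to include the marking, the map becomes well-defined but the markings are monotone, so the initial state can never recur and the argument collapses.) Second, the injectivity you propose to check---that $a_{k+1}$ together with whether it was reached as a next-vertex or next-face angle determines $a_k$---is trivially true from the structure of the angle graph (each angle has exactly one previous-vertex and one previous-face neighbour) and is therefore vacuous; what must be proved is that during the execution no angle is ever entered \emph{both} as a next-vertex angle and as a next-face angle, i.e.\ from its two distinct in-neighbours. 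This is where the orientation and the marking discipline must be used, and your proposal never does so.

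The paper's proof supplies precisely this missing step: taking $a\neq a_0$ to be the \emph{first} repeated angle, its two walk-predecessors $a_1,a_2$ must be the previous-vertex and previous-face angles of $a$; the move from the previous-vertex angle to $a$ can only be Case~2 (the shared edge is then unmarked, hence not Case~3), which forces that edge to be \emph{outgoing} at $a$'s vertex and marks it; but then the later arrival at the previous-face angle finds a marked edge in a configuration (Case~3) from which the algorithm cannot step to $a$ --- a contradiction. Simplicity of the walk then yields finiteness, termination, and the cycle property simultaneously. By contrast, your final step (``the walk cannot revisit $a_0$ before closing up, hence no proper sub-walk is closed, hence the walk is a simple cycle'') is a non sequitur: a closed walk through $a_0$ that avoids $a_0$ in its interior can still repeat other angles, so ruling out intermediate visits to $a_0$ (which is automatic from the stopping condition anyway) does not give simplicity. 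To repair the proposal you would need to insert the case analysis on the shared edge's orientation and marking described above.
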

\begin{proof}
  We consider the oriented walk $W$ in the angle graph corresponding
  to the execution of \aps. Note that $W$ may be infinite. The walk
  $W$ starts with $a_0$, and if it is finite it ends with $a_0$ and
  contains no other occurrence of $a_0$ (otherwise the algorithm
  should have stopped earlier). Toward a contradiction, suppose that
  $W$ is not simple (i.e. some angles different from the root angle
  $a_0$ are repeated). Let $a\neq a_0$ be the first angle along $W$
  that is met for the second time.  Let $a_1,a_2$ be the angles
  appearing before the first and second occurrence of $a$ in $W$,
  respectively. Note that $a_1\neq a_2$ by the choice of $a$.

  If $a_1$ is the previous-vertex angle of $a$, then $a_2$ is the
  previous-face angle of $a$. When the algorithm considers $a_1$, none
  of $a$ and $a_2$ are already visited, thus edge $e$ is not
  marked. Since the execution then goes to $a$ after $a_1$, we are in
  Case 2 and the edge $e$ between $a$ and $a_1$ is oriented from $v$,
  where $v$ is the vertex incident to $a$. Afterward, when the
  algorithm reaches $a_2$, Case 3 applies and the algorithm cannot go
  to $a$, a contradiction. The case where $a_1$ is the previous-face
  angle of $a$ is similar.

  So $W$ is simple. Since the angle graph is finite, $W$ is finite. So
  the algorithm terminates, thus $W$ ends on the root angle and $W$ is
  a cycle.
\end{proof}

In the next section we see that in some particular cases the
cycle in the angle graph corresponding to the execution of \pss
algorithm (Proposition~\ref{prop:terminates}) can be shown to be
Hamiltonian like on Figure~\ref{fig:tore-example}.

By Proposition~\ref{prop:terminates}, an angle is considered at most
once by \apss. This implies that the angles around an edge can be
visited in different ways depicted on Figure~\ref{fig:anglerulesum}.
Consider an execution of \aps on $G$.  Let $C$ be the cycle formed in
the angle graph by Proposition~\ref{prop:terminates}.  Let $P$ be the
set of edges of the output $U$ (without the stems) and $Q$ be the set
of dual edges of edges of $G$ corresponding to stems of $U$. These
edges are represented on Figure~\ref{fig:anglerulesum} in cyan for $P$
and in yellow for $Q$.  They are considered with their orientation
(recall that the dual edge $e^*$ of an edge $e$ goes from the face on
the left of $e$ to the face on the right of $e$).  Note that $C$ does
not cross an edge of $P$ or $Q$, and moreover $P$ and $Q$ do not
intersect (i.e. an edge can be in $P$ or its dual in $Q$ but both
cases cannot happen).

\begin{figure}[h!]
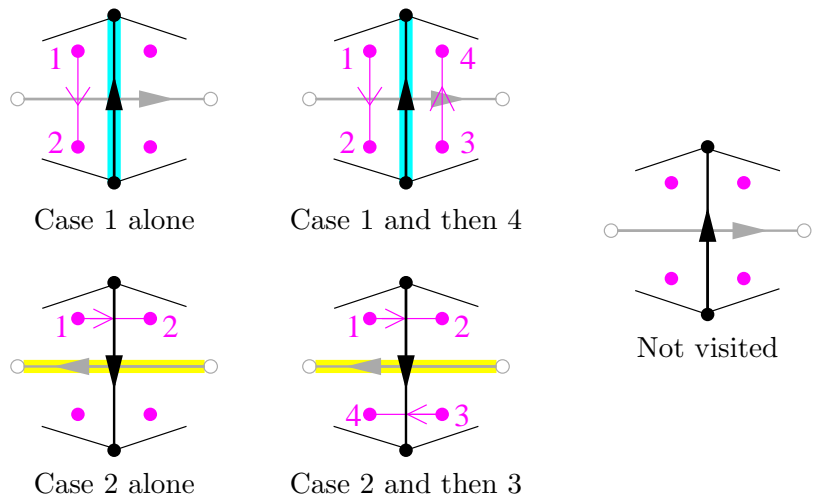

\center
\begin{tabular}[c]{cc}
  \includegraphics[scale=0.5]{angle-rule-algo-1-bis.pdf} \ \ & \ \
\includegraphics[scale=0.5]{angle-rule-algo-14bis.pdf} \\
 Case 1 alone \ \ & \ \ Case 1 and then 4 \\
\ \\
\includegraphics[scale=0.5]{angle-rule-algo-2-bis.pdf} \ \ & \ \
\includegraphics[scale=0.5]{angle-rule-algo-23bis.pdf} \\
 Case 2 alone \ \ & \ \ Case 2 and then 3 \\
\end{tabular} \ \ \ \
 \begin{tabular}[c]{cc}
   \includegraphics[scale=0.5]{angle-rule-algo-0.pdf} \\
 Not visited
 \end{tabular}
 \caption{The different cases of \aps seen in a dual way. The number
   of the angles gives the order in which the algorithm visits them
   (unvisited angles are not numbered). The edges of $P$ and $Q$
    are respectively cyan and yellow.}
\label{fig:anglerulesum}
\end{figure}

One can remark that the cases of Figure~\ref{fig:anglerulesum} are
dual of each other. One can see that \aps behaves exactly the same if
applied on the primal map or on the dual map. The only modifications
to make is to start the algorithm with the face $f_0$ as the root
vertex, the dual of edge $e_0$ as the root edge and to replace \ccw by
\cw at Line 4. Then the cycle $C$ formed in the angle graph is exactly
the same and the output is $Q$ with stems corresponding to $P$
(instead of $P$ with stems corresponding to $Q$). Note that this
duality is also illustrated by the fact that the minimality of the
orientation of $G$ w.r.t.~ the root face is nothing else than the
accessibility of the dual orientation toward the root face. Indeed, a
clockwise $0$-homologous oriented subgraph of $G$ w.r.t $f_0$
corresponds to a directed cut of the dual where all the edges are
oriented from the part containing $f_0$. The following lemma shows the
connectivity of $P$ and $Q$:

\begin{lemma}
\label{lem:PDconnected}
At each step of the algorithm, for every vertex $v$ appearing in an edge
of $P$ (resp. $Q$), there is an oriented path from $v$ to $v_0$ (resp.
$f_0$) consisting only of edges of $P$ (resp. $Q$). In particular $P$
and $Q$ are connected.
\end{lemma}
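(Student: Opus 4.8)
The plan is to prove a slightly stronger invariant by induction on the number of steps of \aps: \emph{at every moment during the execution, the current vertex $v$, as well as every endpoint of an edge currently in $P$, is joined to $v_0$ by a directed path using only edges currently in $P$}; and symmetrically the same holds for $Q$ with $v_0$ replaced by the root face $f_0$. The statement of the lemma is then the special case obtained at the final step, and connectivity of $P$ (resp. $Q$) follows at once, since every vertex incident to $P$ (resp. $Q$) reaches the fixed vertex $v_0$ (resp. $f_0$). The base case is immediate: initially $P=\emptyset$ and $v=v_0$, so the invariant holds vacuously.

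For the inductive step I would go through the four cases of \aps at a couple $(v,e)$, writing $v'$ for the other endpoint of $e$. In Cases~2 and~3 neither $P$ nor the current vertex changes, so there is nothing to check. In Case~1 the edge $e$ enters $v$, i.e. it is the directed edge $v'\to v$; it is added to $P$ and the new current vertex is $v'$. By the induction hypothesis $v$ already reaches $v_0$ inside the old $P$, which is contained in the new $P$; hence $v'$ reaches $v_0$ by prepending $e$, and the only new endpoints of $P$ are $v$ and $v'$, both now fine. The delicate case is Case~4: here $e$ is already marked and leaving $v$ (the directed edge $v\to v'$), and the current vertex moves to $v'$ without anything being added to $P$; one must show $e$ is already in $P$. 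By Proposition~\ref{prop:terminates} each angle is considered at most once, so $e$ has been considered exactly once before, necessarily at the couple $(v',e)$ (the only other angle incident to $e$). There $e$ was non-marked and, being oriented $v\to v'$, \emph{entering} $v'$, so Case~1 applied and $e$ was inserted into $P$ at that moment, with $v$ and $v'$ both becoming endpoints reaching $v_0$ in $P$. Since $P$ only grows, $v'$ still reaches $v_0$ in $P$, which preserves the invariant.

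The statement for $Q$ follows either by rerunning the identical argument on the dual map, using the observation made just before the lemma that \aps on the dual (with root vertex $f_0$, root edge $e_0^*$, and the clockwise/counterclockwise convention swapped at Line~4) performs the same walk in the angle graph and outputs $Q$ with stems corresponding to $P$; or, directly, by repeating the case analysis and noting that a dual edge enters $Q$ only in Case~2, where the stem of $e$ corresponds to $e^*$ oriented from the current face towards the face across $e$, playing exactly the role $e$ plays in Case~1 of the primal analysis. I expect the only genuine obstacle to be the bookkeeping in Case~4: one must be careful that the earlier visit to $e$ occurred from the $v'$-side and fell into Case~1 rather than Case~2, which is precisely where Proposition~\ref{prop:terminates} and the orientation of $e$ are used; once this is pinned down the remainder is a routine induction.
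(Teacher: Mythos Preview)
Your proof is correct and follows essentially the same inductive approach as the paper, only spelled out in greater detail. The paper's own proof is a two-line sketch: it observes that a new vertex appears in $P$ only via Case~1, where the newly added edge is oriented from that new vertex toward the previous current vertex, which by induction already reaches $v_0$; the dual statement for $Q$ is dismissed by the primal/dual symmetry of \apss. Your version strengthens the invariant to also track the current vertex, which forces you to analyse Case~4 explicitly; that analysis is correct (the earlier visit to $e$ must have been at $(v',e)$ and fallen into Case~1, as you argue using Proposition~\ref{prop:terminates} and the orientation of $e$), but it is not actually needed for the lemma as stated, since in Case~4 the set $P$ is unchanged and hence no new vertex of $P$ appears.
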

\begin{proof}
  If at a step a new vertex is reached then it correspond to Case 1
  and the corresponding edge is added in $P$ and oriented from the new
  vertex, so the property is satisfied by induction. As observed
  earlier the algorithm behaves similarly in the dual map.
\end{proof}

Let $\overline{C}$ be the set of angles of $G$ that are not in $C$.
Any edge of $G$ is bounded by exactly 4 angles. Since $C$ is a cycle,
the 4 angles around an edge are either all in $C$, all in
$\overline{C}$ or 2 in each set (see
Figure~\ref{fig:anglerulesum}). Moreover, if they are 2 in each set,
these sets are separated by an edge of $P$ or an edge of $Q$. Hence
the frontier between $C$ and $\overline{C}$ is a set of edges of $P$
and $Q$. Moreover this frontier is an union of oriented closed walks
of $P$ and of oriented closed walks of $Q$.  In the next section we
study this frontier in more details to show that $\overline{C}$ is
empty in the case considered there.

\section{From toroidal triangulations to unicellular maps}
\label{sec:open}

Let $G$ be a toroidal triangulation.  In order to choose appropriately
the root angle $a_0$, we have to consider separating triangles.  A
\emph{triangle} is a closed walk of size $3$ (it is not necessarily a
cycle since non-contractible loops are allowed and it is not
necessarily contractible).  A \emph{separating triangle} is a
contractible triangle that is different from a face of $G$. We say
that an angle is \emph{in the strict interior of a separating
  triangle} if it is in its contractible region and not incident to a
vertex of the triangle.  We choose as root angle $a_0$ any angle that
is not in the strict interior of a separating triangle.  One can
easily see that such an angle $a_0$ always exists. Indeed the
interiors of two contractible triangles are either disjoint or one is
included in the other. So, the angles that are incident to a
contractible triangle whose interior is maximal by inclusion satisfy
the property.

The goal of this section is to prove the following theorem (see
Figure~\ref{fig:tore-example} for an example):

\begin{theorem}
\label{th:uni}
Consider a toroidal triangulation $G$, a root angle $a_0$ that is not
in the strict interior of a separating triangle and the orientation of
the edges of $G$ corresponding to the minimal HTC Schnyder wood
w.r.t.~the root face $f_0$ containing $a_0$. Then the output $U$ of
\aps applied on $(G,a_0)$ is a toroidal unicellular map covering all
the edges of $G$.
\end{theorem}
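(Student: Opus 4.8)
The plan is to prove that the cycle $C$ produced by Proposition~\ref{prop:terminates} visits every angle of $G$, i.e.\ that the set $\overline C$ of unvisited angles is empty, and to deduce the two assertions of the theorem from this. Granting $\overline C=\emptyset$: by the local analysis of Figure~\ref{fig:anglerulesum} no edge ends up in the ``not visited'' configuration, so every edge of $G$ either belongs to $P$ or has its dual in $Q$, which is exactly the statement that $U$ covers all edges of $G$; and the single face of the embedded graph $U$ together with its stems is traced by the closed walk $C$ (this is read off the four cases of Figure~\ref{fig:anglerulesum}), so it is an open disk precisely because $C$ meets every angle. Since $U$ is connected and spans $V(G)$, it is then a toroidal unicellular map.

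So everything reduces to showing $\overline C=\emptyset$, which I would prove by contradiction. If $\overline C\neq\emptyset$ then, as observed right after Lemma~\ref{lem:PDconnected}, the frontier separating $C$ from $\overline C$ is a non-empty disjoint union of oriented closed walks, each entirely made of edges of $P$ (oriented as in the primal orientation $D$) or entirely made of edges of $Q$ (oriented as in the dual orientation $D^*$), and along each such wall the unvisited angles sit on a prescribed side. A $Q$-frontier wall that is non-contractible would contain a non-contractible oriented cycle of $D^*$, contradicting Lemma~\ref{lem:incomingedges}; hence every $Q$-wall is contractible, so $0$-homologous, and since the whole frontier is $0$-homologous (it bounds $\overline C$) the $P$-walls together form a $0$-homologous oriented subgraph of $D$ as well.

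This $0$-homologous subgraph of $D$ has the unvisited region on one fixed side, and $f_0$ lies outside that region (it carries the visited angle $a_0$), so it is \emph{clockwise} w.r.t.~$f_0$ in the sense of Lemma~\ref{lem:maximal}; by minimality of $D$ it must therefore be empty. Once the $P$-walls are gone, the frontier consists only of $Q$-walls, and a dual version of the same argument --- using that minimality of $D$ w.r.t.~$f_0$ is the accessibility of $D^*$ toward $f_0$ (the remark following Lemma~\ref{lem:PDconnected}) together with Lemma~\ref{lem:incomingedges}, and invoking here the hypothesis that $a_0$ is not in the strict interior of a separating triangle to dispose of the remaining contractible configuration --- forces these walls to vanish too. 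But an empty frontier means no edge of $G$ has two angles in $C$ and two in $\overline C$; by connectedness of the torus and of the face-adjacency all angles lie in the same part, and as $a_0\in C$ we conclude $\overline C=\emptyset$.

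I expect the main obstacle to be the homological bookkeeping of the third paragraph: checking carefully that the primal frontier walls really constitute a \emph{clockwise} $0$-homologous oriented subgraph w.r.t.~$f_0$ (the unvisited region is a union of corners, not of full faces, so one must control on which side of each frontier edge the unvisited corners sit and verify that $f_0$ contributes nothing), and then carrying out the dual counterpart, which is not simply the primal statement applied to $G^*$ since $D^*$ need not be a minimal orientation --- it is only known to be accessible toward $f_0$. Disposing of the residual contractible configuration by means of the separating-triangle hypothesis on $a_0$, which is the toroidal analogue of rooting on the outer face in the planar case, should be the most delicate point.
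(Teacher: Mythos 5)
Your reduction of the theorem to $\overline{C}=\emptyset$ is where the argument genuinely breaks: it is not true that $C$ visiting every angle forces the output to be unicellular. Hamiltonicity of $C$ only rules out \emph{contractible} cycles in $Q$ (a contractible dual cycle separates the angles into two nonempty parts that $C$ cannot both reach without crossing $Q$); a \emph{non-contractible} cycle in $Q$ separates nothing, so $C$ can be Hamiltonian while $Q$ fails to be a tree and the unique face of $P$ fails to be an open disk. This is exactly what example~(3) of Figure~\ref{fig:contre-exemple} exhibits. The paper closes this gap with Lemma~\ref{lem:first} (the \emph{first} cycle created in $Q$ during the execution is an \emph{oriented} cycle of $G^*$, proved via the inductive Claim~\ref{cl:leftangles}) combined with Lemma~\ref{lem:incomingedges}: an oriented cycle of $D^*$ must then be contractible, and a contractible cycle in $Q$ contradicts Hamiltonicity, so $Q$ is acyclic, hence a spanning tree of $G^*$, and $P$ is unicellular (Lemma~\ref{lem:unicellular}). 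Nothing in your proposal plays this role.

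Your handling of $\overline{C}\neq\emptyset$ also departs from the paper in a way that creates real trouble. The paper first proves (Lemma~\ref{lem:noD}) that the frontier contains \emph{no} $Q$-walls at all: non-contractible ones die by Lemma~\ref{lem:incomingedges}, and contractible ones die by two concrete Euler-characteristic counts --- one producing a plane submap with $3n'$ edges (impossible), the other producing a boundary of length $k=3$, i.e.\ a separating triangle whose strict interior contains $a_0$. Only after that is every face of $G$ entirely in $C$ or entirely in $\overline{C}$, so that the union $T$ of $P$-walls satisfies $\phi(T)=-\sum_{F\in\mc{F}_{\overline{C}}}\phi(F)$ and is visibly clockwise w.r.t.~$f_0$. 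In your order, contractible $Q$-walls may still be present when you invoke minimality; then $\overline{C}$ is not a union of faces, and since crossing a $Q$-wall flips the $C$/$\overline{C}$ type of a region without changing the coefficient $\lambda_F$ while crossing a $P$-wall changes it by $\pm 1$, one can a priori reach a $C$-type region with $\lambda_F>0$ --- so the union of $P$-walls need not be clockwise and Lemma~\ref{lem:maximal} does not apply. You flag both this and the contractible-$Q$-wall case as ``obstacles,'' but they are precisely the content of the proof rather than bookkeeping; as written the proposal does not establish the theorem.
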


Consider a toroidal triangulation $G$, a root angle $a_0$ that is not
in the strict interior of a separating triangle and the orientation of
the edges of $G$ corresponding to the minimal HTC Schnyder wood
w.r.t.~the root face $f_0$ containing $a_0$. Let $U$ be the output of
\aps applied on $(G,a_0)$.  We use the same notation as in previous
section: the cycle in the angle graph is $C$, the set of angles that
are not in $C$ is $\overline{C}$, the set of edges of $U$ is $P$, the
dual edges of stems of $U$ is $Q$.

\begin{lemma}
\label{lem:noD}
The frontier between $C$ and $\overline{C}$ contains no oriented
closed walk of $Q$.
\end{lemma}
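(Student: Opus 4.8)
The plan is to derive a contradiction with Lemma~\ref{lem:incomingedges} (applied to the minimal HTC Schnyder wood, which is itself HTC). Recall from the discussion after Figure~\ref{fig:anglerulesum} that the frontier between $C$ and $\overline{C}$ is a union of oriented closed walks lying in $P$ together with oriented closed walks lying in $Q$, where each edge of $Q$ is a dual edge $e^*$ oriented from the face on the left of $e$ to the face on the right. So an oriented closed walk of $Q$ is an oriented closed walk in the dual map $G^*$, sitting on the frontier. First I would argue that such a walk, if it existed, would have to be non-contractible in the torus. Indeed, the frontier separates the angles of $C$ from those of $\overline{C}$; an oriented closed walk of $Q$ on the frontier encloses on (say) its right side only angles of one type and on its left only angles of the other type. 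If this walk were contractible, it would bound a disk, and I would use the fact that the root angle $a_0$ lies in $C$ (and, crucially, is not in the strict interior of a separating triangle) to force a sidedness contradiction: the disk side contains some angles but the algorithm's walk $C$ must have entered or left that disk, which it cannot do without crossing the frontier — contradicting that $C$ does not cross edges of $P$ or $Q$.

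Then, having established that the closed walk of $Q$ is non-contractible, I would extract from it an oriented \emph{cycle} in $D^*$ that is non-contractible (a non-contractible closed walk in an oriented graph contains a non-contractible oriented cycle, since any contractible portion can be excised without changing homology). This directly contradicts Lemma~\ref{lem:incomingedges}, which says that for the minimal HTC Schnyder wood the dual orientation $D^*$ contains no oriented non-contractible cycle. Hence no oriented closed walk of $Q$ can appear on the frontier.

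The main obstacle I expect is the contractibility step: ruling out a contractible oriented closed walk of $Q$ on the frontier. The natural tool is the role of $f_0$: by Lemma~\ref{lem:PDconnected}, for every vertex (face) appearing in an edge of $Q$ there is an oriented $Q$-path to $f_0$; so if a contractible oriented closed walk $W^*$ of $Q$ bounded a disk $\Delta$, then either $f_0 \in \Delta$ or $f_0 \notin \Delta$, and following $Q$-paths toward $f_0$ would have to cross $W^*$, which is impossible for an oriented cycle in $Q$ that already lies on the frontier (all of $Q$ lies on one ``side'' consistently). More carefully, I would combine this with the minimality of the orientation — equivalently (as noted in the excerpt) the accessibility of $D^*$ toward $f_0$ — so that $f_0$ is reachable in $D^*$ from every face, and an oriented contractible cycle in $D^*$ would trap a nonempty region with no oriented exit, contradicting accessibility. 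Either way the contractible case collapses, and the non-contractible case is handled by Lemma~\ref{lem:incomingedges}, completing the proof.
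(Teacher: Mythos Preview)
Your handling of the non-contractible case is correct and matches the paper: a non-contractible oriented closed walk of $Q\subseteq D^*$ contains a non-contractible oriented cycle of $D^*$, contradicting Lemma~\ref{lem:incomingedges}.

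The gap is in the contractible case, and neither of your two suggested fixes closes it.

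Your first argument (``$C$ must have entered or left that disk'') is empty precisely when $C$ lies entirely inside the disk $\Delta$ bounded by the contractible cycle $W'$ --- and this subcase genuinely occurs. There the walk $C$ never needs to cross the frontier, so no sidedness contradiction arises. This is exactly where the separating-triangle hypothesis on $a_0$ earns its keep: the paper performs an Euler-formula count on the planar piece cut off by $W'$ and shows that its primal boundary walk has length exactly $3$, i.e.\ it is a separating triangle with $a_0$ in its strict interior, contradicting the hypothesis. You mention the hypothesis but never actually deploy it through such a computation.

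Your second argument (minimality $\Leftrightarrow$ accessibility of $D^*$ toward $f_0$) conflates an oriented cycle in $D^*$ with a directed cut. The dual vertices $f_i$ lying on $W'$ are triangular faces of $G$; each has a third incident dual edge, not on $W'$, whose orientation is unconstrained by the frontier data. So $W'$ need not ``trap a region with no oriented exit,'' and accessibility toward $f_0$ is not automatically violated. Indeed the paper's proof of this lemma does not invoke minimality at all --- minimality is reserved for Lemma~\ref{lem:ham}. For the remaining subcase ($\overline{C}$ inside $\Delta$, $f_0$ outside) the paper again uses an Euler count: every primal vertex strictly inside $\Delta$ keeps all three outgoing edges inside, giving $3n'$ edges on $n'$ vertices in a planar region, which is impossible.
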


\begin{proof}
  Suppose by contradiction that there exists such a walk $W$. Then
  along this walk, all the dual edges of $W$ are edges of $G$ oriented
  from the region containing $C$ toward $\overline{C}$ as one can see
  in Figure~\ref{fig:anglerulesum}.  If $W$ is non-contractible, then
  $W$ contains an oriented non-contractible cycle, a contradiction to
  Lemma~\ref{lem:incomingedges}. So $W$ is contractible.  So it
  contains an oriented contractible cycle $W'$, and then either $C$ is
  in the contractible region delimited by $W'$, or not. The two case
  are considered below:

  Suppose first that $C$ lies in the non-contractible region of $W'$.
  Then consider the plane map $G'$ obtained from $G$ by keeping only
  the vertices and edges that lie (strictly) in the contractible
  region delimited by $W'$. Let $n'$ be the number of vertices of
  $G'$. All the edges incident to $G'$ that are not in $G'$ are
  entering $G'$. So in $G'$ all the vertices have outdegree $3$ as we
  are considering $3$-orientations of $G$. Thus the number of edges of
  $G'$ is exactly $3n'$, contradicting the fact that the maximal
  number of edges of planar map on $n$ vertices is $3n-6$ by Euler's
  formula.

  Suppose now that $C$ lies in the contractible region of $W'$.  All
  the dual edges of $W'$ are edges of $G$ oriented from its
  contractible region toward its exterior. Consider the graph
  $G_{out}$ obtained from $G$ by removing all the edges that are cut
  by $W'$ and all the vertices and edges that lie in the contractible
  region of $W'$. As $G$ is a map, the face of $G_{out}$ containing
  $W'$ is homeomorphic to an open disk. Let $F$ be its facial walk (in
  $G_{out}$) and let $k$ be the length of $F$.  We consider the map
  obtained from the facial walk $F$ by putting back the vertices and
  edges that lied inside. We transform this map into a plane map $G'$
  by duplicating the vertices and edges appearing several times in
  $F$, in order to obtain a triangulation of a cycle of length
  $k$. Let $n',m',f'$ be the number of vertices, edges and faces of
  $G'$. Every inner vertex of $G'$ has outdegree $3$, there are no
  other inner edges, so the total number of edges of $G'$ is
  $m'=3(n'-k)+k$. All the inner faces have size $3$ and the outer face
  has size $k$, so $2m'=3(f'-1)+k$.  By Euler's formula
  $n'-m'+f'=2$. Combining the three equalities gives $k=3$ and $F$ is
  hence a separating triangle of $G$. This contradicts the choice of
  the root angle, as it should not lie in the strict interior of a
  separating triangle.
\end{proof}

A subgraph of a graph is \emph{spanning} if it is covering all the
vertices. An \emph{Hamiltonian cycle} is a spanning cycle.

\begin{lemma}
\label{lem:ham}
The cycle $C$ is an Hamiltonian cycle of the angle graph, all the
edges of $G$ are marked exactly twice, the subgraph $Q$ of $G^*$ is
spanning, and, if $n\geq 2$, the subgraph $P$ of $G$ is spanning.
\end{lemma}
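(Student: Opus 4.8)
The plan is to show that the frontier between $C$ and $\overline{C}$ is empty, which immediately gives everything. We already know from the discussion preceding Section~\ref{sec:open} that this frontier is a union of oriented closed walks of $P$ together with oriented closed walks of $Q$, and from Lemma~\ref{lem:noD} that there are no closed walks of $Q$ in it. So it suffices to rule out oriented closed walks of $P$ in the frontier. First I would argue, by the primal–dual symmetry of \aps already established (the algorithm behaves identically on $G$ and on $G^*$, with $P$ and $Q$ swapped and \ccw replaced by \cw), that ruling out closed walks of $P$ is essentially the dual of ruling out closed walks of $Q$ — but one has to be careful, because the hypothesis used in Lemma~\ref{lem:noD} was minimality of the Schnyder wood w.r.t.~$f_0$ (equivalently accessibility in the dual), whereas for $P$ we need the \emph{accessibility} of $D$ toward $v_0$ in the primal, plus the absence of separating triangles, plus Lemma~\ref{lem:incomingedges} applied to $D$ itself (not $D^*$).

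So the key step is: suppose there is an oriented closed walk $W$ of $P$ on the frontier, with $C$ on one side. Along $W$ all primal edges are oriented consistently (from the $C$-side toward $\overline{C}$, as read off from Figure~\ref{fig:anglerulesum}). If $W$ is non-contractible it contains an oriented non-contractible cycle of $D$; but by Lemma~\ref{lem:incomingedges} the \emph{dual} $D^*$ has no oriented non-contractible cycle — I would need the companion fact that $D$ itself (equivalently, since $D$ is a $3$-orientation coming from a Schnyder wood, by the crossing/monochromatic-cycle argument) also has no oriented non-contractible cycle whose existence would contradict accessibility toward $v_0$, or more directly: an oriented non-contractible closed walk of $P$ in the frontier would separate $v_0$ from some vertex of $G$ in a way incompatible with Lemma~\ref{lem:PDconnected} (every vertex of $P$ has an oriented $P$-path to $v_0$). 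If $W$ is contractible, it contains an oriented contractible cycle $W'$ and $C$ lies either inside or outside the contractible region of $W'$; these are exactly the two cases handled in the proof of Lemma~\ref{lem:noD}, and the same Euler's-formula counting arguments apply verbatim — either we get a plane submap with all vertices of outdegree $3$ (impossible since $3n' > 3n'-6$), or we get that $W'$ bounds a separating triangle containing $a_0$ in its strict interior, contradicting the choice of $a_0$. In the contractible case I would actually want to invoke the dual form of Lemma~\ref{lem:noD}'s proof, since $P$ lives in $G$ not $G^*$; the role of ``dual edges of $W$ oriented away'' is now played directly by edges of $W'$ themselves.

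Once the frontier is empty, $C$ and $\overline{C}$ partition the angles and one of them is empty; since $C$ contains $a_0$, we get $\overline{C}=\emptyset$, i.e.~$C$ visits every angle, so $C$ is Hamiltonian in the angle graph. Walking through the four cases of Figure~\ref{fig:anglerulesum}: in a Hamiltonian walk every angle is visited exactly once, so around each edge all four angles are visited, which forces each edge to be handled by ``Case 1 then 4'' or ``Case 2 then 3'' — in particular every edge is marked exactly twice. Every edge is either in $P$ (Case 1 branch) or has its dual in $Q$ (Case 2 branch), and the two are disjoint. Now $Q$ spanning: take any face $f$ of $G^*$, i.e.~vertex of $G$; by Lemma~\ref{lem:PDconnected} applied in the dual, once an edge of $Q$ touches a vertex of $G^*$ there is a $Q$-path from it to $f_0$, so it suffices to see that $Q$ is nonempty and reaches every vertex of $G^*$ — but since $P$ and $Q$ together carry all edges and $G^*$ is connected, if some vertex $f$ of $G^*$ met no edge of $Q$ then all $\deg(f)$ dual edges at $f$ would be in $P$, i.e.~all primal edges of the face $f$ would be outgoing at the face in the sense forbidden once $C$ is Hamiltonian; I will make this precise by the same frontier/closed-walk argument (a face all of whose boundary edges are in $P$ bounds a $P$-closed-walk on the $C$/$\overline{C}$ frontier, contradiction). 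The argument for $P$ spanning when $n\geq 2$ is the exact primal mirror: a vertex $v$ of $G$ meeting no edge of $P$ has all incident edges' duals in $Q$, giving a $Q$-closed-walk separating $v$, contradiction; and $n\geq 2$ is needed only to guarantee $P$ is nonempty at all.

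The main obstacle I expect is the contractible case of the no-$P$-closed-walk claim: getting the orientations and the inside/outside bookkeeping right so that the two Euler-formula computations from Lemma~\ref{lem:noD}'s proof transfer cleanly to $P$ in the primal (where the ``cut'' edges are the walk edges themselves rather than their duals), and in particular making sure the ``$C$ inside $W'$'' subcase genuinely reproduces the $k=3$ separating-triangle conclusion and thus contradicts the choice of $a_0$. The non-contractible case and the spanning statements are then routine given Lemmas~\ref{lem:incomingedges},~\ref{lem:PDconnected} and~\ref{lem:noD}.
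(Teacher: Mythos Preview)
Your proposal has a genuine gap in the non-contractible case, and it stems from trying to dualize Lemma~\ref{lem:noD} rather than using the minimality hypothesis directly.

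You write that if a frontier walk $W$ of $P$ is non-contractible you ``would need the companion fact that $D$ itself \ldots\ has no oriented non-contractible cycle.'' But this is false: every toroidal Schnyder wood has monochromatic cycles, and these are oriented non-contractible cycles of $D$. Your fallback (``an oriented non-contractible closed walk of $P$ in the frontier would separate $v_0$ from some vertex'') also fails, because a single non-contractible cycle on the torus does not separate the surface; and even if the frontier as a whole separates, nothing prevents $v_0$ from lying on $W$ itself or on the $C$-side, so Lemma~\ref{lem:PDconnected} gives no contradiction. In short, the non-contractible subcase of your case split cannot be closed with the tools you invoke.

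The paper avoids this case analysis entirely. The key observation you are missing is that once Lemma~\ref{lem:noD} tells you the frontier contains no $Q$-walks, every \emph{face} of $G$ lies entirely in $C$ or entirely in $\overline{C}$. Hence the whole frontier $T$ is the boundary of the set of faces carrying $\overline{C}$, so $\phi(T)=-\sum_{F\in\mathcal F_{\overline C}}\phi(F)$ with $f_0\notin\mathcal F_{\overline C}$; that is, $T$ is a clockwise non-empty $0$-homologous oriented subgraph w.r.t.~$f_0$. This directly contradicts Lemma~\ref{lem:maximal} (minimality). There is no need to look at individual walks or to split on contractibility; minimality does all the work in one stroke. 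This is also the correct duality: Lemma~\ref{lem:noD} used Lemma~\ref{lem:incomingedges} (no non-contractible oriented cycle in $D^*$, i.e.\ a form of accessibility in the dual) together with the root-angle condition, whereas ruling out $P$-walks on the frontier uses minimality w.r.t.~$f_0$---these are the two hypotheses of Theorem~\ref{th:uni}, and each handles one side of the frontier.

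Your spanning arguments at the end are also more complicated than needed. Once $C$ is Hamiltonian, if some vertex $v\ne v_0$ is not in $P$, then by Figure~\ref{fig:anglerulesum} there is no way for the walk in the angle graph to reach any angle at $v$ (passing from the angles of one vertex to those of another requires an edge of $P$), contradicting Hamiltonicity. The dual argument gives $Q$ spanning.
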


\begin{proof}
  Suppose for a contradiction that $\overline{C}$ is non empty. By
  Lemma~\ref{lem:noD} and Section~\ref{sec:algo}, the frontier $T$
  between $C$ and $\overline{C}$ is an union of oriented closed walks
  of $P$.  Hence a face of $G$ has either all its angles in $C$ or all
  its angles in $\overline{C}$. Moreover $T$ is a non-empty union of
  oriented closed walk of $P$ that are oriented \cw according to the
  set of faces containing $\overline{C}$ (see the first case of
  Figure~\ref{fig:anglerulesum}). This set does not contain $f_0$
  since $a_0$ is in $f_0$ and $C$.  As in
  Section~\ref{sec:schnyderwoods}, let $\mc{F}$ be the set of
  counterclockwise facial walks of $G$ and $F_0$ be the
  counterclockwise facial walk of $f_0$. Let
  $\mc{F}'=\mc{F}\setminus F_0$, and
  $\mc{F}_{\overline{C}}\subseteq\mc{F}'$ be the set of
  counterclockwise facial walks of the faces containing
  $\overline{C}$. We have
  $\phi(T)=-\sum_{F\in\mc{F}_{\overline{C}}}\phi(F)$.  So $T$ is a
  clockwise non-empty $0$-homologous oriented subgraph
  w.r.t.~$f_0$. This contradicts Lemma~\ref{lem:maximal} and the
  minimality of the orientation w.r.t.~$f_0$. So $\overline{C}$ is
  empty, thus $C$ is Hamiltonian and all the edges of $G$ are marked twice.

  Suppose for a contradiction that $n\geq 2$ and $P$ is not
  spanning. Since the algorithm starts at $v_0$, $P$ is not covering a
  vertex $v$ of $G$ different from $v_0$. Then the angles around $v$
  cannot be visited since by Figure~\ref{fig:anglerulesum} the only
  way to move from an angle of one vertex to an angle of another
  vertex is through an edge of $P$ incident to them. So $P$ is
  spanning.  The proof is similar for $Q$ (note that in this case we
  have $f\geq 2$).
\end{proof}

\begin{lemma}
\label{lem:first}
The first cycle created in $P$ (resp. in $Q$) by the algorithm is
oriented.
\end{lemma}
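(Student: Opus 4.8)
The plan is to show that, up to the moment the first cycle of $P$ appears, \aps is nothing but a depth-first search whose search tree is $P$, and then to conclude with the classical observation that a depth-first search creates no ``cross edges''. By the primal/dual symmetry of \aps recalled in Section~\ref{sec:algo} it suffices to treat $P$: for $Q$ one repeats the argument in the dual map, with the root face $f_0$ playing the role of the root vertex, using that $Q$ too is connected and that every vertex it covers has an oriented $Q$-path to $f_0$ by Lemma~\ref{lem:PDconnected}.

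First I would describe the state of $P$ just before its first cycle is created: by Lemma~\ref{lem:PDconnected} $P$ is connected and every vertex it covers has an oriented $P$-path to $v_0$, and since $P$ is still acyclic this makes it an \emph{in-arborescence} rooted at $v_0$, i.e.\ every covered vertex other than $v_0$ has exactly one outgoing $P$-edge and following outgoing $P$-edges always leads to $v_0$. Then comes the key structural step: going through the four cases of \aps one checks that each edge is considered exactly twice, once from each of its endpoints; that an edge enters $P$ (Case~1) precisely on its first consideration from its head and becomes a stem (Case~2) precisely on its first consideration from its tail; and hence that Case~4 always walks along an edge already in $P$, namely from a vertex to its $P$-parent. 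Combined with the fact that, between arriving at a vertex $v$ along its parent edge and leaving $v$ along that same edge, \aps scans all the other edges incident to $v$ (in \ccw order), this shows that as long as $P$ has no cycle \aps behaves exactly like a DFS of $P$ --- Case~1 descends to a fresh child, Case~4 backtracks to the parent, Cases~2 and~3 merely dispose of the remaining edges --- and that at every stage the covered vertices still having an unscanned incident edge are precisely the vertices lying on the oriented $P$-path from the current vertex to $v_0$. I expect this bookkeeping, and especially the point that Case~4 always follows a $P$-edge back to the parent, to be the only delicate part; nothing here is deep, but it must be checked with care.

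Finally I would conclude as follows. The first cycle of $P$ is created when \aps executes Case~1 at some current vertex $v$ along an edge $e=v'\to v$ that is unmarked and whose other endpoint $v'$ is already covered by $P$, so that adding $e$ closes a cycle; we may assume $e$ is not a loop, the loop case giving a trivially oriented cycle of length one. As $e$ is unmarked, it has in particular never been scanned from $v'$, so $v'$ still has an unscanned incident edge; by the previous paragraph $v'$ therefore lies on the oriented $P$-path $\pi$ from $v$ to $v_0$, i.e.\ $v'$ is an ancestor of $v$ in the in-arborescence $P$. Consequently the cycle created by adding $e$ is $e$ together with the portion of $\pi$ running from $v$ to $v'$, which is oriented from $v$ to $v'$; since $e=v'\to v$, this cycle is oriented.
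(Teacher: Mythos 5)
Your proof is correct and takes essentially the same route as the paper's: your DFS invariant --- that the covered vertices still having an unscanned incident edge are exactly those on the oriented $P$-path from the current vertex to $v_0$ --- is precisely the paper's Claim~1 (proved there by induction on the steps of \apss), and the conclusion that the cycle-closing edge must start at an ancestor on that path (since it is still unmarked at its tail) and hence closes an oriented cycle is identical. One small imprecision: ``each edge is considered exactly twice'' does not follow from the four cases alone (that is Lemma~\ref{lem:ham}), but your argument only needs that each edge is considered at most once from each endpoint, which is Proposition~\ref{prop:terminates}.
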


\begin{proof}
  Let $e$ be the first edge creating a cycle in $P$ while executing
  \aps and consider the steps of \aps before $e$ is added to $P$. So
  $P$ is a tree during all these steps.  For every vertex of $P$ we
  define $P(v)$ the unique path from $v$ to $v_0$ in $P$ (while $P$ is
  empty at the beginning of the execution, we define $P(v_0)=\{v_0\}$).
  By Lemma~\ref{lem:PDconnected}, this path $P(v)$ is an oriented
  path.  We prove the following

  \begin{claim}
\label{cl:leftangles}
Consider a step of the algorithm before $e$ is added to $P$ and where
the algorithm is considering a vertex $v$. Then all the angles around
the vertices of $P$ different from the vertices of $P(v)$ are already
visited.
  \end{claim}

  \begin{proof}
    Suppose by contradiction that there is such a step of the
    algorithm where some angles around the vertices of $P$ different
    from the vertices of $P(v)$ have not been visited. Consider the
    first such step. Then clearly we are not at the beginning of the
    algorithm since $P=P(v)=\{v_0\}$. So at the step just before, the
    conclusion holds and now it does not hold anymore. Clearly at the
    step before we were considering a vertex $v'$ distinct from $v$,
    otherwise $P(v)$ and $P$ have not changed and we have the
    conclusion. So from $v'$ to $v$ we are either in Case~1 or Case~4
    of \apss. If $v$ has been considered by Case~1, then $P(v)$
    contains $P(v')$ and the conclusion holds. If $v$ has been
    considered by Case~4, then since $P$ is a tree, all the angles
    around $v'$ have been considered and $v'$ is the only element of
    $P\setminus P(v)$ that is not in $P\setminus P(v')$. Thus the
    conclusion also holds.
  \end{proof}

  Consider the iteration of \aps where $e$ is added to $P$. The edge
  $e$ is added to $P$ by Case~1, so $e$ is oriented from a vertex $u$
  to a vertex $v$ such that $v$ is already in $P$ or $v$ is the root
  vertex $v_0$. Consider the step of the algorithm just before $u$ is
  added to $P$.  By Claim~\ref{cl:leftangles}, vertex $u$ is not in
  $P\setminus P(v)$ (otherwise $e$ would have been considered before
  and it would be a stem). So $u\in P(v)$ and $P(v)\cup\{e\}$ induces
  an oriented cycle of $G$. The proof is similar for $Q$.
\end{proof}

\begin{lemma}
\label{lem:unicellular}
  $P$ is a spanning unicellular map of $G$ and $Q$ is a spanning tree
  of $G^*$. Moreover one is the dual of the complement of the other.
\end{lemma}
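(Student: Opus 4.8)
The plan is to combine the topological/counting information already gathered (Lemma~\ref{lem:ham}: $C$ is Hamiltonian, every edge of $G$ is marked exactly twice, $Q$ spans $G^*$, and $P$ spans $G$ when $n\geq 2$), the connectivity information (Lemma~\ref{lem:PDconnected}: $P$ and $Q$ are connected), and the ``no clockwise $0$-homologous subgraph'' / ``no oriented non-contractible dual cycle'' consequences (Lemmas~\ref{lem:maximal} and~\ref{lem:incomingedges}, together with the frontier analysis of Section~\ref{sec:algo} and Lemma~\ref{lem:noD}) to pin down $P$ and $Q$ exactly. First I would recall from Section~\ref{sec:algo} that an edge $e$ of $G$ is either in $P$ or its dual $e^*$ is in $Q$, but never both, and that these are the only two ``types'' of edges touched by the cycle $C$ in the way shown in Figure~\ref{fig:anglerulesum}. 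Combined with the fact that all $m$ edges of $G$ are marked exactly twice, I would argue that $P$ and $Q$ partition $E(G)$: that is, $Q$ consists exactly of the duals of the edges of $G$ \emph{not} in $P$. This already gives ``one is the dual of the complement of the other''; what remains is to show $Q$ is a \emph{tree} of $G^*$ and $P$ is a \emph{unicellular} map of $G$ (i.e.\ spanning with exactly one face).

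Next I would handle $Q$. By Lemma~\ref{lem:ham} it is spanning in $G^*$, and by Lemma~\ref{lem:PDconnected} it is connected; so it suffices to show $Q$ is acyclic. Suppose $Q$ contains a cycle. If it contained an \emph{oriented} cycle of $G^*$, that cycle would be contractible (by Lemma~\ref{lem:incomingedges} there is no oriented non-contractible dual cycle), and a smallest oriented contractible dual cycle leads, via exactly the planar-counting / separating-triangle argument used in the proof of Lemma~\ref{lem:noD}, to a contradiction with the choice of $a_0$ (or with Euler's formula). For a \emph{non-oriented} cycle in $Q$: here I would invoke Lemma~\ref{lem:first}, which says the first cycle created in $Q$ by the algorithm is oriented — so if $Q$ ever contained a cycle it would in particular contain an oriented one, which we have just excluded. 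Hence $Q$ is a spanning connected acyclic subgraph of $G^*$, i.e.\ a spanning tree; it therefore has $f-1$ edges. Consequently $|P| = m - (f-1)$, and since $P$ spans the $n$ vertices of $G$, the number of faces of the embedded graph $P$ (on the torus) is, by Euler's formula $n - |P| + (\#\text{faces}) = 0$, equal to $|P| - n = m - f + 1 - n = 1$ (using $n - m + f = 0$ on the torus). So $P$ has exactly one face — provided that face is actually a disk, i.e.\ that $P$ is a map; but this is exactly the content of Lemma~\ref{lem:ham}/the frontier analysis showing $\overline{C}$ is empty, so the single face of $P$ is traced out by the Hamiltonian walk $C$ and is a disk. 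Thus $P$ is a spanning unicellular map of $G$.

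The main obstacle I expect is making the ``$P$ and $Q$ partition the edges, and the dual of the complement of $P$ is $Q$'' step fully rigorous: one must be careful that an edge of $G$ all of whose four angles are in $C$ is always in exactly one of $P$ or $Q$ (never ``missed''), and that the two-angles-in-each-set case cannot occur once $\overline{C}=\emptyset$. This is really a careful reading of the four/five subcases of Figure~\ref{fig:anglerulesum} together with the ``each edge marked exactly twice'' count, but it is the place where the argument could silently go wrong. A secondary subtlety is ensuring the single face of $P$ is genuinely a disk; this does not follow from Euler's formula alone, and one should explicitly point back to the emptiness of $\overline{C}$ (Lemma~\ref{lem:ham}), which forces the unique face boundary of $P$ to be the closed walk traced by $C$, visiting all angles and hence bounding a disk. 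Once these two points are nailed down, the rest is the bookkeeping sketched above.
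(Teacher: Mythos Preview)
Your overall structure matches the paper's: use Lemma~\ref{lem:first} and Lemma~\ref{lem:incomingedges} to show that any cycle in $Q$ would be an \emph{oriented contractible} cycle of $D^*$, derive a contradiction, conclude $Q$ is a spanning tree, and then deduce that $P$ is unicellular by duality. The partition $E(G)=P\,\dot\cup\,Q^*$ that you flag as the ``main obstacle'' is in fact immediate once every edge is marked twice (only the ``Case~1 then~4'' and ``Case~2 then~3'' rows of Figure~\ref{fig:anglerulesum} are possible), and the paper dispatches it in one line.

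The real gap is in your handling of the contractible oriented dual cycle. You propose to rerun ``exactly the planar-counting / separating-triangle argument used in the proof of Lemma~\ref{lem:noD}'', but that argument's two cases are distinguished by which side of the walk $C$ lies on. At this point of the proof $\overline{C}=\emptyset$, so $C$ is no longer confined to one side, and in the ``separating triangle'' branch you have no reason to know that $a_0$ lies in the strict interior. The paper avoids this entirely: once the cycle is contractible, it simply notes that $C$ is Hamiltonian (Lemma~\ref{lem:ham}) and never crosses $Q$, so $C$ would be trapped on one side of the disk and miss the angles on the other --- an immediate contradiction. If you want to salvage your route you must insert exactly this observation, at which point the Lemma~\ref{lem:noD} recount becomes superfluous.

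Finally, your Euler-formula bookkeeping for $P$ (and the separate argument that its unique face is a disk via $\overline{C}=\emptyset$) is correct but unnecessary: once $Q$ is a spanning tree of $G^*$ and $P$ is the primal complement, the standard duality fact ``complement of a dual spanning tree is a spanning unicellular map'' gives both the face count and the disk property at once, which is how the paper concludes.
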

\begin{proof}
  Suppose that $Q$ contains a cycle, then by Lemma~\ref{lem:first} it
  contains an oriented cycle of $G^*$. This cycle is contractible by
  Lemma~\ref{lem:incomingedges}. Recall that by Lemma~\ref{lem:ham},
  $C$ is an Hamiltonian cycle, moreover it does not cross $Q$, a
  contradiction.  So $Q$ contains no cycle and is a tree.

  By Lemma~\ref{lem:ham}, all the edges of $G$ are marked at the
  end. So every edge of $G$ is either in $P$ or its dual in $Q$ (and
  not both). Thus $P$ and $Q$ are the dual of the complement of each
  other. So $P$ is the dual of the complement of a spanning tree of
  $G^*$. Thus $P$ is a spanning unicellular map of $G$.
\end{proof}

Theorem~\ref{th:uni} is then a direct reformulation of
Lemma~\ref{lem:unicellular} by the definition of $P$ and $Q$:

A toroidal unicellular map on $n$ vertices has exactly $n+1$ edges:
$n-1$ edges of a tree plus $2$ edges corresponding to the size of a
basis of the homology (i.e. plus $2g$ in general for an oriented
surface of genus $g$).  Thus a consequence of Theorem~\ref{th:uni} is
that the obtained unicellular map $U$ has exactly $n$ vertices, $n+1$
edges and $2n-1$ stems since the total number of edges is $3n$.  The
orientation of $G$ induces an orientation of $U$ such that the stems
are all outgoing, and such that while walking \cw around the unique
face of $U$ from $a_0$, the first time an edge is met, it is oriented
\ccw according to this face, see Figure~\ref{fig:hexasquare} where all the
tree-like parts and stems are not represented.  There are two types
of toroidal unicellular maps depicted on
Figure~\ref{fig:hexasquare}. Two cycles of $U$ may intersects either
on a single vertex (square case) or on a path (hexagonal case).  The
square can be seen as a particular case of the hexagon where one side
has length zero and thus the two corners of the hexagon are
identified.

\begin{figure}[!h]
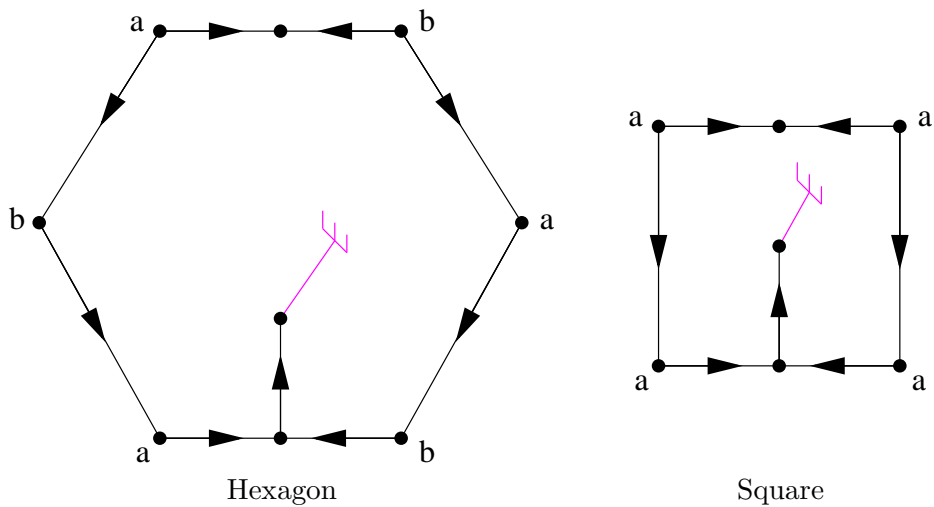

\center
\begin{tabular}{cc}
\includegraphics[scale=0.5]{hexasquare-1.pdf} \ \ & \ \
\includegraphics[scale=0.5]{hexasquare-2.pdf}\\
Hexagon \ \ & \ \ Square
\end{tabular}

\caption{The two types of rooted toroidal unicellular maps.}
\label{fig:hexasquare}
\end{figure}

On Figure~\ref{fig:contre-exemple}, we give several examples of
executions of \aps on minimal $3$-orientations.  These examples show
how important is the choice of the minimal HTC Schnyder wood in order
to obtain Theorem~\ref{th:uni}. In particular, the third example shows
that \aps can visit all the angles of the triangulation (i.e. the
cycle in the angle graph is Hamiltonian) without outputting an
unicellular map.

\begin{figure}[h!]
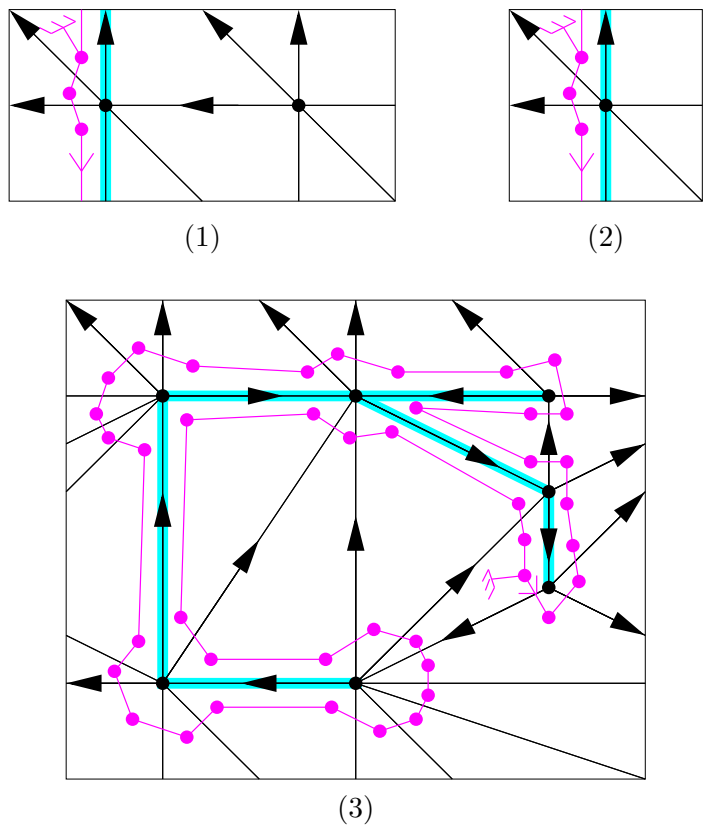

\center
\begin{tabular}[c]{cc}
\includegraphics[scale=0.5]{contre-exemple-2.pdf} \ \ \ \ & \ \ \ \
\includegraphics[scale=0.5]{contre-exemple-1.pdf}\\
(1) \ \ \ \ & \ \ \ \ (2)
\end{tabular}

\ \\

\ \\

 \begin{tabular}[c]{c}
 \includegraphics[scale=0.5]{contre-exemple-3-.pdf} \\
 (3)
 \end{tabular}

\caption{Examples of minimal $3$-orientations that are not HTC
  Schnyder woods and where \aps respectively: $(1)$ does not visit all
  the vertices, $(2)$ visits all the vertices but not all the angles, and
  $(3)$ visits all the angles but does not output an unicellular map.}
\label{fig:contre-exemple}
\end{figure}

Note that the orientations of Figure~\ref{fig:contre-exemple} are not
Schnyder woods. One may wonder if the fact of being a Schnyder wood is
of any help for our method. This is not the case since there are
examples of minimal Schnyder woods that are not HTC and where \aps
does not visit all the vertices. One can obtain such an example by
replicating $3$ times horizontally and then $3$ times vertically the
second example of Figure~\ref{fig:contre-exemple} to form a
$3\times 3$ tiling and starts \aps from the same root angle.
Conversely, there are minimal Schnyder woods that are not HTC where
\aps does output a toroidal unicellular map covering all the edges (the
Schnyder wood of Figure~\ref{fig:gamma0glue} can serve as an example
while starting from an angle of the only face oriented \cww).

\section{Recovering the original triangulation}
\label{sec:close}

This section is dedicated to show how to recover the original
triangulation from the output of \apss. The method is very similar
to~\cite{PS06} since like in the plane the output has only one face
that is homeomorphic to an open disk (i.e. a tree in the plane and an
unicellular map in general). 

\begin{theorem}
\label{th:recover}
Consider a toroidal triangulation $G$, a root angle $a_0$ that is not
in the strict interior of a separating triangle and the orientation of
the edges of $G$ corresponding to the minimal HTC Schnyder wood
w.r.t.~the root face $f_0$ containing $a_0$. From the output $U$ of
\aps applied on $(G,a_0)$ one can reattach all the stems to obtain $G$
by starting from the root angle $a_0$ and walking along the face of
$U$ in \ccw order (according to this face): each time a stem is met,
it is reattached in order to create a triangular face on its left
side.
\end{theorem}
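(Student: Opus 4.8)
The plan is to recover $G$ from $U$ by running in reverse the process that $U$ encodes. Since Theorem~\ref{th:uni} and Lemma~\ref{lem:unicellular} tell us that $U$ (with its stems) is a toroidal unicellular map whose edge set $P$ is the dual of the complement of the spanning tree $Q$ of $G^*$, the single face of $U$ is homeomorphic to an open disk, and walking along its boundary visits a sequence of angles and stems. The key observation, which I would establish first, is a local correspondence between a step of \aps and the reattachment rule: when \aps is in Case~2 at a vertex $v$ with outgoing edge $e=vv'$, it creates a stem at $v$; in the face of $U$, this stem sits in a specific angle, and the claim is that the next edge one meets while walking \cw (equivalently the first edge met before it in \ccw order) around that portion of the face boundary is precisely the edge of $G$ that, together with $e$ once reattached, closes a triangular face on the left of the (re-attached) stem.

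The main steps would be: (1) Record, during the execution of \aps, the cyclic sequence of angles of the single face of $U$ as traversed by the cycle $C$ in the angle graph (this is well-defined by Proposition~\ref{prop:terminates} and Lemma~\ref{lem:ham}, since $C$ is Hamiltonian in the angle graph of $G$, hence corresponds exactly to one walk around the unique face of $U$). (2) Show that each stem of $U$, when \aps created it in Case~2 from an outgoing edge $e=vv'$, corresponds to exactly one ``missing'' half-edge at $v'$ in $U$, and that reattaching the stem to $v'$ recreates $e$ with its original orientation. (3) Argue that the face of $U$ to the left of each such reattached stem is a triangle in $G$: since $G$ is a triangulation, every face of $G$ is bounded by three edges, and the cutting performed by \aps (keeping $P$, cutting the duals into stems, i.e.\ the $Q$ edges) turns each triangular face of $G$ into a portion of the boundary of the face of $U$ consisting of at most one $P$-edge plus the stems of the cut edges; reattaching the stem restores exactly that triangle. (4) Conclude that doing this for all $2n-1$ stems, in the order given by the \ccw walk around the face of $U$ from $a_0$, restores all $3n$ edges and all $2n$ faces of $G$, hence recovers $G$ together with its embedding.

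Concretely, I would set up an inductive argument on the number of stems reattached, dual to the way Lemma~\ref{lem:first} and Claim~\ref{cl:leftangles} were used to analyze \aps going forward. At each stage the partially-reattached object is an embedded graph with one distinguished (non-disk) region where stems remain; reattaching the next stem (the one whose base is first encountered in the \ccw walk) splits off a triangular face and shortens the boundary walk by a bounded amount. The correctness of ``create a triangular face on its left'' is exactly the mirror image of the ``Case~1 and then 4'' / ``Case~2 and then 3'' local pictures of Figure~\ref{fig:anglerulesum}: the angles that \aps skipped (Case~3) around a cut edge are precisely the two angles that, after reattachment, become the two new angles of the recreated triangle on the stem's left.

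The hard part will be making the ``triangular face on its left'' step fully rigorous: one must verify that the stem met at a given point of the \ccw walk always has, immediately to its left, a configuration that is a genuine triangle of $G$ and not, say, a larger face or an already-closed region — in other words, that the greedy left-closing never conflicts with a previously reattached stem and never leaves a face of size $\neq 3$. I expect this to follow from the fact that $P$ is the dual of the complement of the spanning tree $Q$ (Lemma~\ref{lem:unicellular}), so each edge of $Q$ (each stem) corresponds to a unique edge of $G^*$ crossing a unique primal edge, and the two primal faces on either side are triangles; tracking which of the two becomes the ``left'' face at the moment the stem is met in the walk, using that the orientation of $U$ has all stems outgoing and first-met edges oriented \ccw, pins this down. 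Once the local reattachment is shown to always produce the correct triangle with the correct orientation, a counting check ($n$ vertices, $n+1$ tree/homology edges of $P$, $2n-1$ stems, giving $3n$ edges and, by Euler on the torus, $2n$ triangular faces) confirms that the reconstruction is complete and yields exactly $G$.
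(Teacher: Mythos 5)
You correctly identify the overall architecture (walk the unique face of $U$, close triangles greedily on the left, count edges and faces at the end) and, importantly, you correctly isolate the crux: one must show that the greedy left-closing at each stem really produces a face of $G$ and never attaches a stem to the wrong vertex. But at exactly that point your proposal stops at ``I expect this to follow from the fact that $P$ is the dual of the complement of the spanning tree $Q$,'' and that expectation is not justified. Knowing that each stem crosses a unique dual edge with triangular faces on both sides does not tell you \emph{which vertex} of the face boundary the stem must be attached to; a priori, when you meet a stem in the \ccw walk and close it against the two preceding edges, the edge $(w,u)$ you create could differ from the edge of $G$ that the stem encodes (its true endpoint could lie further along the boundary, or a different stem could be the one that belongs in that triangle). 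Nothing in the duality statement of Lemma~\ref{lem:unicellular} excludes this.

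The missing ingredient is the paper's Lemma~\ref{lem:stemright}: the Hamiltonian cycle $C$ in the angle graph (Lemma~\ref{lem:ham}) induces a total order on the angles of $G$, hence on the angles of $U$, and if a stem's true target angle preceded the angle where the stem sits, then \aps would have found the corresponding edge already marked and would not have been in Case~2 --- so every stem, when reattached to form its edge of $G$, leaves the root angle on its \emph{right}. This ``balanced'' property is what powers the induction (the paper's Lemma~\ref{lem:stemstep}): if a closed triple created a face $(u,v,w)$ that is not a face of $G$, there would be two stems whose right sides are disjoint, and both cannot contain the root angle. Your local picture of Figure~\ref{fig:anglerulesum} and the outgoing/\ccw orientation conventions do not substitute for this global ordering argument, and without it the inductive step ``reattaching the next stem splits off a triangular face'' is unproven. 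A secondary, smaller gap: you assert that at each stage the stem first encountered in the \ccw walk can be closed, but you do not verify that at least two edges precede it (the paper's Lemma~\ref{lem:firststem}, which again uses the right-side property); the counting argument (three more edges than stems on the special face, preserved under closure) only guarantees that \emph{some} admissible triple exists, not that the first-met stem yields one.
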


Theorem~\ref{th:recover} is illustrated on
Figure~\ref{fig:reconstruction} where one can check that the obtained
toroidal triangulation is $K_7$ (like on the input of
Figure~\ref{fig:tore-example}).

\begin{figure}[h!]
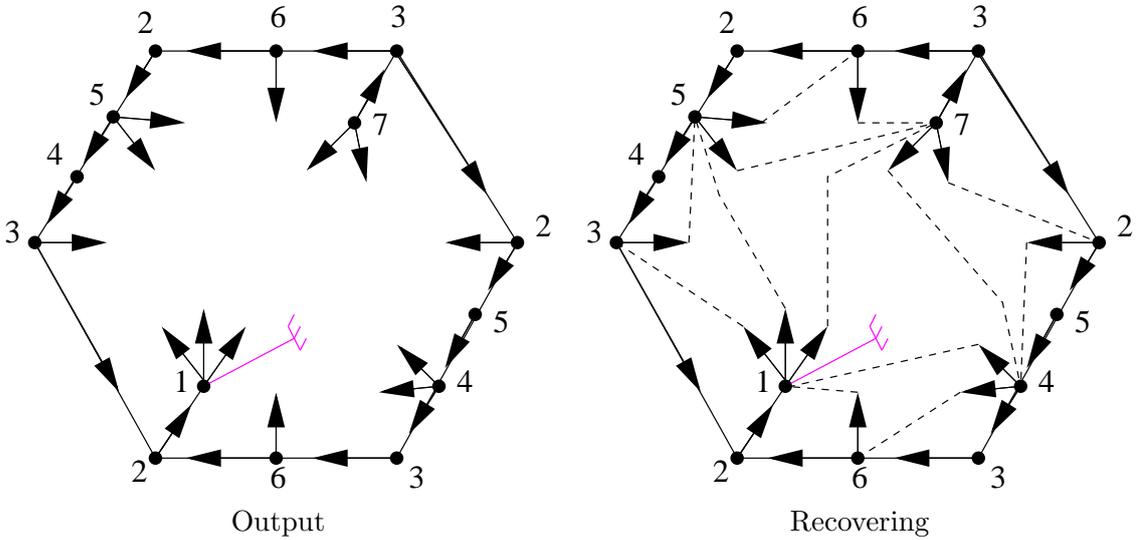

\center
\begin{tabular}{cc}
\includegraphics[scale=0.5]{tore-tri-exe-4---.pdf}
 & 
\includegraphics[scale=0.5]{tore-tri-exe-5-.pdf}\\
Output & Recovering \\
\end{tabular}
\caption{Example of how to recover the original toroidal
  triangulation $K_7$ from the output of \aps.}
\label{fig:reconstruction}
\end{figure}

In fact in this section we define a method, more general than the one
described in Theorem~\ref{th:recover}, that will be useful in next
sections.

Let $\mathcal U_r(n)$ denote the set of toroidal unicellular maps $U$
rooted on a particular angle $a_0$, with exactly $n$ vertices, $n+1$
edges and $2n-1$ stems satisfying the following: every vertex has
exactly $2$ stems, except the root vertex $v_0$ that has $1$ more
stem, and if the map is hexagonal, the two corners that have $1$ less
stem each, and if the map is square, the only corner that has 
$2$ less stems (if the root vertex is a corner we simply combine the more
and less).  Note that the output $U$ of \aps given by
Theorem~\ref{th:uni} is an element of $\mathcal U_r(n)$.

Similarly to the planar case~\cite{PS06}, we define a general way to
reattached step by step all the stems of an element $U$ of
$\mathcal U_r(n)$.  Let $U_0=U$, and, for $1\leq k \leq 2n-1$, let
$U_{k}$ be the map obtained from $U_{k-1}$ by reattaching one of its
stem (we explicit below which stem is reattached and how). The
\emph{special face of $U_0$} is its only face. For
$1\leq k \leq 2n-1$, the \emph{special face of $U_{k}$} is the face on
the right of the stem of $U_{k-1}$ that is reattached to obtain
$U_{k}$.  For $0\leq k\leq 2n-1$, the border of the special face of
$U_k$ consists of a sequence of edges and stems. We define an
\emph{admissible triple} as a sequence $(e_1,e_2,s)$, appearing in
\ccw order along the border of the special face of $U_k$, such that
$e_1=(u,v)$ and $e_2=(v,w)$ are edges of $U_k$ and $s$ is a stem
attached to $w$. The \emph{closure} of the admissible triple consists
in attaching $s$ to $u$, so that it creates an edge $(w,u)$ oriented
from $w$ to $u$ and so that it creates a triangular face $(u,v,w)$ on
its left side.  The \emph{complete closure} of $U$ consists in closing
a sequence of admissible triple, i.e.  for $1\leq k \leq 2n-1$, the
map $U_{k}$ is obtained from $U_{k-1}$ by closing any admissible
triple.

Note that, for $0\leq k\leq 2n-1$, the special face of $U_k$ contains
all the stems of $U_k$. The closure of a stem reduces the number of
edges on the border of the special face and the number of stems by
$1$. At the beginning, the unicellular map $U_0$ has $n+1$ edges and
$2n-1$ stems. So along the border of its special face, there are
$2n+2$ edges and $2n-1$ stems. Thus there is exactly three more edges
than stems on the border of the special face of $U_0$ and this is
preserved while closing stems. So at each step there is necessarily at
least one admissible triple and the sequence $U_k$ is well defined.
Since the difference of three is preserved, the special face of
$U_{2n-2}$ is a quadrangle with exactly one stem. So the reattachment
of the last stem creates two faces that are triangles and at the end
$U_{2n-1}$ is a toroidal triangulation.  Note that at a given step
there might be several admissible triples but their closure are
independent and the order in which they are performed does not modify
the obtained triangulation $U_{2n-1}$.

We now apply the closure method to our particular case.  Consider a
toroidal triangulation $G$, a root angle $a_0$ that is not in the
strict interior of a separating triangle and the orientation of the
edges of $G$ corresponding to the minimal HTC Schnyder wood w.r.t.~the
root face $f_0$. Let $U$ be the output of \aps
applied on $(G,a_0)$.

\begin{lemma}
\label{lem:stemright}
When a stem of $U$ is reattached to form the corresponding edge of
$G$, it splits the (only) face of $U$ into two faces. The root angle
of $U$ is in the face that is on the right side of the stem.
\end{lemma}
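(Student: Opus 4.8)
The plan is to read off the answer from the two spanning substructures the algorithm produces — the spanning unicellular map $P=U$ (its edges, without stems) and the spanning tree $Q$ of $G^*$ — furnished by Lemma~\ref{lem:unicellular}, together with the orientation of $Q$ recorded in Lemma~\ref{lem:PDconnected}.

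For the first assertion: since $U$ is a spanning unicellular submap of $G$ (Lemma~\ref{lem:unicellular}), its unique face $\mathcal{F}$ is an open disk, and the non-$U$ edges of $G$ subdivide $\mathcal{F}$ into the faces of $G$; in particular every face of $G$ lies in $\mathcal{F}$. The stem in question is planted at a vertex $v$ and corresponds to an edge $e$ of $G$ that is not an edge of $U$ (the edges of $U$ form $P$, while the edge of a stem has its dual in $Q$, hence is not in $P$). Thus $e$ is, apart from its endpoints, an arc in the interior of $\mathcal{F}$, i.e.\ a chord of the disk $\mathcal{F}$, and reattaching the stem — completing it along $e$, which is consistent because stems are outgoing, so the stem is drawn along $e$ away from $v$ — splits $\mathcal{F}$ into two disk-faces, one to the left and one to the right of $e$ with its $G$-orientation. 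Denote them $\mathcal{F}_L$ and $\mathcal{F}_R$; this gives the first sentence, and ``right side of the stem'' is $\mathcal{F}_R$.

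For the second assertion, I would match $\{\mathcal{F}_L,\mathcal{F}_R\}$ with the tree $Q$. Let $f_L,f_R$ be the faces of $G$ bordering $e$ on its left and right; then $f_L\subseteq\overline{\mathcal{F}_L}$ and $f_R\subseteq\overline{\mathcal{F}_R}$. By Lemma~\ref{lem:unicellular}, $Q$ is a spanning tree of $G^*$ and $e^*\in Q$, so $Q\setminus e^*$ has exactly two components $Q_L\ni f_L$ and $Q_R\ni f_R$. The key point is that the faces of $G$ lying in $\mathcal{F}_L$ are precisely the vertices of $Q_L$ (and symmetrically for $R$): any edge of $Q$ different from $e^*$ is the dual of a non-$U$ edge $g\neq e$, which is a chord of $\mathcal{F}$ not crossing $e$ (edges of $G$ do not cross), hence lies entirely on one side of $e$, so its two incident faces lie on the same side; therefore each component of $Q\setminus e^*$ is contained either in $\mathcal{F}_L$ or in $\mathcal{F}_R$, and since $Q$ is spanning, these two components together account for all faces of $G$. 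Finally, by definition of the dual orientation $e^*$ runs from $f_L$ to $f_R$, and by Lemma~\ref{lem:PDconnected} $Q$ is an in-arborescence rooted at $f_0$; hence $e^*$ points from $f_L$ to its parent $f_R$, so $f_0$ lies in $Q_R$, i.e.\ $f_0\subseteq\mathcal{F}_R$, and the root angle $a_0\in f_0$ is in $\mathcal{F}_R$.

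I expect the main obstacle to be the ``key point'': justifying carefully that the faces of $G$ contained in $\mathcal{F}_L$ form exactly the vertex set of one component of $Q\setminus e^*$. The subtlety is that after the split two faces on opposite sides of $e$ may be adjacent across a $U$-edge, so one must notice that such an edge has no dual in $Q$ and thus cannot glue $Q_L$ to $Q_R$; the spanning-tree count (two components, all faces accounted for) then closes the argument. As a cross-check, one can instead argue by contradiction using minimality: if $a_0$ were in $\mathcal{F}_L$, the counterclockwise boundary walk of $\mathcal{F}_L$ would be a non-empty clockwise $0$-homologous oriented subgraph with respect to $f_0$, contradicting Lemma~\ref{lem:maximal}.
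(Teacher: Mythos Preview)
Your argument is correct, but it is genuinely different from the paper's. The paper argues directly from the execution of \apss: by Lemma~\ref{lem:ham} the angle walk is Hamiltonian, so the clockwise tour of the face of $U$ from $a_0$ lists the angles of $G$ in the order the algorithm visits them. If the reattachment angle $a'_s$ were reached before the stem's origin angle $a_s$, then the edge $e$ would already be marked when the algorithm sits at $a_s$, and Case~2 (which creates the stem) could not apply. Hence $a_s$ precedes $a'_s$ in the tour, which forces $a_0$ to lie on the right of $s$.

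Your route instead reads the side of $a_0$ off the dual spanning tree $Q$: removing $e^*$ splits $Q$ into two subtrees, these coincide with the sets of $G$-faces in $\mathcal{F}_L$ and $\mathcal{F}_R$ (because all other $Q$-edges are duals of chords of $\mathcal{F}$ disjoint from $e$), and Lemma~\ref{lem:PDconnected} forces $Q$ to be an in-arborescence toward $f_0$, so $f_0$ sits on the target side of $e^*$, namely $\mathcal{F}_R$. This is a clean structural argument that exploits Lemma~\ref{lem:unicellular} more fully than the paper does; the paper's proof, by contrast, is shorter and needs nothing beyond Lemma~\ref{lem:ham} and the case analysis of the algorithm. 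One caveat: your closing ``cross-check'' via Lemma~\ref{lem:maximal} is not immediate as stated, since the boundary walk of $\mathcal{F}_L$ is not obviously an oriented subgraph of $D$ without further argument; but your main line does not rely on it.
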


\begin{proof}
  By Lemma~\ref{lem:ham}, the execution of \aps corresponds to an
  Hamiltonian cycle $C=(a_0, \ldots, a_{2m},a_0)$ in the angle graph
  of $G$. Thus $C$ defines a total order $<$ on the angles of $G$
  where $a_i< a_j$ if and only if $i< j$.  Let us consider now the
  angles on the face of $U$.  Note that such an angle corresponds to
  several angles of $G$, that are consecutive in $C$ and that are
  separated by a set of incoming edges of $G$ (those incoming edges
  corresponding to stems of $U$).  Thus the order on the angles of $G$
  defines automatically an order on the angles of $U$.  The angles of
  $U$ considered in \cw order along the border of its face, starting
  from the root angle, correspond to a sequence of strictly
  increasing angles for $<$.

  Consider a stem $s$ of $U$ that is reattached to form an edge $e$ of
  $G$.  Let $a_s$ be the angle of $U$ that is situated just before $s$
  (in \cw order along the border of the face of $U$) and $a'_s$ be the
  angle of $U$ where $s$ should be reattached.  If $a'_s< a_s$, then
  when \aps consider the angle $a_s$, the edge corresponding to $s$ is
  already marked and we are not in Case~2 of \aps. So $a_s< a'_s$ and
  $a_0$ is on the right side of $s$.
\end{proof}

Recall that $U$ is an element of $\mathcal U_r(n)$ so we can apply on
$U$ the complete closure procedure described above.  We use the same
notation as before, i.e. let $U_0=U$ and for $1\leq k \leq 2n-1$, the
map $U_{k}$ is obtained from $U_{k-1}$ by closing any admissible
triple.  The following lemma shows that the triangulation obtained by
this method is $G$:

\begin{lemma}
\label{lem:stemstep}
  The complete closure of $U$ is $G$,  i.e. $U_{2n-1}=G$.
\end{lemma}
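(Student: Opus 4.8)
The plan is to show that the complete closure procedure reproduces exactly the edges of $G$ that were turned into stems by \aps, and in the correct positions, so that $U_{2n-1} = G$. The key observation is that each stem $s$ of $U$ corresponds to an edge $e$ of $G$ that was encountered by \aps in Case~2 (outgoing, non-marked), and Lemma~\ref{lem:stemright} tells us that the root angle $a_0$ lies on the right side of $s$. The closure of an admissible triple $(e_1,e_2,s)$ reattaches $s$ so as to create a triangular face on its \emph{left} side, which matches the requirement of Theorem~\ref{th:recover} (a triangular face on the left of the stem, i.e. the face \emph{not} containing $a_0$). So the plan is to prove by induction on $k$ that $U_k$ is a submap of $G$ (more precisely, that $U_k$ is obtained from $G$ by "un-closing" $2n-1-k$ of its stems), and that the special face of $U_k$, read in \ccw order, is exactly the portion of the Hamiltonian cycle $C$ that has not yet been "processed."

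First I would set up the correspondence between the angles of $U_k$ along the border of its special face and a sub-interval of the angle cycle $C = (a_0,\ldots,a_{2m},a_0)$ from Lemma~\ref{lem:ham}. As noted in the proof of Lemma~\ref{lem:stemright}, walking \cw around the face of $U = U_0$ starting at $a_0$ gives a strictly increasing sequence of angles in the total order $<$ induced by $C$; the stems encountered along the way are exactly the incoming edges of $G$ that \aps skipped in Case~3. The induction hypothesis for $U_k$ will be: $U_k$ is a map embedded on the torus whose edges are the edges of $P$ together with $k$ of the reattached stems, such that $U_k$ coincides with $G$ outside the special face, and such that walking around the special face recovers the structure of $C$ restricted to the unprocessed angles.

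For the inductive step, I would take any admissible triple $(e_1,e_2,s)$ with $e_1=(u,v)$, $e_2=(v,w)$ appearing consecutively in \ccw order on the border of the special face of $U_k$, and $s$ a stem on $w$. Because of the "three more edges than stems" counting argument and the increasing-angle structure, the triangle $(u,v,w)$ that the closure creates must be a genuine face of $G$: indeed in $G$ the edge $e$ corresponding to $s$ closes off a triangle, and the angle argument from Lemma~\ref{lem:stemright} forces its two other sides to be precisely $e_1$ and $e_2$ (the edge $e$ leaves $w$, enters $u$, and the face on its left — not containing $a_0$ — is bounded by $e$, $e_1$, $e_2$). Hence closing this triple in $U_k$ is the same as "filling in" this face of $G$, giving $U_{k+1}$ still a submap of $G$ with the induction hypothesis preserved. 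The remark in the text that different admissible triples at a given step have independent closures handles well-definedness. After $2n-1$ steps all stems are reattached, $U_{2n-1}$ has $3n$ edges and all faces triangular, so $U_{2n-1} = G$.

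The main obstacle I expect is the step where one argues that the triangle $(u,v,w)$ created by closing an admissible triple is actually a face of $G$ rather than some spurious triangle — one needs to rule out, using minimality of the orientation and the absence of separating triangles in the strict interior of the root angle, that the closure ever "jumps over" a piece of $G$. The clean way to do this is to carry the tighter invariant that the special face of $U_k$, with its sequence of edges and stems, is literally obtained from $C$ by deleting the angles interior to the $k$ already-reattached stems, and that the edges $e_1, e_2$ adjacent to any stem $s$ on the border are exactly the two edges of $G$ bounding, together with $e$, the face of $G$ on the left of $e$. Then the closure is forced and the induction goes through; this invariant is really just a bookkeeping translation of the Case-1/Case-2/Case-3 behavior of \aps recorded in Figure~\ref{fig:anglerulesum}, combined with Lemma~\ref{lem:stemright}.
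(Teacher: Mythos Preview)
Your approach is essentially the same as the paper's: prove by induction on $k$ that every non-special face of $U_k$ is a face of $G$, using Lemma~\ref{lem:stemright} for the key step. The difference lies in how the key step (that the triangle $(u,v,w)$ created by closing an admissible triple is a genuine face of $G$) is handled.

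You propose to carry an invariant recording the full correspondence between the boundary of the special face and a sub-interval of the Hamiltonian cycle $C$, and then read off from this bookkeeping that $e_1,e_2$ are the correct sides. This works in principle, but it is heavier than necessary, and you correctly flag it as the main obstacle without actually carrying out the verification that the invariant is preserved. The paper bypasses the bookkeeping entirely with a short contradiction argument: if $(u,v,w)$ were not a face of $G$, then some other stem $s'$ of $U_k$ would have to be reattached at the angle $a_v$ (between $e_1$ and $e_2$) or $a_w$ (between $e_2$ and $s$); taking $s'$ nearest to $s$, one checks that $s$ and $s'$ then have \emph{disjoint} right sides in the face of $U$, yet by Lemma~\ref{lem:stemright} both right sides must contain the root angle $a_0$ --- contradiction. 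This ``crossing stems'' argument is the clean substitute for your invariant and is what you are missing to close the gap; with it, no tracking of $C$ is needed beyond the single order comparison already established in Lemma~\ref{lem:stemright}.
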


\begin{proof}
  We prove by induction on $k$ that every face of $U_k$ is a face of
  $G$, except for the special face.  This is true for $k=0$ since
  $U_0=U$ has only one face, the special face.  Let
  $0\leq k\leq 2n-2$, and suppose by induction that every non-special
  face of $U_k$ is a face of $G$.  Let $(e_1,e_2,s)$ be the admissible
  triple of $U_k$ such that its closure leads to $U_{k+1}$, with
  $e_1=(u,v)$ and $e_2=(v,w)$. The closure of this triple leads to a
  triangular face $(u,v,w)$ of $U_{k+1}$. This face is the only
  ``new'' non-special face while going from $U_k$ to $U_{k+1}$.

  Suppose, by contradiction, that this face $(u,v,w)$ is not a face of
  $G$.  Let $a_v$ (resp. $a_w$) be the angle of $U_k$ at the special
  face, between $e_1$ and $e_2$ (resp. $e_2$ and $s$).  Since $G$ is a
  triangulation, and $(u,v,w)$ is not a face of $G$, there exists at
  least one stem of $U_k$ that should be attached to $a_v$ or $a_w$ to
  form a proper edge of $G$. Let $s'$ be such a stem that is the
  nearest from $s$. In $G$ the edges corresponding so $s$ and $s'$
  should be incident to the same triangular face. Let $x$ be the
  origin of the stem $s'$.  Let $z\in \{v,w\}$ such that $s'$ should
  be reattached to $z$.  If $z=v$, then $s$ should be reattached to
  $x$ to form a triangular face of $G$. If $z=w$, then $s$ should be
  reattached to a common neighbor of $w$ and $x$ located on the border
  of the special face of $U_k$ in \ccw order between $w$ and $x$. So
  in both cases $s$ should be reattached to a vertex $y$ located on
  the border of the special face of $U_k$ in \ccw order between $w$
  and $x$ (with possibly $y=x$). To summarize $s$ goes from $w$ to $y$
  and $s'$ from $x$ to $z$, and $z,x,y,w$ appear in \cw order along
  the special face of $U_k$. By Lemma~\ref{lem:stemright}, the root
  angle is on the right side of both $s$ and $s'$, this is not
  possible since their right sides are disjoint, a contradiction.

  So for $0\leq k\leq 2n-2$, all the non-special faces of $U_k$ are
  faces of $G$. In particular every face of $U_{2n-1}$ except one is a
  face of $G$. Then clearly the (triangular) special face of
  $U_{2n-1}$ is also a face of $G$, hence $U_{2n-1}=G$.
\end{proof}

Lemma~\ref{lem:stemstep} shows that one can recover the original
triangulation from $U$ with any sequence of admissible triples
that are closed successively. This does not explain how to find the
admissible triples efficiently. In fact the root angle can be used to
find a particular admissible triple of $U_k$:

\begin{lemma}
\label{lem:firststem}
For $0\leq k\leq 2n-2$, let $s$ be the first stem met while walking
\ccw from $a_0$ in the special face of $U_k$. Then before $s$, at
least two edges are met and  the last two of these edges form an
admissible triple with $s$.
\end{lemma}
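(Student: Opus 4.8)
The plan is to walk counterclockwise along the border of the special face of $U_k$ starting from the root angle $a_0$, and analyze what is encountered before reaching the first stem $s$. Call this first stem $s$ and let $w$ be its origin vertex. I want to show that at least two edges precede $s$ and that the two edges immediately preceding $s$, say $e_1=(u,v)$ and $e_2=(v,w)$, satisfy that $(e_1,e_2,s)$ is admissible, i.e.\ that these edges are traversed in the ``forward'' direction $u\to v\to w$ consistent with the definition of an admissible triple.

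First I would establish that $s$ is not the very first thing met and is not immediately preceded by $a_0$ alone, using the counting argument already in the text: along the border of the special face of $U_k$ there are exactly three more edges than stems (this invariant is preserved by the closure procedure and holds for $U_0$ since $U_0=U$ has $n+1$ edges and $2n-1$ stems, giving $2n+2$ edges and $2n-1$ stems on its face). Since there is always at least one admissible triple, and since a stem can only appear after at least two edges in \ccw order of the border, the walk from $a_0$ must traverse at least two edges before meeting any stem; in particular before $s$. Concretely, if fewer than two edges preceded $s$, there would be a stem ``too close'' to the root and one checks this contradicts the structure coming from Lemma~\ref{lem:stemright} (the root angle lies on the right side of every stem when it is reattached).

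Next I would verify the orientation/direction condition. The key point is that the closure procedure only ever reattaches a stem $s$ (attached to $w$) to a vertex $u$ such that $e_1=(u,v)$, $e_2=(v,w)$ appear in \ccw order along the border; so ``admissible'' means precisely that the last two border edges before $s$ are oriented forward $u\to v$ and $v\to w$ as one walks \ccw. I would argue this using Lemma~\ref{lem:stemright}: the root angle $a_0$ sits on the right side of $s$ once reattached, which pins down on which side of $s$ the edges $e_1,e_2$ lie and forces them to be consistently oriented toward $w$. Equivalently, using the total order $<$ on angles of $G$ induced by the Hamiltonian cycle $C$ (Lemma~\ref{lem:ham}) as in the proof of Lemma~\ref{lem:stemright}: walking \ccw from $a_0$ corresponds to strictly increasing angles, and since $s$ is the \emph{first} stem met, the two edges just before it cannot themselves carry intervening stems, which forces them to be genuine \ccw-consecutive edges $e_1,e_2$ of $U_k$ forming the desired configuration.

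The main obstacle I anticipate is ruling out the degenerate cases at corners of the unicellular map: when $w$ (or $v$, or $u$) is a corner of the hexagon/square, or coincides with $v_0$, the local stem counts change (fewer stems at corners, one extra at $v_0$), so one has to check that ``the first stem met'' is still well-defined and that the two preceding edges still exist and have the right orientation. I would handle this by a short case analysis on whether the relevant vertices are corners, using the precise stem-count bookkeeping of $\mathcal U_r(n)$ together with the preserved ``three more edges than stems'' invariant, which guarantees an admissible triple exists at every step and hence that the configuration I describe is forced.
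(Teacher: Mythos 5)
Your proof ultimately rests on the same single ingredient as the paper's: Lemma~\ref{lem:stemright} (the root angle lies on the right side of every stem when it is reattached), and that is indeed exactly what rules out zero or one edge before the first stem $s$. Two parts of your write-up are off, however. First, the counting argument you lead with is circular: the invariant ``three more edges than stems'' only guarantees that \emph{some} admissible triple exists somewhere along the border, and your intermediate claim that ``a stem can only appear after at least two edges'' is both unproven and false in general --- each vertex carries two stems, which may perfectly well be consecutive along the border walk. The lemma is specifically about the first stem met \emph{after $a_0$}, and only Lemma~\ref{lem:stemright} gives that; your ``concretely, \dots one checks'' sentence is the actual proof, and the counting preamble should be dropped.

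Second, your third and fourth paragraphs set out to verify conditions that are not required. An admissible triple only asks that $e_1,e_2,s$ appear consecutively in counterclockwise order along the border of the special face, with $e_1,e_2$ sharing the middle vertex $v$ and $s$ attached to the far end $w$ of $e_2$; there is no condition on the orientations of $e_1$ and $e_2$ in the underlying $3$-orientation, so there is nothing ``forward'' to check. Since $s$ is the \emph{first} stem met, everything preceding it along the walk consists of edges, so once at least two edges are known to precede $s$, the last two of them automatically form an admissible triple with $s$ by definition --- no orientation verification and no case analysis on corners or on $v_0$ is needed. The paper's entire proof is: fewer than two edges before $s$ would force the root angle onto the left side of the reattached $s$, contradicting Lemma~\ref{lem:stemright}; the rest is definitional.
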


\begin{proof}
  Since $s$ is the first stem met, there are only edges that are met
  before $s$. Suppose by contradiction that there is only zero or one
  edge met before $s$. Then the reattachment of $s$ to form the
  corresponding edge of $G$ is necessarily such that the root angle is
  on the left side of $s$, a contradiction to
  Lemma~\ref{lem:stemright}. So at least two edges are met before $s$
  and the last two of these edges form an admissible triple with $s$.
\end{proof}

Lemma~\ref{lem:firststem} shows that one can reattach all the stems
by walking once along the face of $U$ in \ccw order. Thus we obtain
Theorem~\ref{th:recover}.

Note that $U$ is such that the complete closure procedure described
here never \emph{wraps over the root angle}, i.e. when a stem is
reattached, the root angle is always on its right side (see
Lemma~\ref{lem:stemright}). The property of never wrapping over the
root angle is called \emph{balanced} in~\cite{AP13}.  Let
$\mathcal U_{r,b}(n)$ denote the set of elements of $\mathcal U_r(n)$
that are balanced. So the output $U$ of \aps given by
Theorem~\ref{th:uni} is an element of $\mathcal U_{r,b}(n)$.  We
exhibit in Section~\ref{sec:bijection} a bijection between
appropriately rooted toroidal triangulations and a particular subset
of $\mathcal U_{r,b}(n)$.

The possibility to close admissible triples in any order to recover
the original triangulation is interesting comparing to the simpler
method of Theorem~\ref{th:recover} since it enables to recover the
triangulation even if the root angle is not given.  This property is
used in Section~\ref{sec:bij2} to obtain a bijection between toroidal
triangulations and some unrooted unicellular maps.

Moreover if the root angle is not given, then one can simply start
from any angle of $U$, walk twice around the face of $U$ in \ccw order
and reattached all the admissible triples that are encountered along
this walk. Walking twice ensure that at least one complete round is
done from the root angle. Since only admissible triples are
considered, we are sure that no unwanted reattachment is done during
the process and that the final map is $G$. This enables to reconstruct
$G$ in linear time even if the root angle is not known. This property
is used in Section~\ref{sec:coding}.

\section{Optimal encoding}
\label{sec:coding}
The results presented in the previous sections allow us to generalize
the encoding of planar triangulations, defined by Poulalhon and
Schaeffer~\cite{PS06}, to triangulations of the torus. The
construction is direct and it is hence really different from the one
of~\cite{CFL10} where triangulations of surfaces are cut in order to
deal with planar triangulations with boundaries. Here we encode the
unicellular map outputted by \aps by a plane rooted tree with
$n$ vertices and with exactly two stems attached to each vertex, plus
$O(\log(n))$ bits.  As in~\cite{CFL10}, this encoding is
asymptotically optimal and uses approximately $3.25 n$ bits. The
advantage of our method is that it can be implemented in linear
time. Moreover we believe that our encoding gives a better
understanding of the structure of triangulations of the torus. It is
illustrated with news bijections that are obtained in
Sections~\ref{sec:bijection} and~\ref{sec:bij2}.

Consider a toroidal triangulation $G$, a root angle $a_0$ that is not
in the strict interior of a separating triangle and the orientation of
the edges of $G$ corresponding to the minimal HTC Schnyder wood
w.r.t.~the root face $f_0$. Let $U$ be the output of \aps applied on
$(G,a_0)$. As already mentioned at the end of Section~\ref{sec:close},
to retrieve the triangulation $G$ one just needs to know $U$ without
the information of its root angle (by walking twice around the face of
$U$ in \ccw order and reattached all the admissible triples that are
encountered along this walk, one can recover $G$). Hence to encode
$G$, one just has to encode $U$ without the position of the root angle
around the root vertex (see Figure~\ref{fig:Coding}.(a)).

By Lemma~\ref{lem:PDconnected}, the unicellular map $U$ contains a
spanning tree $T$ which is oriented from the leaves to the root
vertex. The tree $T$ contains exactly $n-1$ edges, so there is exactly
$2$ edges of $U$ that are not in $T$. We call these edges the
\emph{special edges} of $U$. We cut these two special edges to
transform them into stems of $T$ (see Figures~\ref{fig:Coding}.(a)
and~(b)). We keep the information of where are the special stems in
$T$ and on which angle of $T$ they should be reattached. This
information can be stored with $O(\log(n))$ bits.  One can recover $U$
from $T$ by reattaching the special stems in order to form
non-contractible cycles with $T$ (see Figure~\ref{fig:Coding}.(c)).

\begin{figure}[h!]
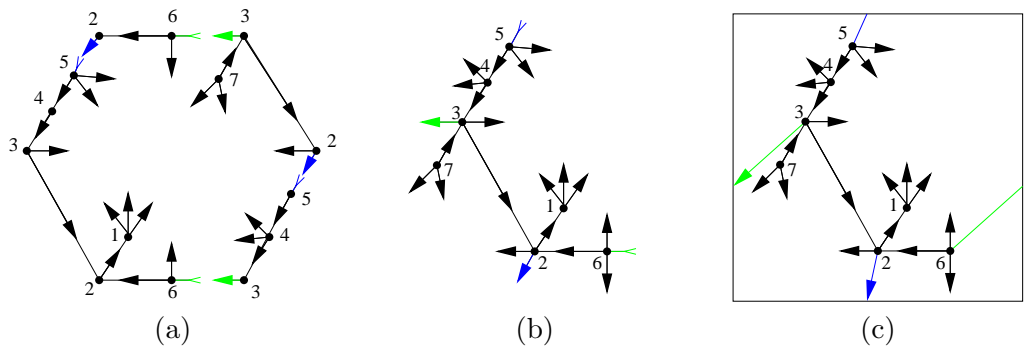

\center
\begin{tabular}{ccc}
\includegraphics[scale=0.3]{tore-tri-exe-9-.pdf} \ & \
\includegraphics[scale=0.3]{tore-tri-exe-10-.pdf} \ & \ 
\includegraphics[scale=0.3]{tore-tri-exe-11-.pdf} \\
(a) \ &\ (b) \ &\ (c) \\
\end{tabular}
\caption{From unicellular maps to trees with special stems and back.}
\label{fig:Coding}
\end{figure}

So $T$ is a plane tree on $n$ vertices, each vertex having $2$ stems
except the root vertex $v_0$ having three stems. Choose any stem $s_0$
of the root vertex, remove it and consider that $T$ is rooted at the
angle where $s_0$ should be attached. The information of the root
enable to put back $s_0$ at its place. So now we are left with a
rooted plane tree $T$ on $n$ vertices where each vertex has exactly
$2$ stems (see Figure~\ref{fig:CodingTree}.(a)).

This tree $T$ can easily be encoded by a binary word on $6n-2$ bits:
that is, walking in \ccw order around $T$ from the root angle, writing
a ``1'' when going down along $T$, and a ``0'' when going up along $T$
(see Figure~\ref{fig:CodingTree}.(a)). As in~\cite{PS06}, one can
encode $T$ more compactly by using the fact that each vertex has
exactly two stems.  Thus $T$ is encoded by a binary word on $4n-2$
bits: that is, walking in \ccw order around $T$ from the root angle,
writing a ``1'' when going down along an edge of $T$, and a ``0'' when
going up along an edge or along a stem of $T$ (see
Figure~\ref{fig:CodingTree}.(b) where the ``red 1's'' of
Figure~\ref{fig:CodingTree}.(a) have been removed).  Indeed there is
no need to encode when going down along stems, this information can be
retrieved afterward. While reading the binary word to recover $T$,
when a ``0'' is met, we should go up in the tree, except if the
vertex that we are considering does not have already its two stems,
then in that case we should create a stem (i.e. add a ``red 1'' before
the ``0''). So we are left with a binary word on $4n-2$ bits with
exactly $n-1$ bits ``1'' and $3n-1$ bits ``0''.

\begin{figure}[h!]
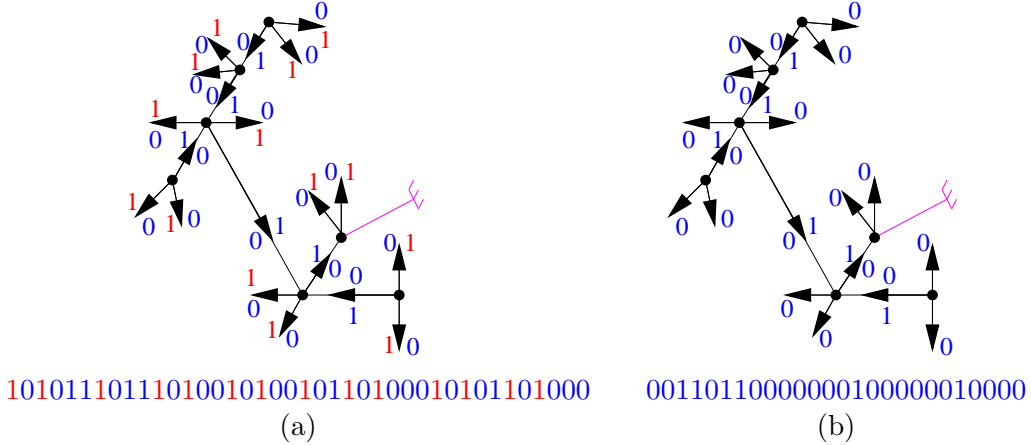

\center
\begin{tabular}{cc}
\includegraphics[scale=0.4]{tree-1.pdf}\ & \
\includegraphics[scale=0.4]{tree-2.pdf}\\

{\color{blue}{\color{red}1}0{\color{red}1}011{\color{red}1}011{\color{red}1}0{\color{red}1}00{\color{red}1}0{\color{red}1}00{\color{red}1}01{\color{red}1}0{\color{red}1}000{\color{red}1}0{\color{red}1}01{\color{red}1}0{\color{red}1}000}  \ & \
{\color{blue}00110110000000100000010000} \\
(a) \ &\ (b) \\
\end{tabular}
\caption{Encoding a rooted tree with two stems at each vertex.}
\label{fig:CodingTree}
\end{figure}

Similarly to~\cite{PS06}, using \cite[Lemma~7]{BGH03}, this word can
then be encoded with a binary word of length
$\log_2\binom{4n-2}{n-1}+o(n)\sim n\, \log_2(\frac{256}{27})\approx
3,2451\,n$
bits. Thus we have the following theorem whose linearity is discussed
in Section~\ref{sec:complexity}:

\begin{theorem}
\label{th:encoding}
  Any toroidal triangulation on $n$ vertices, can be encoded with a
  binary word of length $3.2451 n +o(n)$ bits, the encoding and
  decoding being linear in $n$.
\end{theorem}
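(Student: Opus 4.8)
The plan is to assemble the linearity claim of Theorem~\ref{th:encoding} from the structural results already established, and then address the two non-obvious computational bottlenecks: (i) computing the minimal HTC Schnyder wood of $G$ in linear time, and (ii) performing the entropy-compression step of \cite{BGH03} in linear time. The bit-length bound $3.2451\,n + o(n)$ is already justified in the text above: Theorem~\ref{th:uni} gives a unicellular map $U$ with $n$ vertices, $n+1$ edges and $2n-1$ stems; cutting the two special edges yields a plane tree $T$ on $n$ vertices with exactly two stems per vertex (after removing one stem $s_0$ from $v_0$ and recording $O(\log n)$ bits for the root and for the positions/attachment angles of the special stems); this $T$ is coded by a balanced-type binary word on $4n-2$ bits with $n-1$ ones, which by \cite[Lemma~7]{BGH03} compresses to $\log_2\binom{4n-2}{n-1} + o(n) \sim n\log_2(256/27)$ bits. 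So the only real content of the theorem beyond what precedes it is the \emph{``encoding and decoding being linear in $n$''} clause, and the proof should say this explicitly and then defer the details to Section~\ref{sec:complexity}.

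For the encoding direction, I would argue as follows. Given $G$ with the prescribed root angle $a_0$, first obtain \emph{some} HTC Schnyder wood: start from a crossing Schnyder wood (Theorem~\ref{th:crossing}, whose construction in \cite{GL13} I would invoke as linear, or cite the linear-time construction there) — this lands in the lattice $O(G)$. Then descend in the distributive lattice $(O(G),\leq_{f_0})$ of Theorem~\ref{th:lattice} to its minimum $D_{\min}$; by Lemma~\ref{lem:maximal} the minimum is characterized by the absence of clockwise non-empty $0$-homologous oriented subgraphs w.r.t.~$f_0$, and each lattice covering step reverses such a subgraph, so it suffices to repeatedly find and reverse clockwise $0$-homologous subgraphs until none remain. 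I would bound the total number of such reversals (the lattice has height polynomially — in fact linearly — bounded because flows differ by combinations of face-boundaries and the relevant potential is $O(n)$ per face, summing to $O(n)$), and argue each reversal plus the search for the next one can be amortized to $O(1)$, all this being the substance of Section~\ref{sec:complexity}. Once $D_{\min}$ is in hand, run \aps on $(G,a_0)$ — this is clearly linear since by Lemma~\ref{lem:ham} the walk $C$ in the angle graph is Hamiltonian, i.e.~visits $O(n)$ angles each once. From the output $U$ extract $T$ and the $O(\log n)$ side information, produce the $4n-2$-bit word by a single \ccw traversal, and finally apply the \cite{BGH03} compression, which is linear-time.

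For the decoding direction: from the compressed word recover the $4n-2$-bit word (linear, by \cite{BGH03}), parse it back into the rooted two-stems-per-vertex plane tree $T$ using the rule described above (a single left-to-right scan, creating a ``red~1'' stem whenever a vertex still lacks its quota — linear), reinsert $s_0$ and the two special stems using the recorded $O(\log n)$ bits to reconstruct the rooted unicellular map $U$, and then run the complete-closure procedure of Section~\ref{sec:close}. Crucially, as noted at the end of Section~\ref{sec:close}, one does \emph{not} need the root angle: walk twice around the face of $U$ in \ccw order and close every admissible triple encountered; by Lemmas~\ref{lem:stemstep} and~\ref{lem:firststem} this recovers exactly $G$, and a double traversal touches $O(n)$ angles and closes $2n-1$ triples in $O(1)$ amortized each, hence linear.

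I expect the main obstacle to be step (i), the linear-time computation of the minimal HTC Schnyder wood: unlike the planar case there is no canonical/shelling order available in higher genus (the text says so explicitly), so one must (a) get into the correct lattice $O(G)$ in linear time — relying on the linearity of the crossing-Schnyder-wood construction of \cite{GL13} — and (b) prove that descending to the lattice minimum costs only linear total time, which requires a good amortized data structure for detecting and reversing clockwise $0$-homologous oriented subgraphs together with a linear bound on the lattice height for the homology class in question. The compression/decompression of \cite{BGH03} is a black box I would simply cite, and everything downstream of having $D_{\min}$ is routine traversal; so the write-up here should be a short paragraph stating the theorem follows by combining Theorems~\ref{th:uni} and~\ref{th:recover} with the bit-count and with the linear-time claims, the latter being the object of Section~\ref{sec:complexity}.
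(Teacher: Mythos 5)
Your bit-count and decoding arguments match the paper: the tree extraction, the $4n-2$-bit word with $n-1$ ones compressed via \cite[Lemma~7]{BGH03}, and the root-free double traversal closing admissible triples are exactly what Sections~\ref{sec:coding} and~\ref{sec:close} do. The genuine gap is in your step (i)(a). You propose to enter the lattice $O(G)$ by invoking ``the linear-time construction'' of a crossing Schnyder wood from \cite{GL13} --- but no such linear construction is available, and the paper says so explicitly: the existence proof in \cite{GL13} proceeds by edge contraction while \emph{maintaining the crossing property}, and that maintenance is precisely what prevents it from being made linear. The paper takes a different route that you would need to reproduce: first build a \emph{not-necessarily-crossing} Schnyder wood in linear time by contractions (this itself requires care --- testing whether a candidate edge $uv$ is contractible amounts to testing whether certain triangles $uvx'$, $uvy'$ are separating, which is done in constant time per test by maintaining a homology basis $(B_1,B_2)$ and counting signed crossings; separating triangles are handled by recursing on their interior with a planar Schnyder wood), and only then convert this arbitrary Schnyder wood into a half-crossing, hence HTC, one using the ``middle-path''/``middle-cycle'' reversal technique of \cite[Section~9]{GKL15}. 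Without that second conversion step your algorithm never provably lands in the correct homology class $O(G)$ in linear time.

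Two smaller points. For step (i)(b), your plan to ``repeatedly find and reverse clockwise $0$-homologous subgraphs'' with an amortized height bound is left as a sketch; the paper instead uses the generic dual method of \cite{Meu}: repeatedly compute the set $X$ of faces having a directed path to $f_0$ in $D^*$ and reverse the directed cut $(X,F\setminus X)$ until $X=F$, which is the concrete linear procedure you would need to substitute for your hand-waved potential argument. Finally, the theorem is stated for an arbitrary toroidal triangulation, so the encoder must also \emph{find} a root angle $a_0$ not in the strict interior of a separating triangle in linear time; you take $a_0$ as given. The paper does this by following a monochromatic path of the Schnyder wood until it closes into a monochromatic cycle, which is necessarily non-contractible (Euler's formula) and hence avoids interiors of separating triangles, so any angle at a vertex of that cycle works.
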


\section{Linear complexity}
\label{sec:complexity}
In this section we show that the encoding method described in this
paper, that is encoding a toroidal triangulation via an unicellular
map and recovering the original triangulation, can be performed in
linear time.  The only difficulty lies in providing \aps with the
appropriate input it needs in order to apply
Theorem~\ref{th:uni}. Then clearly the execution of \apss, the
encoding phase and the recovering of the triangulation are linear.
Thus we have to show how one can find in linear time a root angle
$a_0$ that is not in the strict interior of a separating triangle, as
well as the minimal HTC Schnyder wood w.r.t.~the root face $f_0$.

Consider a toroidal triangulation $G$. Let us see how one can build a
Schnyder wood of $G$ in linear time.  The \emph{contraction} of a
non-loop-edge $e$ of $G$ is the operation consisting of continuously
contracting $e$ until merging its two ends, as shown on
Figure~\ref{fig:contraction}. Note that only one edge of each pair of
edges forming a contractible 2-cycle is preserved (edges $e_{wx}$ and
$e_{wy}$ on the figure).

\begin{figure}[h]
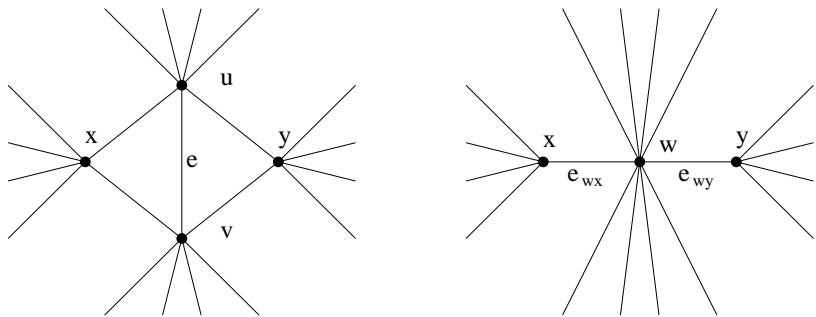

\center
\includegraphics[scale=0.4]{contraction-1.pdf}
 \hspace{3em} 
\includegraphics[scale=0.4]{contraction-2.pdf}
\caption{The contraction operation.}
\label{fig:contraction}
\end{figure}

An edge $uv$ is said to be \emph{contractible} if it is not a loop and
if it is not on a separating triangle (i.e. if after contracting $uv$
one obtains a triangulation that is still without contractible 1- or
2-cycles).  In~\cite{GL13} the existence of crossing Schnyder wood is
proved by contraction. Unfortunately this proof cannot easily be
transformed into a linear algorithm because of the crossing property
that has to be maintained during the contraction process. Nevertheless
we use contractions to obtain non-necessarily crossing Schnyder woods.
If the triangulation obtained after contracting a contractible edge
admits a Schnyder wood it is then easy to obtain a Schnyder wood of
$G$. The rules for decontracting an edge in the case of toroidal
triangulations are depicted on \cite[Figure~21]{GL13} where for each
case one can choose any of the proposed colorings.  For any toroidal
triangulation, one can find contractible edges until the toroidal map
has only one vertex (see~\cite{Moh96}). A Schnyder wood of the
toroidal map on one vertex is depicted on the right of
Figure~\ref{fig:orientation}. Thus one can obtain a Schnyder wood of
any toroidal triangulation by this process. Nevertheless, to maintain
linearity we have to be more precise since it is not trivial to find
contractible edges.

Consider an edge $uv$ of $G$ with incident faces $uvx$ and $vuy$ such
that these vertices appear in \cw order around the corresponding face
(so we are in the situation of Figure~\ref{fig:contraction}). If $u$
and $v$ have more common neighbors, then consider their second common
neighbor going clockwise around $u$ from $uv$ (the first one being
$x$, and the last being $y$) and call it $x'$. Call $y'$ their second
common neighbor going counterclockwise around $u$ from $uv$. Then
either $uvx'$ or $uvy'$ is a separating triangle or edge $uv$ is
contractible. We consider these two cases below:
\begin{itemize}
\item If $uv$ is contractible, then it is contracted and we apply the
  procedure recursively to obtain a Schnyder wood of the contracted
  graph. Then we update the Schnyder wood as described above. Note
  that this update is done in constant time.
\item If $uvx'$ (resp. $uvy'$) is a separating triangle, one can
  remove its interior, recursively obtain a toroidal Schnyder wood of
  the remaining toroidal triangulation, build a planar Schnyder wood
  of the planar triangulation inside $uvx'$ (resp. $uvy'$), and then
  superimpose the two (by eventually permuting the colors) to obtain a
  Schnyder wood of the whole graph. Note that computing a planar
  Schnyder wood can be done in linear time using a canonical ordering
  (see~\cite{Kan96}).
\end{itemize}

The difficulty here is to test if $uvx'$ or $uvy'$ are contractible
triangles.  For that purpose, one first need to compute a basis
$(B_1,B_2)$ for the homology.  Consider a spanning tree of the dual
map $G^*$. The map obtained from $G$ by removing those edges is
unicellular, and removing its treelike parts one obtains two cycles
$(B_1,B_2)$ (intersecting on a path with at least one vertex) that
form a basis for the homology. This can be computed in linear time for
$G$ and then updated in constant time when some edge is contracted or
when the interior of some separating triangle is removed.  Then a
closed walk $W$, given with an arbitrary orientation, is contractible
if and only if $W$ crosses $B_i$ from right to left as many times as
$W$ crosses $B_i$ from left to right, for $i\in\{1,2\}$.  This test is
linear in $|W|$ hence constant time for the triangles $uvx'$ and
$uvy'$. Vertex $u$ is fixed during the whole process so the total
running time to compute a Schnyder wood of $G$ is linear.

From this Schnyder wood, one can compute in linear time a root angle
$a_0$ not in the strict interior of a separating triangle. First note
that in a $3$-orientation of a toroidal triangulation, the edges that
are inside a separating triangle and that are incident to the three
vertices on the border are all oriented toward these three vertices
by Euler's formula. Thus an oriented non-contractible cycle cannot
enter in the interior of a separating triangle. Now follow any
oriented monochromatic path of the Schnyder wood and stop the first
time this path is back to a previously met vertex $v_0$. The end
of this path forms an oriented monochromatic cycle $C$ containing
$v_0$.  If $C$ is contractible then Euler's formula is violated in the
contractible region.  Thus $C$ is an oriented non-contractible cycle
and cannot contain some vertices that are in the interior of a
separating triangle. So $v_0$ is not in the interior of a separating
triangle and we can choose as root angle $a_0$ any angle incident to
$v_0$.

In~\cite[Section 9]{GKL15} it is proved how one can transform any
$3$-orientation (hence a Schnyder wood) of a toroidal triangulation
into an half-crossing (hence HTC) Schnyder wood.  The method consists
in computing a so called ``middle-path'' (a directed path where the
next edge chosen is the one leaving in the ``middle'') and reversing
some non-contractible ``middle-cycles''. Clearly the method is linear
even if not explicitly mentioned in~\cite{GKL15}.  Let $D_0$ be the
corresponding obtained orientation of $G$.

It remains to compute the minimal HTC Schnyder wood w.r.t.~the root
face $f_0$. There is a generic known
method~\cite{Meu} (see also~\cite[p.23]{UecPHD}) to compute in linear
time a minimal $\alpha$-orientation of a planar map as soon as an
$\alpha$-orientation is given. This method also works on oriented
surfaces and can be applied to obtain the minimal HTC Schnyder wood in
linear time. We explain the method briefly below.

It is much simpler to compute the minimal orientation $D_{\min}$
homologous to $D_0$ in a dual setting.  The first observation to make
is that two orientations $D_1, D_2$ of $G$ are homologous if and only
if there dual orientations $D^*_1, D^*_2$ of $G^*$ are equivalent up
to reversing some directed cuts. Furthermore $D_1\le_{f_0} D_2$ if and
only if $D^*_1$ can be obtained from $D^*_2$ by reversing directed
cuts oriented from the part containing $f_0$.  Let us compute
$D^*_{\min}$ which is the only orientation of $G^*$, obtained from
$D^*_0$ by reversing directed cuts, and without any directed cut
oriented from the part containing $f_0$.  For this, consider the
orientation $D^*_0$ of $G^*=(F,E^*)$ and compute the set
$X\subseteq F$ of vertices of $G^*$ that have an oriented path toward
$f_0$. Then $(X,F\setminus X)$ is a directed cut oriented from the
part containing $f_0$ that one can reverse. Then update the set of
vertices that can reach $f_0$ and go on until $X=F$.  It is not
difficult to see that this can be done in linear time. Thus we obtain
the minimal HTC Schnyder wood w.r.t.~$f_0$ in linear time.

\section{Bijection with rooted unicellular maps}
\label{sec:bijection}

Given a toroidal triangulation $G$ with a root angle $a_0$, we have
defined a unique associated orientation: the minimal HTC Schnyder wood
w.r.t.~the root face $f_0$. Suppose that $G$ is
oriented according to the minimal HTC Schnyder wood.  If $a_0$ is not
in the strict interior of a separating triangle then
Theorems~\ref{th:uni} and~\ref{th:recover} show that the execution of
\aps on $(G,a_0)$ gives a toroidal unicellular map with stems from
which one can recover the original triangulation. Thus there is a
bijection between toroidal triangulations rooted from an appropriate
angle and their image by \apss. The goal of this section is to
describe this image.

Recall from Section~\ref{sec:close} that the output of \aps on
$(G,a_0)$ is an element of $\mathcal U_{r,b}(n)$.  One may hope that
there is a bijection between toroidal triangulations rooted from an
appropriate angle and $\mathcal U_{r,b}(n)$ since this is how it works
in the planar case. Indeed, given a planar triangulation $G$, there is
a unique orientation of $G$ (the minimal Schnyder wood) on which
\apss, performed from an outer angle, outputs a tree covering all the
edges of the graph. In the toroidal case, things are more complicated
since the behavior of \aps on minimal HTC Schnyder woods does not
characterize such orientations.

Figure~\ref{fig:twounicellular} gives an example of two
(non-homologous) orientations of the same triangulation that are both
minimal w.r.t.~the same root face. For these two orientations, the
execution of \aps from the same root angle gives two different
elements of $\mathcal U_{r,b}(2)$ (from which the original
triangulation can be recovered by the method of
Theorem~\ref{th:recover}).  Thus we have to exhibit a particular
property of HTC Schnyder woods that can be used to characterize which
particular subset of $\mathcal U_{r,b}(n)$ is in bijection with
appropriately rooted toroidal triangulations.

\begin{figure}[h!]
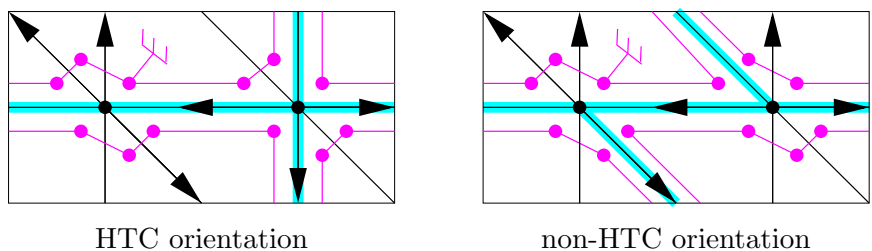

\center
\begin{tabular}[c]{cc}
\includegraphics[scale=0.5]{contre-exemple-5.pdf}  \ \ & \ \
\includegraphics[scale=0.5]{contre-exemple-4.pdf} \\
HTC orientation \ \ &\ \ non-HTC orientation \\
\end{tabular}
\caption{A graph that can be represented by two different unicellular
  maps.}
\label{fig:twounicellular}
\end{figure}

For that purpose we use the following definition of $\gamma$
introduced in~\cite{GKL15}.  Consider a particular orientation of
$G$. Let $C$ be a cycle that is given with an arbitrary direction ($C$ is
not necessarily a directed cycle). Then $\gamma(C)$ is defined by:
$$\gamma (C) \ \ =\ \ \#\ \text{edges leaving $C$ on its right} \ \ -\  \
\#\ \text{edges leaving $C$ on its left}$$

By the Schnyder property, it is clear that in a toroidal Schnyder
wood, a monochromatic cycle $C$ always satisfies $\gamma(C)=0$.
Consider a crossing Schnyder wood of $G$ and $C_1,C_2$ two
monochromatic cycles of different colors. Thus we have
$\gamma(C_{1})= \gamma(C_{2})= 0$.  By \cite[Theorem~7]{GL13}, the two
cycles $C_1,C_2$ are non-contractible and non-homologous, thus they
form a basis for the homology.  While returning a $0$-homologous
oriented subgraph, the value of $\gamma$ on a given cycle does not
change. Thus any HTC Schnyder wood also satisfies
$\gamma(C_{1})= \gamma(C_{2})= 0$.  Moreover it is proved
in~\cite[Lemma~18]{GKL15} that if a $3$-orientation of a toroidal
triangulation satisfies $\gamma$ equals $0$ for two cycles
forming a basis for the homology, then $\gamma$ equals $0$ for any
non-contractible cycle. Thus any HTC Schnyder wood satisfies $\gamma$
equals $0$ for any non-contractible cycle. We call this property the
$\gamma_0$ property. Note that, for a $3$-orientation, it is
sufficient to satisfy $\gamma$ equals $0$ on any two cycles forming
a basis for the homology to have the $\gamma_0$ property.

Actually the $\gamma_0$ property characterizes the $3$-orientations
that are HTC Schnyder woods. Indeed a consequence of~\cite[Theorem~5
and Lemma~18]{GKL15} is that if two $3$-orientations both satisfy the
\emph{$\gamma_0$} property, then they are homologous to each other and
thus HTC.  Note that for the $3$-orientation on the right of
Figure~\ref{fig:twounicellular}, we have $\gamma$ equals $\pm 2$ for
the horizontal cycle and this explain why this orientation is not HTC
(one can find similar arguments for previous examples of non-HTC
Schnyder woods presented in this paper, see
Figures~\ref{fig:noncrossing} and~\ref{fig:gamma0glue}).

Let us translate this $\gamma_0$ property on $\mathcal U_{r}(n)$.
Consider an element $U$ of $\mathcal U_{r}(n)$ whose edges and stems
are oriented w.r.t.~the root angle as follows: the stems are all
outgoing, and while walking \cw around the unique face of $U$ from
$a_0$, the first time an edge is met, it is oriented \ccw w.r.t.~ the
face of $U$. Then one can compute $\gamma$ on the cycles of $U$ (edges
and stems count).  We say that an unicellular map of
$\mathcal U_{r}(n)$ satisfies the $\gamma_0$ property if $\gamma$
equals zero on its (non-contractible) cycles. Let us call
$\mathcal U_{r,b,\gamma_0}(n)$ the set of elements of
$\mathcal U_{r,b}(n)$ satisfying the $\gamma_0$ property. So the
output of \aps given by Theorem~\ref{th:uni} is an element of
$\mathcal U_{r,b,\gamma_0}(n)$.

Let $\mc T_r(n)$ be the set of toroidal triangulations on $n$ vertices
rooted at an angle that is not in the \cw interior of a separating
triangle. Then we
have the following bijection:

 \begin{theorem}
\label{th:bij1}
There is a bijection between $\mc T_r(n)$ and
$\mathcal U_{r,b,\gamma_0}(n)$.
\end{theorem}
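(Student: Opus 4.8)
The plan is to set up the bijection by giving two maps, in each direction, and showing they are mutually inverse. One direction is already assembled from the earlier sections: given $G\in\mc T_r(n)$, orient $G$ by the minimal HTC Schnyder wood w.r.t.~the root face $f_0$ containing $a_0$, run \aps on $(G,a_0)$, and output the resulting unicellular map $U$. By Theorem~\ref{th:uni} the output is a toroidal unicellular map covering all the edges of $G$, by the discussion of Section~\ref{sec:close} it lies in $\mathcal U_{r,b}(n)$, and by the $\gamma_0$ discussion just above the statement it lies in $\mathcal U_{r,b,\gamma_0}(n)$. So this is a well-defined map $\Phi:\mc T_r(n)\to\mathcal U_{r,b,\gamma_0}(n)$. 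For the reverse map $\Psi$, I would take $U\in\mathcal U_{r,b,\gamma_0}(n)$ and apply the complete closure procedure of Section~\ref{sec:close}: since $U\in\mathcal U_{r,b}(n)$, the closure is well defined and the balanced property guarantees (via Lemma~\ref{lem:stemright}-type reasoning, or directly since no admissible triple wraps the root angle) that the reattachments never interfere, producing a toroidal triangulation $G$ on $n$ vertices together with the distinguished angle $a_0$. One must check that $a_0$ is not in the \cw interior of a separating triangle, i.e. that $G$ with root angle $a_0$ genuinely lies in $\mc T_r(n)$; this follows because a separating triangle containing $a_0$ in its strict interior would force, by the Euler-formula counting of Lemma~\ref{lem:noD}, a clockwise $0$-homologous subgraph incompatible with the orientation read off from $U$.

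The heart of the argument is that $\Psi\circ\Phi=\mathrm{id}$ and $\Phi\circ\Psi=\mathrm{id}$. The first identity is essentially Lemma~\ref{lem:stemstep} (the complete closure of $\Phi(G)$ is $G$), together with the observation that the distinguished angle is tracked correctly through both operations. For the second identity, start from $U\in\mathcal U_{r,b,\gamma_0}(n)$, form $G=\Psi(U)$, and one must show that the minimal HTC Schnyder wood of $G$ w.r.t.~$f_0$ is exactly the orientation of $G$ induced by the orientation of $U$ (edges outgoing as prescribed, first-met edges oriented \ccw around the face), and that running \aps on this oriented $G$ recovers $U$. The fact that the induced orientation is a $3$-orientation and satisfies the $\gamma_0$ property (inherited from $U$, since $\gamma$ is computed the same way on the non-contractible cycles of $U$ and of $G$, and closure of admissible triples only adds edges that are "interior" and do not change $\gamma$ on a non-contractible cycle) shows, by the cited consequence of \cite[Theorem~5 and Lemma~18]{GKL15}, that it is an HTC Schnyder wood. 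That it is the \emph{minimal} one w.r.t.~$f_0$ follows from Lemma~\ref{lem:maximal}: a clockwise non-empty $0$-homologous subgraph w.r.t.~$f_0$ would be a union of \cw facial walks not containing $f_0$, hence an unvisited region, contradicting that every angle of $G$ is visited (which holds because \aps run on this orientation from $a_0$ retraces the closure steps in reverse and visits all $2m$ angles). Finally, running \aps on $(G,a_0)$ with this orientation reproduces $U$ because the closure procedure and \aps are inverse bijections on the level of the Hamiltonian cycle in the angle graph — each stem of $U$ corresponds to a Case~2 step and each tree/special edge to a Case~1 step, in the order dictated by the common Hamiltonian cycle.

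The main obstacle I expect is the uniqueness/minimality half of $\Phi\circ\Psi=\mathrm{id}$: one must be sure that the orientation read off from an arbitrary $\gamma_0$-unicellular map is not merely \emph{some} HTC Schnyder wood of the reconstructed $G$ but precisely the one minimal w.r.t.~$f_0$. This is where the balanced hypothesis and the choice of $f_0$ as the root face really earn their keep: balancedness is exactly what prevents the reconstructed orientation from having a clockwise $0$-homologous subgraph w.r.t.~$f_0$, because such a subgraph would correspond to a collection of faces all of whose angles are "cut off" from the root angle by the closure, contradicting that the complete closure sweeps the entire face of $U$ starting at $a_0$. Making this contradiction precise — translating "clockwise $0$-homologous subgraph w.r.t.~$f_0$" into a statement about the closure order and the position of $a_0$ — is the one genuinely delicate point; everything else is bookkeeping that combines Lemmas~\ref{lem:stemright}, \ref{lem:stemstep}, \ref{lem:firststem} with Theorems~\ref{th:uni} and~\ref{th:recover} and the $\gamma_0$ characterization.
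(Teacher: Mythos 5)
Your overall architecture is exactly the paper's: the forward map is \aps on the minimal HTC Schnyder wood, injectivity comes from Theorem~\ref{th:recover}, the reverse map is the complete closure, and the $\gamma_0$ property of $U$ transfers to two homology-basis cycles of $G$ to certify that the induced orientation $D$ is HTC. However, the step you yourself flag as delicate --- showing $D$ is the \emph{minimal} HTC Schnyder wood w.r.t.~$f_0$ --- is where your proposed mechanism fails. You argue that a clockwise non-empty $0$-homologous subgraph $T$ w.r.t.~$f_0$ would correspond to faces ``cut off'' from the root angle, contradicting that every angle is visited. That implication is false: the complete closure (equivalently, the run of \aps that retraces it) visits every angle of $G$ no matter what, and the presence of such a $T$ in $D$ is simply not detectable by reachability or by Hamiltonicity of the angle-graph cycle. (Indeed, Lemma~\ref{lem:ham} only gives ``unvisited angles $\Rightarrow$ clockwise $0$-homologous frontier''; you need the converse, which does not hold.)

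The paper's actual argument is finer and genuinely uses both unicellularity and balancedness. Write $\phi(T)=-\sum_{F\in\mc{F}'}\lambda_F\phi(F)$ with $\lambda\geq 0$, let $X_i=\{F:\lambda_F\geq i\}$ (with $\lambda_{F_0}=0$), and let $T_1$ be the frontier between $X_1$ and $X_0$; then $T_1$ is a non-empty $0$-homologous oriented subgraph of $D$ whose edges all have $X_1$ on their right and $X_0$ (which contains $f_0$) on their left. Since $T_1$ contains a contractible cycle and the edges of the unicellular map $U$ support no contractible cycle, at least one edge of $T_1$ must come from a \emph{stem} of $U$. Taking the last such stem reattached during the closure, the special face at that step contains the root angle, hence $f_0$, hence lies on the $X_0$ side, i.e.\ on the \emph{left} of that stem --- which is exactly a violation of the balanced property. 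So the contradiction is with balancedness via this layered decomposition, not with angle coverage; without it, surjectivity onto the \emph{minimal} orientation (and hence that $\Phi\circ\Psi=\mathrm{id}$) is unproved. A smaller repairable slip: your justification that $a_0$ is not in the strict interior of a separating triangle should go through accessibility (in a $3$-orientation all interior edges incident to the border of a separating triangle enter the border, so \aps started inside cannot escape and the output could not be unicellular), not through Lemma~\ref{lem:noD}.
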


\begin{proof}
  Consider the mapping $g$ that associates to an element of
  $\mc T_r(n)$, the output of \aps executed on the minimal HTC
  Schnyder wood w.r.t.~the root face. By the above discussion the
  image of $g$ is in $\mathcal U_{r,b,\gamma_0}(n)$ and $g$ is
  injective since one can recover the original triangulation from its
  image by Theorem~\ref{th:recover}.

  Conversely, given an element $U$ of $\mathcal U_{r,b,\gamma_0}(n)$
  with root angle $a_0$, one can build a toroidal map $G$ by the
  complete closure procedure described in Section~\ref{sec:close}. The
  number of stems and edges of $U$ implies that $G$ is a
  triangulation. Recall that $a_0$ defines an orientation on the edges
  and stems of $U$.  Consider the orientation $D$ of $G$ induced by
  this orientation. Since $U$ is balanced, the execution of \aps on
  $(G,a_0)$ corresponds to the cycle in the angle graph of $U$
  obtained by starting from the root angle and walking \cw in the face
  of $U$. Thus the output of \aps executed on $(G,a_0)$ is $U$.  It
  remains to show that $G$ is appropriately rooted and that $D$
  corresponds to the minimal HTC Schnyder wood w.r.t.~ this root, then
  $g$ will be surjective.

  First note that by definition of $\mathcal U_{r}(n)$, the
  orientation $D$ is a $3$-orientation.

  Suppose by contradiction that $a_0$ is in the strict interior of a
  separating triangle. Then, since we are considering a
  $3$-orientation, by Euler's formula, the edges in the interior of
  this triangle and incident to its border are all entering the
  border. So \aps started from the strict interior cannot visit the
  vertices on the border of the triangle and outside. Thus the output
  of \aps is not a toroidal unicellular map, a contradiction. So $a_0$
  is not in the strict interior of a separating triangle.

  The $\gamma_0$ property of $U$ implies that $\gamma$ equals zero on
  two cycles of $U$. Hence these two cycles considered in $G$ also
  satisfy $\gamma$ equals $0$ and form a basis for the homology. So
  $D$ is an HTC Schnyder wood.
  
  Suppose by contradiction that $D$ is not minimal. Then, by
  Lemma~\ref{lem:maximal}, it contains a clockwise (non-empty)
  $0$-homologous oriented subgraph w.r.t.~$f_0$. With the notations of
  Section~\ref{sec:schnyderwoods}, let $T$ be such a subgraph with
  $\phi(T)=-\sum_{F\in\mc{F}'}\lambda_F\phi(F)$, with
  $\lambda\in\mathbb{N}^{|\mc{F}'|}$.  Let $\lambda_{F_0}=0$, and
  $\lambda_{\max}=\max_{F\in\mathcal{F}}\lambda_F$.  For
  $0\leq i\leq \lambda_{\max}$, let
  $X_{i}=\{F\in\mathcal{F}\,|\,\lambda_F\geq i\}$.  For
  $1\leq i \leq \lambda_{\max}$, let $T_i$ be the oriented subgraph
  such that $\phi(T_i)=-\sum_{F\in X_i}\phi(F)$.  Then we have
  $\phi(T)=\sum_{1\leq i \leq \lambda_{\max}} \phi(T_i)$.  Since $T$
  is an oriented subgraph, we have
  $\phi(T)\in\{-1,0,1\}^{|E(G)|}$. Thus for any edge of ${G}$,
  incident to faces $F_1$ and $F_2$, we have
  $(\lambda_{F_1}-\lambda_{F_2})\in\{-1,0,1\}$.  So, for
  $1\leq i\leq \lambda_{\max}$, the oriented graph $T_i$ is the
  frontier between the faces with $\lambda$ value equal to $i$ and
  $i-1$.  So all the $T_i$ are edge disjoint and are oriented
  subgraphs of $D$.  Since $T$ is non-empty, we have
  $\lambda_{\max}\geq 1$, and $T_1$ is non-empty.  All the edges of
  $T_1$ have a face of $X_1$ on their right and a face of $X_0$ on
  their left.  Since $U$ is an unicellular map, and $T_1$ is a
  (non-empty) $0$-homologous oriented subgraph, at least one edge of
  $T_1$ corresponds to a stem of $U$. Let $s$ be the last stem of $U$
  corresponding to an edge of $T_1$ that is reattached by the complete
  closure procedure. Consider the step where $s$ is reattached.  As
  the root angle (and thus $f_0$) is in the special face (see the
  terminology of Section~\ref{sec:close}), the special face is in the
  region defined by $X_0$. Thus it is on the left of $s$ when it is
  reattached. This contradicts the fact that $U$ is balanced.  Thus
  $D$ is the minimal HTC Schnyder wood w.r.t.~$f_0$.
\end{proof}
\section{The lattice of HTC Schnyder woods}
\label{sec:flip}

In this section, we push further the study of HTC Schnyder woods in
order to remove the root and the balanced property of the unicellular
maps considered in Theorem~\ref{th:bij1} and obtain a simplified
bijection in Theorem~\ref{th:bij2} of Section~\ref{sec:bij2}.

Consider a toroidal triangulation $G$ given with a crossing Schnyder
wood. Let $D_0$ be the corresponding $3$-orientation of $G$. Let $f_0$
be any face of $G$. Recall from Section~\ref{sec:schnyderwoods} that
$O(G)$ denotes the set of all the orientations of $G$ that are
homologous to $D_0$. The elements of $O(G)$ are the HTC Schnyder woods
of $G$ and $(O(G),\leq_{f_0})$ is a distributive lattice.

We need to reduce the graph ${G}$. We call an edge of ${G}$
\emph{rigid} w.r.t.~$O(G)$ if it has the same orientation in all the
elements of $O(G)$. Rigid edges do not play a role for the structure
of $O(G)$. We delete them from ${G}$ and call the obtained embedded
graph $\widetilde{G}$. Note that this graph is embedded but it is not
necessarily a map, as some faces may not be homeomorphic to open disks.
Note also that $\widetilde{G}$ might be empty
if all the edges are rigid, i.e. $|O(G)|=1$ and $\widetilde{G}$ has no
edge but a unique face that is all the surface.

\begin{lemma}
\label{lem:non-rigid}
Given an edge $e$ of $G$, the following are equivalent:
\begin{enumerate}
\item $e$ is non-rigid
\item  $e$ is contained in a
$0$-homologous oriented subgraph of $D_0$
\item  $e$ is contained in a
$0$-homologous oriented subgraph of any element of $O(G)$
\end{enumerate}
\end{lemma}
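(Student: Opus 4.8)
The plan is to prove the cycle of implications $(1)\Rightarrow(2)\Rightarrow(3)\Rightarrow(1)$, with the equivalence of $(2)$ and $(3)$ being essentially a transport argument across homologous orientations, and $(1)\Leftrightarrow(2)$ being the substantive content.

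First I would prove $(3)\Rightarrow(1)$ and $(2)\Rightarrow(1)$ together, since they are the easy directions. Suppose $e$ is contained in a $0$-homologous oriented subgraph $T$ of some orientation $D\in O(G)$. Reversing all edges of $T$ produces an orientation $D'$ that is still homologous to $D$ (we changed the flow by $\phi(T)\in\mathbb F$), hence $D'\in O(G)$; and $D'$ orients $e$ oppositely to $D$. So $e$ is non-rigid. This handles $(3)\Rightarrow(1)$ by specializing $D$ to an arbitrary element, and $(2)\Rightarrow(1)$ by specializing $D=D_0$.

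Next I would prove $(1)\Rightarrow(2)$ and, more generally, $(1)\Rightarrow(3)$. Suppose $e$ is non-rigid, so there is $D\in O(G)$ orienting $e$ opposite to $D_0$ (or opposite to whichever fixed element we want in $(3)$). Consider $D\setminus D_0$, the subgraph of $D$ consisting of edges oriented differently than in $D_0$; its characteristic flow is $\phi(D)-\phi(D_0)$, which lies in $\mathbb F$ because $D$ and $D_0$ are homologous, so $D\setminus D_0$ is a $0$-homologous flow and $e$ is one of its edges. The one thing to check is that this flow, viewed as a subgraph, is genuinely an \emph{oriented} subgraph in the sense used in the paper, i.e. that its characteristic flow takes values in $\{-1,0,1\}$: this holds because $D$ and $D_0$ are orientations of the \emph{same} underlying graph, so on each edge the difference $\phi(D)_e-\phi(D_0)_e$ is $0$ (same orientation) or $\pm 2$... — here I would be careful: with the $\pm1$/$\mp1$ convention the difference on a flipped edge is $\pm2$, so one should instead take $T$ to be the set of edges flipped between $D_0$ and $D$, oriented as in $D$; then $\phi(T)$ records each flipped edge once, $\phi(T)=\tfrac12(\phi(D)-\phi(D_0))$ up to the convention, and one argues directly that the \emph{symmetric difference of two orientations of the same graph} is $0$-homologous whenever the two orientations are homologous. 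This is the standard fact underlying the lattice structure, and I expect it to be where the only real care is needed. For the generalization to $(1)\Rightarrow(3)$, given an arbitrary $D^\star\in O(G)$ and a non-rigid edge $e$, pick $D\in O(G)$ orienting $e$ opposite to $D^\star$ (possible: if every element of $O(G)$ agreed with $D^\star$ on $e$, then $e$ would be rigid), and apply the same argument with $D^\star$ in place of $D_0$.

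Finally $(2)\Rightarrow(3)$: if $e$ lies in a $0$-homologous oriented subgraph $T_0$ of $D_0$, and $D^\star\in O(G)$ is arbitrary, I want a $0$-homologous oriented subgraph of $D^\star$ containing $e$. By $(2)\Rightarrow(1)$ already proved, $e$ is non-rigid, and then by $(1)\Rightarrow(3)$ just proved, $e$ lies in a $0$-homologous oriented subgraph of $D^\star$. So the cycle closes and all three statements are equivalent. The main obstacle, as flagged, is purely bookkeeping: making sure the ``difference of two homologous orientations'' is correctly packaged as a $0$-homologous \emph{oriented subgraph} under the paper's flow conventions; once that is pinned down, every implication is a one- or two-line argument.
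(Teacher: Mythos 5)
Your proof is correct and takes essentially the same route as the paper's: non-rigidity is converted into a $0$-homologous oriented subgraph by forming the difference $D\setminus D'$ of two homologous orientations that disagree on $e$, and conversely reversing such a subgraph yields another element of $O(G)$ disagreeing on $e$ (the paper chains $(1)\Rightarrow(3)\Rightarrow(2)\Rightarrow(1)$, which is only a cosmetic reorganization of your implications). The factor-of-two bookkeeping you flag is resolved exactly as you suggest, since the paper's $D\setminus D'$ is by definition the set of flipped edges each recorded once with its orientation in $D$.
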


\begin{proof}
  $(1 \Longrightarrow 3)$ Let $D\in O(G)$. If $e$ is non-rigid,
  then it has a different orientation in two elements $D',D''$ of
  $O(G)$.  Then we can assume by symmetry that $e$ has a different
  orientation in $D$ and $D'$ (otherwise in $D$ and $D''$ by
  symmetry). Since $D,D'$ are homologous to $D_0$, they are also
  homologous to each other. So $T=D\setminus D'$ is a $0$-homologous
  oriented subgraph of $D$ that contains $e$.

 $(3 \Longrightarrow 2)$ Trivial since $D_0\in O(G)$

  $(2 \Longrightarrow 1)$ If an edge $e$ is contained in a $0$-homologous
  oriented subgraph $T$ of $D_0$. Then let $D$ be the element of
  $O(G)$ such that $T=D_0\setminus D$. Clearly $e$ is oriented
  differently in $D$ and $D_0$, thus it is non-rigid.
\end{proof}

By Lemma~\ref{lem:non-rigid}, one can build $\widetilde{G}$ by keeping
only the edges that are contained in a $0$-homologous oriented
subgraph of $D_0$.  Note that this implies that all the edges of
$\widetilde{G}$ are incident to two distinct faces of $\widetilde{G}$.
Denote by $\widetilde{\mathcal{F}}$ the set of oriented subgraphs of
$\widetilde{G}$ corresponding to the boundaries of faces of
$\widetilde{G}$ considered counterclockwise.  Let $\widetilde{f}_0$ be
the face of $\widetilde{G}$ containing $f_0$ and $\widetilde{F}_0$ be
the element of $\widetilde{\mathcal{F}}$ corresponding to the boundary
of $\widetilde{f}_0$. Let
$\widetilde{\mathcal{F}}'=\widetilde{\mathcal{F}}\setminus
\widetilde{F}_0$.
The proof of~\cite[Theorem~7]{GKL15} shows that the elements of
$\widetilde{\mathcal{F}}'$ are sufficient to generate the entire
lattice $(O(G),\leq_{f_0})$, i.e. two elements $D,D'$ of $O(G)$ are
linked in the Hasse diagram of the lattice, with $D\leq_{f_0}D'$, if
and only if $D\setminus D' \in \widetilde{\mathcal{F}}'$.

\begin{lemma}
\label{lem:necessary}
For every element $ \widetilde{F}\in \widetilde{\mathcal{F}}$ there
exists $D$ in $O(G)$ such that $\widetilde{F}$ is an oriented subgraph
of $D$. 
\end{lemma}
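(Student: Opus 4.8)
The claim to prove is Lemma~\ref{lem:necessary}: every face-boundary $\widetilde F \in \widetilde{\mathcal F}$ of the reduced graph $\widetilde G$ is realized as an oriented subgraph of some HTC Schnyder wood $D \in O(G)$. The plan is to argue separately for $\widetilde F_0$ and for the elements of $\widetilde{\mathcal F}'$, and in both cases to use the lattice structure of $(O(G), \leq_{f_0})$ from Theorem~\ref{th:lattice} together with the generation statement recalled just above (two elements $D \leq_{f_0} D'$ are covered in the Hasse diagram iff $D \setminus D' \in \widetilde{\mathcal F}'$).

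\emph{Step 1: the elements of $\widetilde{\mathcal F}'$.} Fix $\widetilde F \in \widetilde{\mathcal F}'$. Since $\widetilde F$ is the counterclockwise boundary of a face of $\widetilde G$ distinct from $\widetilde f_0$, its characteristic flow is a single clockwise $0$-homologous oriented subgraph w.r.t.~$f_0$ when read with the reverse orientation; concretely, I want to produce $D, D' \in O(G)$ with $D \leq_{f_0} D'$, covering in the Hasse diagram, and $D' \setminus D = \widetilde F$. Take $D'$ to be any element of $O(G)$ in which all edges of $\widetilde F$ are oriented as in $\widetilde F$ — the point is that such a $D'$ exists because reversing $\widetilde F$ from the \emph{minimal} element $D_{\min}$ produces a valid element of $O(G)$ (it is $0$-homologous, so the result is still in $O(G)$), and the minimal element has no clockwise $0$-homologous subgraph, so $\widetilde F$ as a clockwise subgraph cannot already be ``present reversed'' at $D_{\min}$; hence $D_{\min}$ contains $\widetilde F$ oriented counterclockwise along at least... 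Here I should be careful. The cleaner route: by the generation statement, for each $\widetilde F \in \widetilde{\mathcal F}'$ there is at least one Hasse edge of the lattice whose difference is exactly $\widetilde F$ (this is essentially what it means for $\widetilde{\mathcal F}'$ to generate the lattice — every generator is used on some cover). For such a cover $D \leq_{f_0} D'$ with $D' \setminus D = \widetilde F$, the larger orientation $D'$ contains all of $\widetilde F$ with the orientation of $\widetilde F$, which is what we want.

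\emph{Step 2: the boundary $\widetilde F_0$.} This is the case I expect to be the main obstacle, since $\widetilde F_0$ is \emph{not} in the generating family $\widetilde{\mathcal F}'$, so the Hasse-edge argument does not apply directly. The key observation is homological: the sum of all counterclockwise facial walks of $\widetilde G$ is $0$-homologous and in fact sums to the zero flow (every edge of $\widetilde G$ borders two distinct faces, once forward and once backward), so $\phi(\widetilde F_0) = -\sum_{\widetilde F \in \widetilde{\mathcal F}'} \phi(\widetilde F)$. Thus $\widetilde F_0$, read with reversed orientation, is a \emph{counterclockwise} $0$-homologous oriented subgraph w.r.t.~$f_0$ — equivalently $\widetilde F_0$ itself is clockwise w.r.t.~$\widetilde f_0$ viewed as the ``outer'' face, i.e.\ it plays for $\widetilde f_0$ the role that each $\widetilde F \in \widetilde{\mathcal F}'$ plays for $f_0$. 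So I would invoke the symmetry of the construction: running the same lattice argument with $\widetilde f_0$ chosen as the special face instead of $f_0$ (the lattice $(O(G), \leq_{\widetilde f_0})$ is again distributive, with $\widetilde F_0$ now appearing among the generators), one gets a cover in that lattice whose difference is $\widetilde F_0$, and its larger element $D$ contains $\widetilde F_0$ oriented as desired. Since $O(G)$ is the same set regardless of the choice of special face, this $D \in O(G)$ works.

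\emph{Wrap-up and the hard point.} Combining Steps 1 and 2 covers all of $\widetilde{\mathcal F}$, proving the lemma. The delicate step is Step 2, namely justifying that $\widetilde F_0$ (which is excluded from the natural generating set) is nonetheless realized: the resolution is the observation that the choice of special face is arbitrary and $O(G)$ does not depend on it, so any face boundary can be turned into a ``legitimate generator'' by re-centering the lattice. I would also double-check the subtlety that $\widetilde G$ need not be a map and $\widetilde F_0$ need not be a simple cycle; but since every edge of $\widetilde G$ is incident to two distinct faces of $\widetilde G$ (noted after Lemma~\ref{lem:non-rigid}), the facial walks behave well enough for the homological cancellation $\sum_{\widetilde F \in \widetilde{\mathcal F}} \phi(\widetilde F) = 0$ to hold, which is all that Step 2 needs.
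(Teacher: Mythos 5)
Your Step~1 is circular, and this is the whole content of the lemma. The generation statement recalled just before the lemma says only that a pair $D\leq_{f_0}D'$ is a cover of the Hasse diagram if and only if $D\setminus D'\in\widetilde{\mathcal{F}}'$; it does \emph{not} say that every element of $\widetilde{\mathcal{F}}'$ actually occurs as the difference along some cover. For a fixed $\widetilde{F}\in\widetilde{\mathcal{F}}'$, the existence of a pair with $D\setminus D'=\widetilde{F}$ is exactly equivalent to the existence of some $D\in O(G)$ containing $\widetilde{F}$ as an oriented subgraph (reverse $\widetilde{F}$ in that $D$ to get the other endpoint), i.e.\ it is the statement being proved. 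The paper makes this explicit right after the lemma: the fact that ``the elements of $\widetilde{\mathcal{F}}'$ form a \emph{minimal} set that generates the lattice'' is \emph{deduced from} Lemma~\ref{lem:necessary}, not available beforehand. Non-rigidity only guarantees that each individual edge of $\widetilde{F}$ can be flipped in some element of $O(G)$; that all of them can be made to agree with $\widetilde{F}$ \emph{simultaneously} in a single orientation is the nontrivial point, and your proposal never supplies an argument for it. (A secondary slip: with the paper's convention $D\leq_{f_0}D'$ iff $D\setminus D'$ is counterclockwise, it is the \emph{smaller} element of a cover that contains $\widetilde{F}$ oriented counterclockwise, not the larger one.)

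For comparison, the paper's proof is an extremal argument that treats all of $\widetilde{\mathcal{F}}$ uniformly, so your Step~2 re-centering device is not needed (though the underlying observation that $O(G)$ is independent of the choice of $f_0$ is correct). One takes $D\in O(G)$ maximizing the number of edges of $\widetilde{F}$ oriented as in $\widetilde{F}$; if some edge $e$ disagrees, non-rigidity gives $D'$ in which $e$ is reversed, and the decomposition result of~\cite[Claim~1 of the proof of Theorem~7]{GKL15} splits $T=D\setminus D'$ into edge-disjoint oriented subgraphs $T_i$, each with $\phi(T_i)=\epsilon_i\sum_{\widetilde{F}'\in\widetilde{X}_i}\phi(\widetilde{F}')$ for some $\widetilde{X}_i\subseteq\widetilde{\mathcal{F}}'$. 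Reversing the piece $T_1$ containing $e$ stays in $O(G)$ and, because $T_1$ meets $\widetilde{F}$ only in edges oriented clockwise along it, strictly increases the count --- a contradiction. Some such structural input about how differences of homologous orientations decompose along faces of $\widetilde{G}$ is essential; the lattice formalism alone does not yield the lemma.
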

\begin{proof}
  Let $ \widetilde{F}\in \widetilde{\mathcal{F}}$. Let $D$ be an
  element of $O(G)$ that maximize the number of edges of
  $\widetilde{F}$ that have the same orientation in $\widetilde{F}$
  and $D$ (i.e. that maximize the number of edges of $D$ oriented
  counterclockwise on the border of the face of $\widetilde{G}$
  corresponding to $\widetilde{F}$).  Suppose by contradiction that
  there is an edge $e$ of $\widetilde{F}$ that does not have the same
  orientation in $\widetilde{F}$ and $D$. Edge $e$ is in
  $\widetilde{G}$ so it is non-rigid.  Let $D'\in O(G)$ such that $e$
  is oriented differently in $D$ and $D'$. Let $T=D\setminus D'$.
  By~\cite[Claim~1 of the proof of Theorem~7]{GKL15}, there exists
  edge-disjoint oriented subgraphs $T_1,\ldots,T_k$ of $D$ such that
  $\phi(T)=\sum_{1\leq i \leq k} \phi(T_i)$, and, for
  $1\leq i \leq k$, there exists
  $\widetilde{X}_i\subseteq \widetilde{{\mathcal{F}}}'$ and
  $\epsilon_i\in\{-1,1\}$ such that
  $\phi(T_i)=\epsilon_i\sum_{\widetilde{F}'\in \widetilde{X}_i}
  \phi(\widetilde{F}')$.
  W.l.o.g., we can assume that $e$ is an edge of $T_1$.  Let $D''$ be
  the element of $O(G)$ such that $T_1=D\setminus D''$.  The oriented
  subgraph $T_1$ intersects $\widetilde{F}$ only on edges of $D$
  oriented clockwise on the border of $\widetilde{F}$. So $D''$
  contains strictly more edges oriented counterclockwise on the border
  of the face $\widetilde{F}$ than $D$, a contradiction.  So all the
  edges of $\widetilde{F}$ have the same orientation in $D$.  So
  $\widetilde{F}$ is a $0$-homologous oriented subgraph of $D$.
\end{proof}

By Lemma~\ref{lem:necessary}, for every element
$ \widetilde{F}\in \widetilde{\mathcal{F}}'$ there exists $D$ in
$O(G)$ such that $\widetilde{F}$ is an oriented subgraph of $D$. Thus
there exists $D'$ such that $\widetilde{F}=D\setminus D'$ and $D,D'$
are linked in the Hasse diagram of the lattice.  Thus the elements of
$\widetilde{\mathcal{F}}'$ form a minimal set that generates the
lattice.

Let $D_{\max}$ (resp. $D_{\min}$) be the maximal (resp. minimal)
element of $(O(G),\leq_{f_0})$. 

\begin{lemma}
\label{lem:maxtilde}
  $\widetilde{F}_0$ (resp. $-\widetilde{F}_0$) is an oriented subgraph
  of $D_{\max}$ (resp. $D_{\min}$). 
\end{lemma}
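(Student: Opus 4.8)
I want to show that $\widetilde{F}_0$ is an oriented subgraph of $D_{\max}$ (the statement for $-\widetilde{F}_0$ and $D_{\min}$ is symmetric, obtained by exchanging the roles of clockwise/counterclockwise, equivalently by reversing all orientations). The natural strategy is to argue by contradiction: suppose $\widetilde{F}_0$ is \emph{not} an oriented subgraph of $D_{\max}$, i.e.\ some edge $e$ of $\widetilde{F}_0$ is oriented clockwise on the boundary of $\widetilde{f}_0$ in $D_{\max}$. I would then produce an element of $O(G)$ strictly above $D_{\max}$ in the lattice $(O(G),\leq_{f_0})$, contradicting maximality.

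\textbf{Key steps.} First, by Lemma~\ref{lem:necessary} applied to $\widetilde{F}_0\in\widetilde{\mathcal F}$, there is some $D\in O(G)$ for which $\widetilde{F}_0$ is a $0$-homologous oriented subgraph of $D$. So if $e$ is oriented clockwise on the border of $\widetilde f_0$ in $D_{\max}$ but counterclockwise in $D$, then $e$ is non-rigid and lies in $T=D\setminus D_{\max}$. Next I would decompose $T$ using the mechanism already invoked in the proof of Lemma~\ref{lem:necessary}: by \cite[Claim~1 of the proof of Theorem~7]{GKL15}, $\phi(T)=\sum_{1\le i\le k}\phi(T_i)$ with the $T_i$ edge-disjoint oriented subgraphs of $D_{\max}$, each of the form $\epsilon_i\sum_{\widetilde F'\in\widetilde X_i}\phi(\widetilde F')$ for some $\widetilde X_i\subseteq\widetilde{\mathcal F}'$ and $\epsilon_i\in\{-1,1\}$. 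Choose the index $i$ with $e\in T_i$. Since $e$ is on the border of $\widetilde f_0$ and oriented clockwise there, and since the $T_i$ are edge-disjoint, $T_i$ meets $\widetilde F_0$ only on edges that are clockwise on $\partial\widetilde f_0$ in $D_{\max}$. Now let $D'$ be the element of $O(G)$ with $T_i=D_{\max}\setminus D'$ (i.e.\ reverse the edges of $T_i$). Reversing $T_i$ strictly increases the number of edges of $D_{\max}$ that agree with $\widetilde F_0$ on $\partial\widetilde f_0$ — no edge of $\widetilde F_0$ gets worse, and $e$ gets better. The final step is to convert ``agrees more with $\widetilde F_0$'' into ``strictly above $D_{\max}$ in $\leq_{f_0}$'': I claim $D_{\max}\leq_{f_0}D'$ with $D_{\max}\neq D'$. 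For this I would use that $D_{\max}\setminus D'=T_i=\epsilon_i\sum_{\widetilde F'\in\widetilde X_i}\phi(\widetilde F')$ lifts to a $\mathbb Z$-combination of elements of $\mathcal F'$ (the faces of $G$ other than $F_0$), and check the sign: because $T_i$ is oriented clockwise on $\partial\widetilde f_0$, the combination has nonpositive coefficients (indeed one must verify $\widetilde F_0\notin\widetilde X_i$, which follows from $\widetilde X_i\subseteq\widetilde{\mathcal F}'$), so $T_i$ is clockwise w.r.t.\ $f_0$, i.e.\ $D_{\max}\setminus D'$ is clockwise, which by definition of $\leq_{f_0}$ gives $D'\geq_{f_0}D_{\max}$. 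Since $T_i$ is non-empty, $D'\neq D_{\max}$, contradicting maximality of $D_{\max}$.

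\textbf{Main obstacle.} The delicate point is the sign bookkeeping in the last step: making sure that reversing $T_i$ moves \emph{up} rather than \emph{down} in the lattice, which amounts to checking that $T_i$ is clockwise w.r.t.\ $f_0$ in the precise sense of Section~\ref{sec:schnyderwoods} (a $\mathbb N$-combination of $\mc F'$ with a global minus sign), and in particular that $\widetilde F_0$ does not appear in the decomposition of $T_i$. The containment $\widetilde X_i\subseteq\widetilde{\mathcal F}'$ handles the latter, but one must be careful that the ``clockwise on $\partial\widetilde f_0$'' local picture for $e$ is consistent with the global sign of the whole combination — this is where I would spend the most care, possibly phrasing everything directly in terms of $D_{\max}\setminus D'$ being clockwise $0$-homologous and invoking Lemma~\ref{lem:maximal}-style reasoning rather than re-deriving the face decomposition by hand. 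An alternative, cleaner route to avoid this sign juggling entirely: characterize $D_{\max}$ as the unique element of $O(G)$ containing no counterclockwise non-empty $0$-homologous oriented subgraph w.r.t.\ $f_0$ (the dual of Lemma~\ref{lem:maximal}), observe by Lemma~\ref{lem:necessary} that some $D$ contains $\widetilde F_0$, and then note that if $\widetilde F_0\not\subseteq D_{\max}$ one extracts from $D\setminus D_{\max}$ a counterclockwise $0$-homologous oriented subgraph of $D_{\max}$ witnessed on the edges of $\widetilde F_0$, contradicting the characterization of $D_{\max}$.
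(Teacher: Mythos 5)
Your overall strategy---contradict the maximality of $D_{\max}$ by exhibiting some $D'$ with $D_{\max}<_{f_0}D'$---can be made to work, and your ``alternative, cleaner route'' at the end is essentially the paper's actual proof. But the two sign inferences at the end of your main argument are, as written, both incorrect, and they happen to cancel. First, the piece $T_i$ containing the offending edge $e$ is \emph{counterclockwise}, not clockwise, w.r.t.~$f_0$: under your contradiction hypothesis $D_{\max}$ orients $e$ clockwise around $\widetilde f_0$, hence counterclockwise around the face $\widetilde f_1$ of $\widetilde{G}$ on its other side; since $\widetilde F_0\notin\widetilde X_i$, the coefficient of $e$ in $\epsilon_i\sum_{\widetilde F'\in\widetilde X_i}\phi(\widetilde F')$ can only come from $\widetilde F_1$, and the agreement of $e$'s orientation with the counterclockwise walk of $\widetilde F_1$ forces $\epsilon_i=+1$, i.e.\ nonnegative coefficients; and each $\phi(\widetilde F')$ with $\widetilde F'\in\widetilde{\mathcal{F}}'$ expands as a nonnegative combination over $\mc{F}'$ because $f_0\subseteq\widetilde f_0$. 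Second, your reading of the order is reversed: by the definition in Section~\ref{sec:schnyderwoods}, ``$D_{\max}\setminus D'$ is clockwise'' yields $D'\le_{f_0}D_{\max}$ (no contradiction), whereas ``$D_{\max}\setminus D'$ is counterclockwise'' yields $D_{\max}\le_{f_0}D'$. With both signs corrected, the contradiction you want does go through. Note also that your intermediate claim that $T_i$ meets $\widetilde F_0$ only on edges oriented clockwise around $\widetilde f_0$ does not follow from edge-disjointness of the $T_i$; it is true, but only as a consequence of $\epsilon_i=+1$, and in any case only the single edge $e$ is needed.

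The paper's proof avoids this machinery entirely. It takes $D\supseteq\widetilde F_0$ from Lemma~\ref{lem:necessary}, notes that $T=D\setminus D_{\max}$ is counterclockwise because $D\le_{f_0}D_{\max}$, writes $\phi(T)=\sum_{\widetilde F\in\widetilde{\mathcal{F}}'}\lambda_{\widetilde F}\phi(\widetilde F)$ with $\lambda\ge 0$, and concludes that $T$ is disjoint from $\widetilde F_0$: for an edge $e$ of $\widetilde F_0$ the only possible contribution is $\lambda_{\widetilde F_1}\phi(\widetilde F_1)_e$, which would force $e$ to be oriented in $T$ (hence in $D$) clockwise around $\widetilde f_0$, contradicting $\widetilde F_0\subseteq D$. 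So no auxiliary $D'$, no appeal to the Claim~1 decomposition, and no characterization of $D_{\max}$ by the absence of counterclockwise $0$-homologous subgraphs is needed; the single expansion of $T$ over $\widetilde{\mathcal{F}}'$, together with the fact that $\widetilde F_0$ is excluded from $\widetilde{\mathcal{F}}'$, does the job.
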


\begin{proof}
  By Lemma~\ref{lem:necessary}, there exists $D$ in $O(G)$ such that
  $\widetilde{F}$ is an oriented subgraph of $D$. Let
  $T=D\setminus D_{\max}$. Since $D\leq_{f_0} D_{\max}$, the proof
  of~\cite[Theorem~7]{GKL15}, shows that the characteristic flow of
  $T$ can be written as a combination with positive coefficients of
  characteristic flows of $\widetilde{\mathcal{F}}'$, i.e.
  $\phi(T)=\sum_{\widetilde{F}\in
    \widetilde{\mathcal{F}}'}\lambda_F\phi(\widetilde{F})$
  with $\lambda\in\mathbb{N}^{|\mc{F}'|}$. So $T$ is disjoint from
  $\widetilde{F}_0$.  Thus $\widetilde{F}_0$ is an oriented subgraph
  of $D_{\max}$. The proof is similar for $D_{\min}$.
  \end{proof}

  Note that the above three lemmas hold in a more general context than
  just $O(G)$. Actually they hold for any lattice of homologous
  orientations on an oriented surface (see~\cite{LevHDR}).  From now
  on we use some specific properties of the object considered in this
  paper, i.e. HTC Schnyder woods.

\begin{lemma}
\label{lem:facialwalktilde}
Consider an orientation $D$ in $O(G)$ and a closed walk $W$ of
$\widetilde{G}$. If on the left (resp. right) side of $W$, there is no
incident edges of $\widetilde{G}$, and no outgoing incident edges of
$D$, then $W$ is a contractible triangle with its contractible region
on its left (resp. right) side.
\end{lemma}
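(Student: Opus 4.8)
The plan is to exploit the fact that $W$ is a closed walk in $\widetilde{G}$ with no edges of $\widetilde{G}$ incident on one side, say the left, together with the constraint that $D$ has no outgoing edges into the left region. First I would pass from the walk $W$ in $\widetilde{G}$ to the corresponding closed walk $W'$ in the original triangulation $G$: a face of $\widetilde{G}$ on the left of $W$ corresponds to a region $R$ of $G$ bounded by $W'$, filled in only with rigid edges of $G$. Since the hypothesis forbids outgoing edges of $D$ on the left side of $W$, and by Lemma~\ref{lem:non-rigid} every non-rigid edge lies in a $0$-homologous oriented subgraph, I want to argue that $R$ is in fact a disk. Indeed, if $W'$ (or $W$) were non-contractible, then it would bound on its left a region which, by the absence of outgoing edges of $D$, would force an Euler-formula contradiction exactly as in the proof of Lemma~\ref{lem:noD}: all edges incident to the interior vertices would be entering, so every interior vertex would have out-degree $3$ inside, giving $3n'$ edges on $n'$ vertices in a (sub)planar map, contradicting $3n'-6$. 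So $W$ is contractible, and its contractible region lies on the left.

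Next I would show the contractible region $R$ on the left of $W$ is a \emph{single} triangular face of $G$. Since $R$ is a disk bounded by $W'$ of some length $k$, and the interior of $R$ (if non-empty) is triangulated with all interior vertices having out-degree $3$ and all boundary-incident interior edges entering the boundary, the same three-equation Euler computation as in the second case of the proof of Lemma~\ref{lem:noD} (counting $m'=3(n'-k)+k$ inner-vertex out-edges, $2m'=3(f'-1)+k$ from face sizes, and $n'-m'+f'=2$) yields $k=3$. So $W'$ is a triangle. It remains to rule out that $W'$ is a separating triangle with a non-empty interior: but then its interior would contain non-rigid edges — because a separating triangle in a Schnyder-wooded triangulation always has a $0$-homologous oriented subgraph (some clockwise oriented triangle) somewhere in its interior region, or more directly, the triangle itself together with an appropriate combination of facial walks is $0$-homologous — contradicting that every edge of $\widetilde{G}$ on the left boundary of $W$ is the \emph{only} such edge, i.e. that there are no incident edges of $\widetilde{G}$ on the left. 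Hence the interior is empty and $W$ bounds a single triangular face with that face on its left.

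The argument for the right side is symmetric, with the roles of clockwise/counterclockwise and left/right exchanged, using $-\widetilde{F}_0$ and Lemma~\ref{lem:maxtilde} in place of $\widetilde{F}_0$ if needed, or more simply by reversing all edges of $D$ and applying the left-side case (reversal sends $O(G)$ to the lattice of the reversed homology class, which again consists of HTC Schnyder woods up to relabeling, so Lemma~\ref{lem:noD}'s Euler arguments still apply verbatim).

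The main obstacle I expect is the passage from ``no incident edges of $\widetilde{G}$ on the left'' to ``the interior of $W'$ in $G$ contains no non-rigid edges, hence is an empty or rigidly-filled disk,'' and then showing that a rigidly-filled contractible region bounded by a triangle must actually be a face. The cleanest route is probably to invoke Lemma~\ref{lem:non-rigid} to see that the left-interior of $W$ in $G$ uses only rigid edges, then run the Euler count of Lemma~\ref{lem:noD} to force $k=3$, and finally observe that a separating triangle would contradict the minimality/maximality structure or, more elementarily, would itself be non-rigid and thus have an edge in $\widetilde{G}$ on the appropriate side — contradicting the hypothesis. Getting this last ``separating triangles are non-rigid'' step stated precisely, rather than hand-waved, is the delicate point.
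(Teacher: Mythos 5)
There are two genuine gaps. First, your argument that $W$ is contractible does not work: you claim that a non-contractible $W$ ``would bound on its left a region'' to which the Euler count of Lemma~\ref{lem:noD} applies, but on the torus a non-contractible closed walk is non-separating, so its left side is not a bounded planar region and no such count is available. The paper handles this case by observing that the edges of $D$ entering $W$ from the left have duals forming an oriented non-contractible closed walk of $D^*$, which must contain an oriented non-contractible cycle, contradicting Lemma~\ref{lem:incomingedges}. This is exactly where the HTC hypothesis enters, and your proposal never invokes Lemma~\ref{lem:incomingedges}; for an arbitrary $3$-orientation the statement is false, so some such step is unavoidable. Second, even once $W$ is known to be contractible, you assert without justification that the disk lies on the left. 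The paper rules out the other possibility by a second Euler count: if the disk were on the right, then every vertex of the submap it bounds (including the vertices of $W$, since none of their out-edges may leave to the left, i.e.\ to the outside) would have outdegree $3$, giving $m'=3n'$ and hence $k=-3$, a contradiction.

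A further remark: you spend the second half of the proposal trying to show that $W$ bounds a single triangular face and to exclude separating triangles. The lemma does not claim this, and the stronger statement is false: a separating triangle whose interior contains only rigid edges is exactly the kind of face of $\widetilde{G}$ the lemma is meant to describe (the paper later notes that an edge is rigid if and only if it lies in the interior of a separating triangle and is incident to that triangle). The hypothesis only forbids edges of $\widetilde{G}$ \emph{incident} to the left side of $W$, not non-rigid edges lying deeper inside. Your Euler computation for the case where the disk is on the left does match the paper's and correctly yields $k=3$, which is all that is needed; the attempted exclusion of separating triangles, which you yourself flag as the delicate point, should simply be dropped. Also, the fallback of ``reversing all edges of $D$'' for the right-hand version does not preserve outdegrees and so leaves $O(G)$; the plain left/right symmetry of the argument is what is needed there.
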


\begin{proof} 
  Consider a closed walk $W$ of $\widetilde{G}$ such that on its left
  side there is no incident edge of $\widetilde{G}$, and no outgoing
  incident edges of $D$. Let $k$ be the length of $W$.  Let $W_{left}$
  be the edges of $D$ that are incident to the left side of $W$. By
  assumption they are all entering $W$. Note that $W$ cannot cross
  itself otherwise it has at least one incident edge of
  $\widetilde{G}$ on its left side. However it may have repeated
  vertices but in that case it intersects itself tangencially on the
  right side.

  Suppose first that $W$ is non-contractible.  Then consider the
  closed walk $W^*$ of the dual orientation $D^*$ that is obtained by
  considering all the dual edges of $W_{left}$ with their
  corresponding orientation. Since all the edges of $W_{left}$ are
  entering $W$ we have that $W^*$ is an oriented closed walk. Moreover
  it is non-contractible and thus contains an oriented
  non-contractible cycle, a contradiction to
  Lemma~\ref{lem:incomingedges}.  So $W$ is contractible.  Since $W$
  can intersect itself only tangencially on the right side, the
   region delimited by $W$ and located on its left side is
  connected.

  Suppose that $W$ has its contractible region on its left side.
  Consider the graph $G'$ obtained from $G$ by keeping only the
  vertices and edges that lie in the contractible region delimited by
  $W$, including $W$. The vertices of $W$ appearing several times are
  duplicated so that $G'$ is a plane triangulation of a $k$-cycle.
  Let $n',m',f'$ be the number of vertices, edges and faces of $G'$.
  By Euler's formula, $n'-m'+f'=2$. All the inner faces have size $3$
  and the outer face has size $k$, so $2m'=3(f'-1)+k$.  All the inner
  vertices have outdegree $3$ as we are considering a $3$-orientation
  of $G$. All the edges of $W_{left}$ are oriented toward $W$, and
  there are $k$ outer edges, so $m'=3(n'-k)+k$. Combining these three
  equalities gives $k=3$, i.e. $W$ is a triangle and the lemma holds.

  Suppose now that $W$ has its contractible region on its right
  side. Then similarly as above, consider the graph $G'$ obtained from
  $G$ by keeping all the vertices and edges that lie in the
  contractible region delimited by $W$, including $W$. This time the
  vertices of $W$ appearing several times are not duplicated. Since
  $W$ can intersect itself only tangencially on the right side, we
  have that $G'$ is a plane map whose outer face boundary is $W$ and
  whose interior is triangulated.  As above, let $n',m',f'$ be the
  number of vertices, edges and faces of $G'$.  By Euler's formula,
  $n'-m'+f'=2$. All the inner faces have size $3$ and the outer face
  has size $k$, so $2m'=3(f'-1)+k$. Since there is no outgoing
  incident edges of $D$ on the left side of $W$, all the vertices of
  $G'$ have outdegree $3$ and $m'=3n'$.  Combining these three
  equalities gives $k=-3$, a contradiction.
\end{proof}

The boundary of a face of $\widetilde{G}$ may be composed of several
closed walks. Let us call \emph{quasi-contractible} the faces of
$\widetilde{G}$ that are homeomorphic to a disk or to a disk with
punctures.  Note that such a face may have several boundaries (if
there is some punctures) and then the face is not contractible, but
exactly one of these boundaries contains all the other in its
contractible region. Let us call \emph{outer facial walk} this special
boundary. Then we have the following:

\begin{lemma}
\label{lem:contractibletilde}
All the faces of $\widetilde{G}$ are quasi-contractible and their
outer facial walk is a (contractible) triangle.
\end{lemma}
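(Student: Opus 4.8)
The plan is to show that each face of $\widetilde{G}$ is quasi-contractible by applying Lemma~\ref{lem:facialwalktilde} to its boundary walks, and then to identify the outer facial walk as a triangle. First I would fix an orientation $D$ in $O(G)$ (say $D=D_0$) and a face $\widetilde{f}$ of $\widetilde{G}$. Its boundary consists of one or more closed walks $W_1,\ldots,W_p$ of $\widetilde{G}$, and the face $\widetilde{f}$ lies on one side (say the left) of each $W_j$ — more precisely, $\widetilde{f}$ is the region incident to the left side of each $W_j$. Since $\widetilde{G}$ is obtained from $G$ by deleting only the rigid edges, every edge of $G$ that is strictly inside $\widetilde{f}$ (i.e.\ drawn in the open region $\widetilde{f}$) is rigid. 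The key observation is that such an edge, being rigid, must be oriented \emph{toward} the boundary walks bounding $\widetilde{f}$: indeed, by Lemma~\ref{lem:non-rigid}, a rigid edge is not contained in any $0$-homologous oriented subgraph; but if two edges inside $\widetilde f$ incident to the boundary were oriented ``into'' $\widetilde{f}$, or more generally if the edges inside $\widetilde f$ did not all flow outward, one could build a $0$-homologous cycle inside $\widetilde{f}$ using rigid edges (the region is a subsurface of the torus whose only boundary components are the $W_j$, and an Eulerian-type argument on the induced orientation produces a directed cycle, hence a $0$-homologous one since it would be contractible or could be combined). So inside $\widetilde{f}$ there are no $\widetilde G$-edges by definition of $\widetilde G$, and all $D$-edges incident to the $\widetilde f$-side of each $W_j$ are incoming.

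Next I would apply Lemma~\ref{lem:facialwalktilde} to each $W_j$ with $W=W_j$ and the ``left'' side being the $\widetilde f$-side: there are no incident edges of $\widetilde{G}$ on that side and no outgoing incident edges of $D$ on that side, so $W_j$ is a contractible triangle with its contractible region on the $\widetilde f$-side. This already forces each boundary walk to be a (contractible) triangle. Now I want to deduce that $\widetilde f$ itself is quasi-contractible, i.e.\ homeomorphic to a disk or a disk with punctures. Each $W_j$, being a contractible triangle whose contractible region contains $\widetilde f$ on one side, cuts the torus into a disk-region $R_j$ (the contractible side of $W_j$) and its complement. The face $\widetilde f$ is contained in $\bigcap_j R_j$ (up to the boundaries), and since each $R_j$ is a disk, their intersection containing $\widetilde f$ forces $\widetilde f$ to sit inside one ``outermost'' disk among the $R_j$ with the other triangles appearing as holes; concretely, one of the $W_j$ — call it $W_1$ — has its disk-region $R_1$ containing all the others, and $\widetilde f$ is $R_1$ with the (disjoint, since faces are cells in the embedding) interiors of $R_2,\ldots,R_p$ removed. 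That is exactly a disk with $p-1$ punctures, and $W_1$ is its outer facial walk, a contractible triangle. If $p=1$ the face is just the open disk $R_1$ bounded by a triangle.

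The main obstacle I expect is the first step: rigorously arguing that all $G$-edges strictly inside a face of $\widetilde G$ are oriented ``outward'' toward its boundary walks, and hence that each boundary walk has only incoming $D$-edges on the face side. The cleanest route is probably a counting / Eulerian argument: consider the subgraph $G_{\widetilde f}$ of $G$ induced by the closure of $\widetilde f$, whose edges not on the boundary walks are all rigid; if some such edge pointed ``into'' the region (creating a vertex with positive net indegree contribution from inside), then summing outdegrees ($=3$ at interior vertices, by the $3$-orientation) against an Euler-characteristic count of $\widetilde f$ as a subsurface with triangle boundaries would either violate Euler's formula or exhibit a directed cycle among rigid edges — and a directed cycle of rigid edges lying in a disk-with-punctures region, being a boundary in homology (it can be written via the facial walks of the triangulated region), contradicts Lemma~\ref{lem:non-rigid}. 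Once this outward-flow property is in hand, Lemma~\ref{lem:facialwalktilde} does all the remaining work, and the topological bookkeeping of nested triangles is routine.
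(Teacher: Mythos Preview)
Your proposal has a genuine circularity that I don't see how to break without changing strategy.

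The heart of your plan is to verify, for each boundary walk $W_j$ of a face $\widetilde f$, that on the $\widetilde f$-side there are no outgoing $D$-edges, and then invoke Lemma~\ref{lem:facialwalktilde}. Your justification for ``no outgoing edges'' is: otherwise some directed cycle of rigid edges would appear inside $\widetilde f$, and such a cycle would be $0$-homologous, contradicting Lemma~\ref{lem:non-rigid}. But that last step already assumes what you are trying to prove. If $\widetilde f$ is \emph{not} quasi-contractible (e.g.\ it carries a handle), a directed cycle drawn in $\widetilde f$ can perfectly well be non-$0$-homologous, and Lemma~\ref{lem:non-rigid} gives no contradiction. You explicitly write ``a directed cycle of rigid edges lying in a disk-with-punctures region,'' but that is precisely the conclusion of the lemma, not something you may use in its proof. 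The ``or could be combined'' and the Euler-counting sketch do not close this gap: Euler's formula for a subsurface with unknown topology will not force the outdegree bookkeeping you need.

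A second, related issue is that you work with $D=D_0$ chosen arbitrarily. The paper instead invokes Lemma~\ref{lem:necessary} to pick a $D\in O(G)$ in which the whole boundary $\widetilde F$ of $\widetilde f$ is oriented counterclockwise. With this $D$ in hand, the paper does \emph{not} try to show directly that the boundary has only incoming edges on the face side. Rather, it starts from any interior (rigid) edge and follows a \emph{left-walk} in $D$; the CCW orientation of $\widetilde F$ traps this walk inside $\overline{\widetilde f}$, so it eventually closes up into an oriented closed walk $W'$. Under the contradiction hypothesis, $W'$ must contain a rigid edge (it cannot coincide with a boundary component, because the left-walk would peel off at any outgoing interior edge, and such an edge must exist by the contrapositive of Lemma~\ref{lem:facialwalktilde}). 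Then $W'$ is shown contractible via Lemma~\ref{lem:incomingedges} applied to the dual, hence $0$-homologous---\emph{this} is where non-contractibility is ruled out, and it does not rely on knowing the topology of $\widetilde f$. Now Lemma~\ref{lem:non-rigid} makes all edges of $W'$ non-rigid, contradicting the presence of a rigid edge.

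So the missing ingredient is exactly this left-walk construction together with Lemma~\ref{lem:necessary} and Lemma~\ref{lem:incomingedges}; once you have an oriented closed walk inside $\widetilde f$ that is forced to be contractible for reasons independent of the topology of $\widetilde f$, the rigidity contradiction goes through. Your topological bookkeeping at the end (nesting of triangles to get a disk-with-punctures) is fine and matches the intended conclusion.
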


\begin{proof}
  Suppose by contradiction that there is a face $\widetilde{f}$ of
  $\widetilde{G}$ that is not quasi-contractible or whose outer facial
  walk is not a contractible triangle. Let $\widetilde{F}$ be the
  element of $\widetilde{\mathcal{F}}$ corresponding to the boundary
  of $\widetilde{f}$. By Lemma~\ref{lem:necessary}, there exists an
  orientation $D$ in $O(G)$ such that $\widetilde{F}$ is an oriented
  subgraph of $D$.

  All the faces of $G$, are contractible triangles. Thus
  $\widetilde{f}$ is not a face of $G$ and contains in its interior at
  least one edge of $G$.  Start from any such edge $e$ and consider
  the \emph{left-walk} $W=(e_i)_{i\geq 0}$ of $D$ obtained by the
  following: if the edge $e_i$ is entering a vertex $v$, then
  $e_{i+1}$ is choosen among the three edges leaving $v$ as the edge
  that is on the left coming from $e_i$ (i.e. the first one while
  going \cw around $v$).  Suppose that for $i\geq 0$, edge $e_i$
  is entering a vertex $v$ that is on the border of $\widetilde{f}$.
  Recall that by definition $\widetilde{F}$ is oriented \ccw according
  to its interior, so either $e_{i+1}$ is in the interior of
  $\widetilde{f}$ or $e_{i+1}$ is on the border of
  $\widetilde{f}$. Thus $W$ cannot leave $\widetilde{f}$.

  Since $G$ has a finite number of edges, some edges are used several
  times in $W$.  Consider a minimal subsequence
  $W'=e_k, \ldots, e_\ell$ such that no edge appears twice and
  $e_k=e_{\ell+1}$.  Thus $W$ ends periodically on the sequence of
  edges $e_k, \ldots, e_\ell$. By Lemma~\ref{lem:facialwalktilde}, all
  the closed walks that are part of $\widetilde{F}$ have some outgoing
  incident edges of $D$ on their left side. Thus we have that $W'$
  contains at least one edge that is not an edge of $\widetilde{F}$,
  thus it contains at least one rigid edge.

  By construction, all the edges on the left side of $W'$ are
  entering.  Suppose that $W'$ is not contractible. Then the oriented
  closed walk of the dual orientation $D^*$ that is obtained by
  considering all the dual edges of its incident edges on the left
  side gives a contradiction to Lemma~\ref{lem:incomingedges}. So $W'$
  is contractible. So it is a $0$-homologous oriented subgraph of $D$,
  thus all its edges are non-rigid by Lemma~\ref{lem:non-rigid}, a
  contradiction.
\end{proof}

By Lemma~\ref{lem:contractibletilde}, every face of $\widetilde{G}$ is
quasi-contractible and its outer facial walk is a contractible
triangle.  So $\widetilde{G}$ contains all the contractible triangles
of $G$ whose interiors are maximal by inclusion, i.e. it contains all
the edges that are not in the interior of a separating triangle. In
particular, $\widetilde{G}$ is non-empty and $|O(G)|\geq 2$.  The
status (rigid or not) of an edge lying inside a separating triangle is
determined as in the planar case: such an edge is rigid if and only if
it is in the interior of a separating triangle and incident to this
triangle. Thus an edge of $G$ is rigid if and only if it is in the
interior of a separating triangle and incident to this triangle.

Since $(O(G),\leq_{f_0})$ is a distributive lattice, any element $D$
of $O(G)$ that is distinct from $D_{\max}$ and $D_{\min}$ contains at
least one neighbor above and at least one neighbor below in the Hasse
diagram of the lattice. Thus it has at least one face of
$\widetilde{G}$ oriented \ccw and at least one face of $\widetilde{G}$
oriented \cww. Thus by Lemma~\ref{lem:contractibletilde}, it contains
at least one contractible triangle oriented \ccw and at least one
contractible triangle oriented \cww. Next lemma shows that this
property is also true for $D_{\max}$ and $D_{\min}$.

\begin{lemma}
\label{lem:trianglef0}
  In $D_{\max}$ (resp. $D_{\min}$) there is a \ccw (resp. \cww)
  contractible triangle containing $f_0$, and a \cw (resp. \ccww)
  contractible triangle not containing $f_0$.
\end{lemma}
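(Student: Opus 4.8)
The plan is to prove the statement for $D_{\max}$; the case of $D_{\min}$ is symmetric (reverse all edges, which exchanges clockwise and counterclockwise, and exchanges $D_{\max}$ with $D_{\min}$). By Lemma~\ref{lem:maxtilde}, $\widetilde{F}_0$ is an oriented subgraph of $D_{\max}$, i.e.\ the boundary of the face $\widetilde{f}_0$ of $\widetilde{G}$ containing $f_0$ is oriented \ccw in $D_{\max}$. By Lemma~\ref{lem:contractibletilde}, the outer facial walk of $\widetilde{f}_0$ is a contractible triangle $\Delta$; since this triangle is oriented \ccw in $D_{\max}$ and its contractible region contains $\widetilde{f}_0$ and hence $f_0$, we immediately get a \ccw contractible triangle containing $f_0$. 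That handles the first half.

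For the second half I need a \cw contractible triangle of $D_{\max}$ not containing $f_0$. First I would dispose of the trivial degenerate case: if $\widetilde{G}$ has only one face (so $\widetilde{\mathcal F}=\{\widetilde F_0\}$ and the lattice has a single generator), then $O(G)$ has exactly two elements, $D_{\max}$ and $D_{\min}$, differing by reversing $\widetilde F_0$; in that situation one checks directly on the structure of $\widetilde G$ (a single contractible triangle bounding a punctured-disk face) that $D_{\max}$ still contains a \cw contractible triangle outside $f_0$ --- indeed the ``other side'' of $\Delta$, inside one of the punctures, must be bounded by a \cw contractible triangle, since $\widetilde G$ contains all maximal separating triangles and these are triangles. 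So I may assume $\widetilde{G}$ has at least two faces. Then $\widetilde{F}_0\neq$ the full set of generators, so there is some other face $\widetilde f_1$ of $\widetilde{G}$ with boundary walk $\widetilde F_1\in\widetilde{\mathcal F}'$. By the proof of~\cite[Theorem~7]{GKL15} (quoted just before Lemma~\ref{lem:necessary}), if $\widetilde F_1$ were oriented \ccw in $D_{\max}$, then $D_{\max}$ would have a neighbour above it in the Hasse diagram (obtained by reversing $\widetilde F_1$), contradicting maximality. So $\widetilde F_1$ is \emph{not} entirely \ccw in $D_{\max}$. The key point is then that $D_{\max}$ must contain at least one face of $\widetilde G$ oriented entirely \emph{clockwise}, and this face is distinct from $\widetilde f_0$; by Lemma~\ref{lem:contractibletilde} its outer facial walk is a contractible triangle, and it does not contain $f_0$ since its contractible region is contained in a face of $\widetilde G$ other than $\widetilde f_0$, which is disjoint from $f_0$.

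The remaining gap --- and this is the step I expect to be the main obstacle --- is establishing that $D_{\max}$ contains a face of $\widetilde G$ oriented entirely \cw (not merely a face that fails to be \ccww). The clean way is a global counting/homology argument: since $\widetilde F_0$ is \ccw in $D_{\max}$ and the generators $\widetilde{\mathcal F}'$ satisfy $\sum_{\widetilde F\in\widetilde{\mathcal F}}\phi(\widetilde F)=0$ (the \ccw facial walks of any map sum to zero in the flow space), we get $\phi(\widetilde F_0)=-\sum_{\widetilde F\in\widetilde{\mathcal F}'}\phi(\widetilde F)$; combined with the fact that $D_{\max}$ has no face of $\widetilde{\mathcal F}'$ oriented \ccw, a sign/extremality argument on this relation --- looking at a face whose $\lambda$-value is maximal in the expression writing $D_{\max}\setminus D_{\min}$ as a positive combination of elements of $\widetilde{\mathcal F}'$, exactly as in the proof of Theorem~\ref{th:bij1} --- forces some generator $\widetilde F_1$ to be oriented entirely \cw in $D_{\max}$. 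Concretely: write $\phi(D_{\max}\setminus D_{\min})=\sum_{\widetilde F\in\widetilde{\mathcal F}'}\lambda_{\widetilde F}\phi(\widetilde F)$ with $\lambda\in\mathbb N^{|\widetilde{\mathcal F}'|}$ and $\lambda_{\widetilde F_0}:=0$; since $\widetilde G$ is non-empty and $D_{\max}\neq D_{\min}$ (as $|O(G)|\geq2$), some $\lambda_{\widetilde F}\geq1$, and taking $\widetilde F_1$ attaining $\lambda_{\max}:=\max_{\widetilde F}\lambda_{\widetilde F}$, every edge on the boundary of $\widetilde f_1$ separates a face of $\lambda$-value $\lambda_{\max}$ from one of value $\lambda_{\max}-1$ or it separates $\widetilde f_1$ from $\widetilde f_1$ --- and, just as in Lemma~\ref{lem:stemstep}'s sibling argument, reversing $\widetilde F_1$ would strictly increase $D_{\max}$ unless $\widetilde F_1$ is already \cw in $D_{\max}$, which (by maximality of $D_{\max}$, it cannot be increased, so it must already be the situation that) $\widetilde F_1$ is oriented entirely \cw in $D_{\max}$. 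Then Lemma~\ref{lem:contractibletilde} finishes as above.
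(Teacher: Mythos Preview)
Your first half is correct and matches the paper exactly: combine Lemma~\ref{lem:maxtilde} with Lemma~\ref{lem:contractibletilde} to get the \ccw contractible triangle containing $f_0$.

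Your second half, however, is substantially overcomplicated and contains real gaps. First, the ``degenerate case'' where $\widetilde{G}$ has a single face cannot occur: the paper has already established $|O(G)|\geq 2$ (paragraph after Lemma~\ref{lem:contractibletilde}), and since the Hasse diagram of the lattice is generated precisely by $\widetilde{\mathcal F}'$, a single face would force $\widetilde{\mathcal F}'=\emptyset$ and hence $|O(G)|=1$. Your hand-wavy treatment of that case (``one checks directly'') is therefore unnecessary, and also not justified as written.

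More importantly, your $\lambda_{\max}$ argument does not close. You correctly note that no $\widetilde F\in\widetilde{\mathcal F}'$ can be entirely \ccw in $D_{\max}$, but ``not \ccww'' is weaker than ``\cww'', and your attempt to bridge this gap is confused: the sentence ``reversing $\widetilde F_1$ would strictly increase $D_{\max}$ unless $\widetilde F_1$ is already \cw in $D_{\max}$'' has the direction backwards (reversing a \ccw face increases, reversing a \cw face decreases), and in the mixed case reversing does not even yield a comparable orientation. The $\lambda_{\max}$ idea can be salvaged only after proving that every edge of $\widetilde G$ is actually oriented differently in $D_{\max}$ and $D_{\min}$, which you have not done.

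The paper's argument avoids all of this with a one-line appeal to the lattice structure: since $|O(G)|\geq 2$, the element $D_{\max}$ has some neighbor $D$ below it in the Hasse diagram; by the characterization of covering relations recalled just before Lemma~\ref{lem:necessary}, this means $D\setminus D_{\max}=\widetilde F$ for some $\widetilde F\in\widetilde{\mathcal F}'$, i.e.\ the boundary of the corresponding face is entirely \cw in $D_{\max}$. Lemma~\ref{lem:contractibletilde} then gives the \cw contractible triangle, and it does not contain $f_0$ since $\widetilde F\neq\widetilde F_0$. That is the missing idea you should use.
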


\begin{proof}
  By Lemma~\ref{lem:contractibletilde}, $\widetilde{f}_0$ is
  quasi-contractible and its outer facial walk is a contractible
  triangle $T$.  By lemma~\ref{lem:maxtilde}, $\widetilde{F}_0$ is an
  oriented subgraph of $D_{\max}$.  Thus $T$ is oriented \ccw and
  contains $f_0$.  The second part of the lemma is clear since
  $|O(G)|\geq 2$ so $D_{\max}$ has at least one neighbor below in the
  Hasse diagram of the lattice. Similarly for $D_{\min}$.
\end{proof}

Thus by above remarks and Lemma~\ref{lem:trianglef0}, all the HTC
Schnyder woods have at least one triangle oriented \ccw and at least
one triangle oriented \cww. Note that this property does not
characterize HTC Schnyder woods.  Figure~\ref{fig:gamma0glue} gives an
example of a Schnyder wood that is not HTC but satisfies the
property. Note also that not all Schnyder woods satisfy the
property. The right of Figure~\ref{fig:noncrossing} is an example of a
Schnyder wood that is no HTC and has no oriented contractible
triangle.

Lemma~\ref{lem:trianglef0} is used in the next section to obtained a
bijection with unrooted unicellular maps.

\section{Bijection with unrooted unicellular maps}
\label{sec:bij2}

To remove the root and the balanced property of the unicellular maps
considered in Theorem~\ref{th:bij1}, we have to root the toroidal
triangulation more precisely than before. We say that an angle is not
\emph{in the \cw interior of a separating triangle} if it is not in
its contractible region, or if it is incident to a vertex $v$ of the
triangle and situated just before an edge of the triangle in \ccw
order around $v$ (see Figure~\ref{fig:rootangles}).

\begin{figure}[h!]
\center
\includegraphics[scale=0.5]{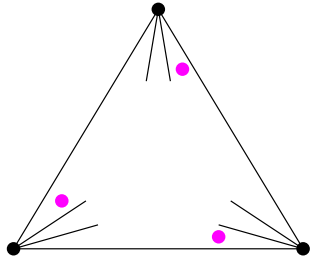}
\caption{Angles that are in a separating triangle but not in its \cw
  interior.}
\label{fig:rootangles}
\end{figure}

Consider a toroidal triangulation $G$. Consider a root angle $a_0$
that is not in the \cw interior of a separating triangle.  Note that
the choice of $a_0$ is equivalent to the choice of a root vertex $v_0$
and a root edge $e_0$ incident to $v_0$ such that none is in the
interior of a separating triangle.  Consider the orientation of the
edges of $G$ corresponding to the minimal HTC Schnyder wood w.r.t.~the
root face $f_0$.  By Lemma~\ref{lem:trianglef0}, there is a \cw
triangle containing $f_0$. Thus by the choice of $a_0$, the edge $e_0$
is leaving the root vertex $v_0$. This is the essential property used
in this section.  Consider the output $U$ of \aps on $(G,a_0)$.  Since
$e_0$ is leaving $v_0$ and $a_0$ is just before $e_0$ in \ccw order
around $v_0$, the execution of \aps starts by Case 2 and $e_0$
corresponds in $U$ to a stem $s_0$ attached to $v_0$. We call this
stem $s_0$ the \emph{root stem}.

The recovering method defined in Theorem~\ref{th:recover} says that
$s_0$ is the last stem reattached by the procedure. So there exists a
sequence of admissible triples of $U$ (see the terminology and
notations of Section~\ref{sec:close}) such that $s_0$ belongs to the
last admissible triple. Let $U_0=U$ and for $1\leq k \leq 2n-2$, the
map $U_{k}$ is obtained from $U_{k-1}$ by closing any admissible
triple that does no contain $s_0$. As noted in
Section~\ref{sec:close}, the special face of $U_{2n-2}$ is a
quadrangle with exactly one stem. This stem being $s_0$, we are in the
situation of Figure~\ref{fig:rootstem}.

\begin{figure}[h!]
\center
\includegraphics[scale=0.4]{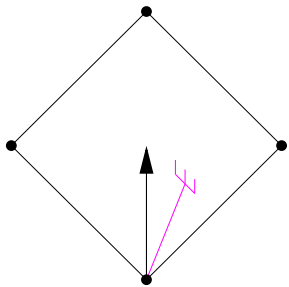} 
\caption{The situation just before the last stem (i.e. the root stem)
  is reattached}
\label{fig:rootstem}
\end{figure}

Consequently, if one removes the root stem $s_0$ from $U$ to obtain an
unicellular map $U'$ with $n$ vertices, $n+1$ edges and $2n-2$ stems,
one can recover the graph $U_{2n-2}$ by applying a complete closure
procedure on $U'$ (see example of Figure~\ref{fig:uprime}). Note that
then, there are four different ways to finish the closure of
$U_{2n-2}$ to obtain an oriented toroidal triangulation. This four
cases corresponds to the four ways to place the (removed) root stem in
a quadrangle, they are obtained by pivoting Figure~\ref{fig:rootstem}
by 0°, 90°, 180° and 270°. Note that only one of this four cases leads
to the original rooted triangulation $G$, except if there are some
symmetries (like in the example of Figure~\ref{fig:uprime}).

\begin{figure}[h!]
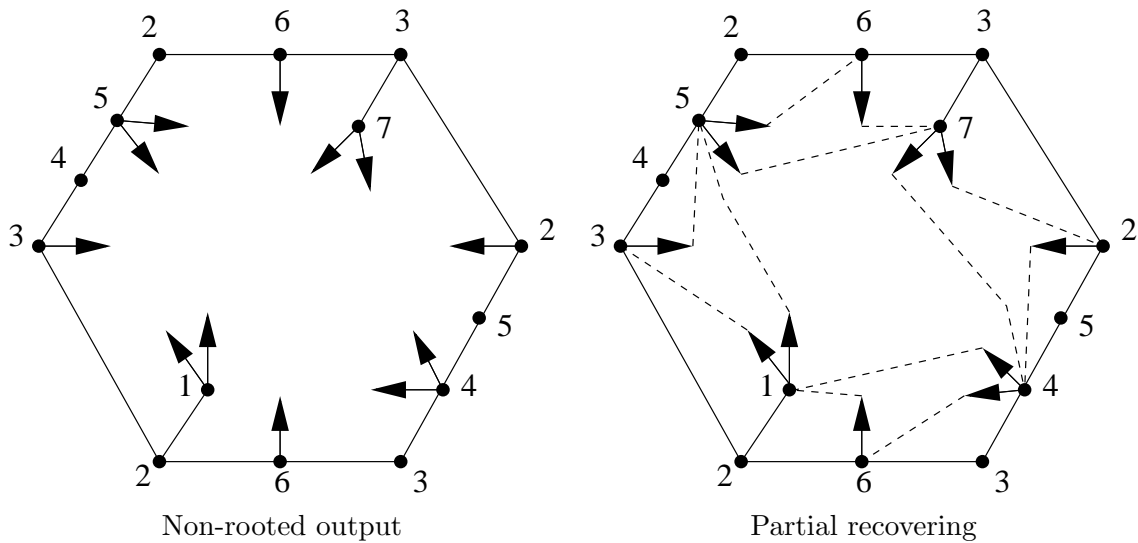

\center
\begin{tabular}{cc}
\includegraphics[scale=0.5]{tore-tri-exe-13.pdf}
 & 
\includegraphics[scale=0.5]{tore-tri-exe-14.pdf}\\
Non-rooted output & Partial recovering \\
\end{tabular}
\caption{Example of $K_7$ where the root angle, the root stem and the
  orientation w.r.t.~the root angle have been removed from the output
  of Figure~\ref{fig:tore-example}. The complete closure procedure
  leads to a quadrangular face.}
\label{fig:uprime}
\end{figure}

Let $\mathcal U(n)$ denote the set of (non-rooted) toroidal
unicellular maps, with exactly $n$ vertices, $n+1$ edges and $2n-2$
stems satisfying the following: every vertex has exactly $2$ stems,
except if the map is hexagonal, the two corners having exactly $1$
stem each, and if the map is square, the only corner having no stem at
all. Note that the output of Theorem~\ref{th:uni} on an appropriately
rooted toroidal triangulation is an element of $\mathcal U(n)$ when
the root stem is removed.

Note that an element $U'$ of $\mathcal U(n)$ is non-rooted so we
cannot orient automatically its edges w.r.t.~ the root angle like in
Section~\ref{sec:bijection}.  Nevertheless one can still orient all
the stems as outgoing and compute $\gamma$ on the cycles of $U'$ by
considering only its stems in the counting (and not the edges nor the
root stem anymore).  We say that an unicellular map of $\mathcal U(n)$
satisfies the $\gamma_0$ property if $\gamma$ equals zero on its
(non-contractible) cycles.  Let us call $\mathcal U_{\gamma_0}(n)$ the
set of elements of $\mathcal U(n)$ satisfying the $\gamma_0$ property.

A surprising property is that an element $U'$ of $\mathcal U(n)$
satisfies the $\gamma_0$ property if and only if any element $U$ of
$\mathcal U_{r}(n)$ obtained from $U'$ by adding a root stem anywhere
in $U'$ satisfies the $\gamma_0$ property (note that in $U$ we count
the edges and the root stem to compute $\gamma$).  One can see this by
considering the unicellular map of
Figure~\ref{fig:hexasquaregamma}. It represents the general case of
the underlying rooted hexagon of $U$. The edges represent in fact
paths (some of which can be of length zero).  One can check that it
satisfies $\gamma$ equals zero on its (non-contractible) cycles.  It
corresponds exactly to the set of edges that are taken into
consideration when computing $\gamma$ on $U$ but not when computing
$\gamma$ on $U'$. Thus it does not affect the counting (the tree-like
parts are not represented since they do not affect the value
$\gamma$). So the output of Theorem~\ref{th:uni} on an appropriately
rooted toroidal triangulation is an element of
$\mathcal U_{\gamma_0}(n)$ when the root stem is removed.

\begin{figure}[!h]
\center
\includegraphics[scale=0.5]{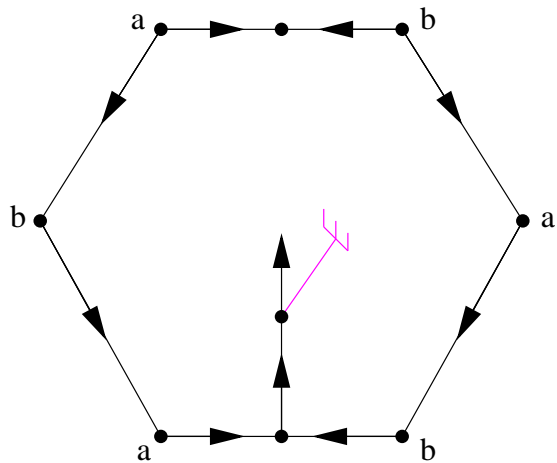}
\caption{The parts of the unicellular map showing the correspondence
  while computing $\gamma$ with or without the orientation w.r.t.~
  the root plus the root stem.}
\label{fig:hexasquaregamma}
\end{figure}

For the particular case of $K_7$, the difference between the rooted
output of Figure~\ref{fig:tore-example} and the non-rooted output of
Figure~\ref{fig:uprime} is represented on Figure~\ref{fig:uprimediff}
(one can superimpose the last two to obtain the first). One can check
that these three unicellular maps (rooted, non-rooted and the
difference) all satisfy $\gamma$ equals zero on their cycles.

\begin{figure}[h!]
\center
\includegraphics[scale=0.5]{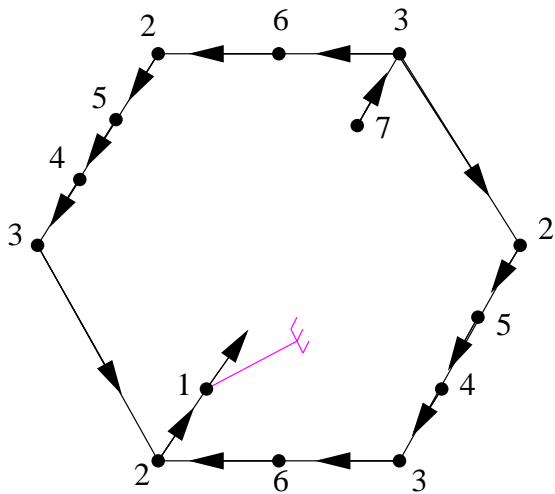}
\caption{The difference between the rooted output of
  Figure~\ref{fig:tore-example} and the non-rooted output of
  Figure~\ref{fig:uprime}.}
\label{fig:uprimediff}
\end{figure}

There is an ``almost'' four-to-one correspondence between toroidal
triangulations on $n$ vertices, given with a root angle that is not in
the \cw interior of a separating triangle, and elements of
$\mathcal U_{\gamma_0}(n)$. The ``almost'' means that if the
automorphism group of an element $U$ of $\mathcal U_{\gamma_0}(n)$ is
not trivial, some of the four ways to add a root stem in $U$ are
isomorphic and lead to the same rooted triangulation. In the example
of Figure~\ref{fig:uprime}, one can root in four ways the quadrangle
but this gives only two different rooted triangulations (because of
the symmetries of $K_7$). We face this problem by defining another
class for which we can formulate a bijection.

Let $\mc T(n)$ be the set of toroidal maps on $n$ vertices, where all
the faces are triangles, except one that is a quadrangle and which is
not in a separating triangle. Then we have the following bijection:

 \begin{theorem}
\label{th:bij2}
There is a bijection between $\mc T(n)$ and
$\mathcal U_{\gamma_0}(n)$.
\end{theorem}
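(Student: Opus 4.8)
The plan is to build a map $\Phi$ from $\mathcal T(n)$ to $\mathcal U_{\gamma_0}(n)$ and a map $\Psi$ back, and to show they are mutual inverses. Given a toroidal map $G'\in\mathcal T(n)$ with its unique quadrangular face $q$, I would first insert a diagonal in $q$ in one of the two possible ways; call the resulting triangulation $G$. The quadrangle $q$ is not inside a separating triangle, so the new edge $e_0$ lies on two triangular faces and is not inside a separating triangle; choosing as root angle $a_0$ an appropriate angle incident to $e_0$ (the one making $e_0$ leaving the root vertex in the minimal HTC Schnyder wood, using Lemma~\ref{lem:trianglef0}), we land in $\mathcal T_r(n)$. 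Applying \aps as in Theorem~\ref{th:uni} and Theorem~\ref{th:bij1} yields an element $U$ of $\mathcal U_{r,b,\gamma_0}(n)$ in which $e_0$ becomes the root stem $s_0$. Removing $s_0$ gives $U'$, and by the discussion preceding the theorem (the picture of Figure~\ref{fig:hexasquaregamma}), $U'$ lies in $\mathcal U_{\gamma_0}(n)$. The first thing to check is that $\Phi(G')=U'$ does not depend on which diagonal of $q$ we chose: the two choices produce two rooted triangulations that differ only by the position of $s_0$ in the final quadrangle $U_{2n-2}$ of the complete closure, hence produce the same $U'$.

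For the inverse map $\Psi$, start from $U'\in\mathcal U_{\gamma_0}(n)$. Run the complete closure procedure of Section~\ref{sec:close} on $U'$: since $U'$ has $n+1$ edges and $2n-2$ stems, along the border of its special face there are four more edges than stems, and this difference of four is preserved, so the procedure terminates with a toroidal map $U_{2n-2}$ all of whose faces are triangles except one quadrangle. I must argue this quadrangle is not in a separating triangle, so that $U_{2n-2}\in\mathcal T(n)$ — this follows because adding any diagonal turns $U_{2n-2}$ into an honest triangulation whose associated unicellular map (via \aps on the minimal HTC Schnyder wood) is exactly some $U\in\mathcal U_{r,b,\gamma_0}(n)$ obtained from $U'$ by adding a root stem, and by Theorem~\ref{th:bij1} that $U$ is in bijection with a triangulation of $\mathcal T_r(n)$, whose root angle is not in the strict interior of a separating triangle; translating back, the quadrangle of $U_{2n-2}$ cannot sit inside a separating triangle. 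Then $\Psi(U')=U_{2n-2}$.

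To finish, I would show $\Psi\circ\Phi=\mathrm{id}$ and $\Phi\circ\Psi=\mathrm{id}$. For $\Psi(\Phi(G'))$: after choosing a diagonal of $q$ and running \aps we get $U$, deleting $s_0$ gives $U'$, and by Lemma~\ref{lem:stemstep} (complete closure recovers $G$ regardless of the order of admissible triples) closing all stems of $U'$ except where $s_0$ would go reproduces exactly $U_{2n-2}$; since $G=U_{2n-1}$ is obtained from $U_{2n-2}$ by re-inserting $e_0$ as a diagonal of $q$, removing that diagonal returns $G'$. For $\Phi(\Psi(U'))$: given $U_{2n-2}=\Psi(U')$, pick either diagonal, obtain a rooted triangulation $G\in\mathcal T_r(n)$, and invoke the forward direction of Theorem~\ref{th:bij1}: \aps on the minimal HTC Schnyder wood of $G$ (with the root angle adjacent to the diagonal, leaving the root vertex) outputs some $U\in\mathcal U_{r,b,\gamma_0}(n)$ whose root stem is the diagonal; here I use that by Lemma~\ref{lem:trianglef0} the root edge is leaving $v_0$, so this root angle is legitimate, and that the complete closure of $U$ with $s_0$ last is $G$. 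Removing $s_0$ recovers $U'$. The main obstacle I expect is the well-definedness and diagonal-independence arguments — precisely, making rigorous that the four (or two) ways to place the root stem in the final quadrangle all arise from valid rootings and all collapse to the same $U'$, and conversely that the quadrangle produced by closing $U'$ is never trapped inside a separating triangle; both are handled by shuttling through Theorem~\ref{th:bij1} and the balanced/$\gamma_0$ bookkeeping of Figure~\ref{fig:hexasquaregamma}, but the details require care with the corner-stem conventions distinguishing $\mathcal U_r(n)$ from $\mathcal U(n)$.
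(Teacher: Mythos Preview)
Your overall strategy matches the paper's: factor through Theorem~\ref{th:bij1} via ``add a diagonal, root, run \apss, remove the root stem'', and undo this by the complete closure. The difference is organizational, and in one place your organization creates a real gap.

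The gap is your diagonal-independence argument for $\Phi$. You write that the two choices of diagonal ``produce two rooted triangulations that differ only by the position of $s_0$ in the final quadrangle $U_{2n-2}$ of the complete closure, hence produce the same $U'$''. This is backwards. The two diagonals give two genuinely different triangulations $G_1,G_2$, each with its own minimal HTC Schnyder wood w.r.t.\ its own root face; \aps then outputs two unicellular maps $U^{(1)},U^{(2)}$. You correctly observe that the complete closures of both, stopped one step early, land on $G'$. But that only says the two $U'^{(i)}$ close to the same quadrangulated map; it does not say they are the same unicellular map, since many different $U'$ could in principle close to the same $G'$. To actually prove $U'^{(1)}=U'^{(2)}$ you would need to run the argument in the other direction: start from $U'^{(1)}$, place a root stem in each of the four angles of the terminal quadrangle, observe these four rooted objects are balanced and $\gamma_0$, invoke the bijection of Theorem~\ref{th:bij1} to identify one of them with $(G_2,a_0^{(2)})$, and conclude $U'^{(2)}=U'^{(1)}$. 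That is precisely the paper's surjectivity argument, so you have not avoided it. The paper sidesteps the whole well-definedness issue by declaring $a$ to be a \emph{fixed arbitrary} choice of diagonal-plus-endpoint and then proving $h=r\circ g\circ a$ is injective (closure recovers $G'$) and surjective (the four root-stem placements in the closed quadrangle of any $U'$ all lead back to the same $G'$, and one of them is $a(G')$). This is cleaner than trying to show $\Phi$ is choice-independent up front.

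A smaller circularity: you choose $a_0$ as ``the one making $e_0$ leaving the root vertex in the minimal HTC Schnyder wood''. But the minimal HTC Schnyder wood depends on the root face $f_0$, which depends on $a_0$; you cannot pick $a_0$ based on an orientation that is not yet determined. The paper simply roots at either endpoint of $e_0$ with $a_0$ just before $e_0$ in \ccw order; since the quadrangle is not in a separating triangle this $a_0$ is automatically not in the \cw interior of any separating triangle, and then Lemma~\ref{lem:trianglef0} (applied a posteriori to the resulting minimal orientation) guarantees $e_0$ leaves $v_0$. No clever choice is needed.
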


\begin{proof}
  Let $a$ (for ``add'') be an arbitrarily chosen mapping defined on
  the maps $G'$ of $\mc T(n)$ that adds a diagonal $e_0$ in the
  quadrangle of $G'$ and roots the obtained toroidal triangulation $G$
  at a vertex $v_0$ incident to $e_0$ (this defines the root angle $a_0$
  situated just before $e_0$ in \ccw order around $v_0$).  Note that
  the added edge cannot create a separating 2-cycle, since otherwise
  the quadrangle would be in a separating triangle. Moreover the root
  angle of $G$ is not in the \cw interior of a separating
  triangle. Thus the image of $a$ is in $\mc T'_r(n)$, the subset of
  $\mc T_r(n)$ corresponding to toroidal triangulations rooted at an
  angle that is not in the \cw interior of a separating triangle.

  Let $\mathcal U'_{r,b,\gamma_0}(n)$ be the elements of
  $\mathcal U_{r,b,\gamma_0}(n)$ that have their root angle just
  before a stem in \ccw order around the root vertex.  Consider the
  mapping $g$, defined in the proof of Theorem~\ref{sec:bijection}.
  By above remarks and Theorem~\ref{sec:bijection}, the image of $g$
  restricted to $\mc T'_r(n)$ is in $\mathcal U'_{r,b,\gamma_0}(n)$.
  Let $r$ (for ``remove'') be the mapping that associates to an
  element of $\mathcal U'_{r,b,\gamma_0}(n)$ an element of
  $\mathcal U_{\gamma_0}(n)$ obtained by removing the root angle and
  its corresponding stem.  Finally, let $h=r\circ g\circ a$ which
  associates to an element of $\mc T(n)$ an element of
  $\mathcal U_{\gamma_0}(n)$.  Let us show that $h$ is a bijection.

  Consider an element $G'$ of $\mc T(n)$ and its image $U'$ by
  $h$. The complete closure procedure on $U'$ gives $G'$ thus the
  mapping $h$ is injective.

  Conversely, consider an element $U'$ of $\mathcal U_{\gamma_0}(n)$.
  Apply the complete closure procedure on $U'$. At the end of this
  procedure, the special face is a quadrangle whose angles are denoted
  $\alpha^1, \ldots, \alpha^4$. We denote also by
  $\alpha^1,\ldots, \alpha^4$ the corresponding angles of $U'$.  For
  $i\in\{1,\ldots,4\}$, let $U^i$ be the element of
  $\mathcal U_{r}(n)$ obtained by adding a root stem and a root angle
  in the angle $\alpha^i$ of $U'$, with the root angle just before the
  stem in \ccw order around the root vertex.  Note that by the choice
  of $\alpha^i$, the $U^i$ are all balanced.  By above remarks they
  also satisfy the $\gamma_0$ property and thus they are in
  $\mathcal U'_{r,b,\gamma_0}(n)$.

  By the proof of Theorem~\ref{th:bij1}, the complete closure
  procedure on $U^i$ gives a triangulation $G^i$ of $\mc T_r(n)$ that
  is rooted from an angle $a_0^i$ not in the strict interior of a
  separating triangle and oriented according to the minimal HTC
  Schnyder wood w.r.t.~the root face. Moreover the output of \aps
  applied on $(G^i,a_0^i)$ is $U^i$. Since in $U^i$, the root stem is
  present just after the root angle, the first edge seen by the
  execution of \aps on $(G^i,a_0^i)$ is outgoing. So $a_0$ is not in
  the \cw interior of a separating triangle (in a $3$-orientation, all
  the edges that are in the interior of a separating triangle and
  incident to the triangle are entering the triangle). Thus the $G^i$
  are appropriately rooted and are elements of $\mc T'_r(n)$. Removing
  the root edge of any $G_i$, gives the same map $G'$ of $\mc T(n)$.
  Exactly one of the $G_i$ is the image of $G'$ by the mapping
  $a$. Thus the image of $G'$ by $h$ is $U'$ and the mapping $h$ is
  surjective.
\end{proof}

A nice aspect of Theorem~\ref{th:bij2} comparing to
Theorem~\ref{th:bij1} is that the unicellular maps that are considered
are much simpler. They have no root nor balanced property anymore.  It
would be great to use Theorem~\ref{th:bij2} to count and sample
toroidal triangulations.  The main issue comparing to the planar case
is the $\gamma_0$ property.

\section{Higher genus}
\label{sec:highergenus}

The key lemmas that make the encoding method presented in this paper
work are Lemmas~\ref{lem:maximal} and~\ref{lem:incomingedges}. Note
that Theorem~\ref{th:lattice} is proved in a very general form
in~\cite{GKL15}. So one can consider a minimal element in the lattice
and get the equivalent of Lemma~\ref{lem:maximal}. Things are more
complicated for Lemma~\ref{lem:incomingedges} since the existence of
Schnyder woods in higher genus is only conjectured when $g\geq 2$ and
moreover we have no idea of what would be the generalization of
crossing property and thus HTC Schnyder woods. Nevertheless one can
hope to find orientations satisfying the conclusion of
Lemma~\ref{lem:incomingedges} and thus apply the same encoding method
as here. This is what we discuss below.

Recently, Albar, the second author and Knauer~\cite{AGK14} proved the
following:

\begin{theorem}[\cite{AGK14}]
\label{th:AGK}
A simple triangulation on a genus $g\geq 1$ orientable
surface admits an orientation of its edges such that every vertex has
outdegree at least $3$, and divisible by $3$.
\end{theorem}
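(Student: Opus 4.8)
The plan is to reduce the statement to a question about orientations with prescribed out-degrees, settle the arithmetic with Euler's formula together with the genus Euler bound for simple graphs, and then isolate a ``repair modulo $3$'' step as the hard part.

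Euler's formula on the genus-$g$ surface gives $|E|=3(|V|+2g-2)$; in particular $|E|\equiv 0\pmod 3$, and every orientation with all out-degrees divisible by $3$ has $\sum_v d^+(v)/3=|V|+2g-2\ge |V|$, i.e.\ exactly enough room to also make every out-degree at least $3$, with a total slack of $2g-2$; for $g=1$ there is no slack, so necessarily $d^+(v)=3$ for all $v$. I would use the classical orientation theorem (Hakimi): $G$ has an orientation with $d^+(v)=\alpha(v)$ for all $v$ iff $\alpha(V)=|E|$ and $\alpha(S)\ge |E(G[S])|$ for every $S\subseteq V$. For $\alpha\equiv 3$ this reads $|E(G[S])|\le 3|S|$, which holds since a simple graph on $k\ge 3$ vertices embedded on the genus-$g$ surface has at most $3k-6+6g$ edges, which is at most $3k$ when $g\le 1$ (and at most one edge when $k\le 2$); this settles $g=1$. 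Complementing on $T=V\setminus S$, the same theorem shows that on any surface of genus $g\ge 1$ the graph $G$ has an orientation with $d^+(v)\ge 3$ for all $v$, because the required inequality ``the number of edges incident to $S$ is $\ge 3|S|$'' is exactly $|E(G[T])|\le 3|T|-6+6g$ (for $|T|\ge 3$; trivial otherwise), i.e.\ again the Euler bound.

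For $g\ge 2$ I would start from an orientation $D$ with $d^+(v)\ge 3$ everywhere and repair it modulo $3$ while keeping the lower bound. A bad vertex (one with $d^+(v)\not\equiv 0\pmod 3$) of residue $1$ has $d^+(v)\ge 4$, and of residue $2$ has $d^+(v)\ge 5$, so reversing a directed path from $u$ to $w$, which only changes $d^+(u)\mapsto d^+(u)-1$ and $d^+(w)\mapsto d^+(w)+1$, never breaks the bound $d^+\ge 3$ when $u$ is bad. Since $\sum_v d^+(v)=|E|\equiv 0\pmod 3$, the residues of the bad vertices sum to $0$ modulo $3$, so there is always more than one bad vertex; the idea is to take $D$ minimising a potential such as $\sum_v (d^+(v)\bmod 3)$ among orientations with $d^+\ge 3$ and to show the minimum is $0$, by reversing a directed path from a residue-$1$ vertex to a residue-$2$ vertex (the potential then drops by $3$), treating two bad vertices of equal residue by a short chain of such reversals. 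The obstacle, which is where I expect the real work to be, is that the needed directed path may not exist: if $R$ is the set of vertices reachable from a bad vertex $u$ in $D$ and $R\ne V$, then every edge between $R$ and $V\setminus R$ enters $R$, so $|E(G[R])|=\sum_{v\in R}d^+(v)\ge 3|R|$, while the Euler bound gives $|E(G[R])|\le 3|R|-6+6g$; comparing this with the congruence $|E(G[R])|\equiv (\text{sum of residues of vertices of }R)\pmod 3$ constrains the residues trapped inside $R$ and should force either a second compatible bad vertex inside $R$ or a cancellation internal to $R$. Turning this into a potential that provably strictly decreases at each step (perhaps refining it to also penalise concentrated out-degrees) is the main difficulty.

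An alternative that trades this exchange argument for topology is to fix the excess in advance: look for $\alpha(v)=3+3\beta(v)$ with $\beta\ge 0$, $\sum_v\beta(v)=2g-2$, and check Hakimi's condition in the form $\sum_{v\in S}\beta(v)\ge\lceil (|E(G[S])|-3|S|)/3\rceil$ for all $S$. Here $|E(G[S])|-3|S|\le 6(h-1)$, where $h$ is the genus of the subsurface filled by $G[S]$, and since pairwise disjoint subsurfaces of the closed genus-$g$ surface have genera summing to at most $g$, the demands $\lceil (|E(G[S])|-3|S|)/3\rceil$ over pairwise disjoint $S$ sum to at most $2g-2$, matching the budget. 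The remaining point is then to realise a single $\beta$ meeting all the demands simultaneously, which I would attack with a Hall-type or submodularity argument on the set function $S\mapsto h(S)$. In both routes the genuinely non-routine ingredient is reconciling the single global constraint (total genus $g$) with the many local density constraints, and that is where I would concentrate the effort.
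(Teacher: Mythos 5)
You have not proved the statement, and you say so yourself; let me be precise about where it stops. First, a remark on the comparison: the paper does not prove Theorem~\ref{th:AGK} at all --- it is imported from~\cite{AGK14} as a black box --- so there is no in-paper argument to measure you against, and the burden of the full proof (which in~\cite{AGK14} is substantially more involved than either of your sketches) falls entirely on you. Your genus-$1$ case is complete and correct: $m=3n$ forces $d^+\equiv 3$, and Hakimi's condition $|E(G[S])|\le 3|S|$ follows from the Euler bound for simple graphs on the torus; this is the standard existence argument for $3$-orientations of toroidal triangulations. Likewise your observation that an orientation with $d^+\ge 3$ everywhere exists for all $g\ge 1$ is correct.

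For $g\ge 2$, however, both of your routes halt exactly where the content of the theorem begins. In the path-reversal route, your own computation is the obstruction, not a step toward resolving it: if $R$ is the set of vertices reachable from a bad vertex $u$ and $R\ne V$, then $\sum_{v\in R}d^+(v)=|E(G[R])|$, so the Euler bound only gives $\sum_{v\in R}(d^+(v)-3)\le 6h-6$ where $h$ is the genus filled by $G[R]$; for $h\ge 2$ this permits a single residue-$1$ vertex to be trapped in $R$ with no compatible partner, so the required directed path can genuinely fail to exist, and you exhibit no potential that strictly decreases (indeed $\sum_v(d^+(v)\bmod 3)$ is constant across the first reversal in your own $\{1,1,1\}$ chain). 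In the second route, the Frank--Gy\'arf\'as/Hakimi condition must hold for \emph{every} $S$, and your budget count only handles pairwise disjoint families; reducing the general case to disjoint families requires an uncrossing or supermodularity property of the deficiency function $S\mapsto\lceil(|E(G[S])|-3|S|)/3\rceil$ (equivalently of the filled genus $S\mapsto h(S)$), which you do not establish and which is exactly the delicate point: two overlapping subcomplexes can jointly fill more genus than the crossing inequalities naively suggest. So what you have is a correct reduction of the problem plus an accurate diagnosis of where the difficulty lies, but the theorem itself remains unproved in your write-up.
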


Theorem~\ref{th:AGK} is proved for simple triangulation but we believe
it to be true for all triangulations. Moreover we hope for a possible
generalization satisfying the conclusion of
Lemma~\ref{lem:incomingedges}:

\begin{conjecture}
\label{conj:accessibility}
A triangulation on a genus $g\geq 1$ orientable surface admits an
orientation of its edges such that every vertex has outdegree at least
$3$, divisible by $3$, and such that there is no oriented
non-contractible cycle in the dual orientation.
\end{conjecture}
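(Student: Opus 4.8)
The plan is to follow, as closely as possible, the blueprint that worked for the torus, isolating the two ingredients that really mattered there: Lemma~\ref{lem:maximal} (the minimal element of a lattice of homologous orientations carries no clockwise non-empty $0$-homologous oriented subgraph w.r.t.~$f_0$, equivalently its dual is accessible toward $f_0$) and Lemma~\ref{lem:incomingedges} (the dual of an HTC Schnyder wood has no oriented non-contractible cycle, a property shared by an entire homology class, not just by one orientation). The conjecture will follow if one can produce a single orientation witnessing both simultaneously.

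First I would secure a base orientation. Starting from Theorem~\ref{th:AGK} — which I would first try to extend from simple triangulations to all triangulations of genus $g\ge 1$, presumably again by an edge-contraction and separating-subconfiguration induction in the spirit of Section~\ref{sec:complexity} — fix one orientation $D_0$ with every outdegree at least $3$ and divisible by $3$. By the general form of Theorem~\ref{th:lattice} established in~\cite{GKL15}, the set $\mc{O}$ of orientations homologous to $D_0$ carries a distributive lattice structure $(\mc{O},\le_{f_0})$ once a face $f_0$ is chosen; here $f_0$ should be selected so as to avoid the ``strict interior'' of any separating triangle or separating non-contractible cycle, so that the counting and accessibility arguments of Section~\ref{sec:open} have a chance to survive. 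Letting $D_{\min}$ be the minimal element of $(\mc{O},\le_{f_0})$, the verbatim proof of Lemma~\ref{lem:maximal} shows that $D_{\min}$ has no clockwise non-empty $0$-homologous oriented subgraph w.r.t.~$f_0$; equivalently $D_{\min}^*$ has no directed cut oriented away from $f_0$, i.e.\ every face reaches $f_0$ in $D_{\min}^*$. This part I expect to be routine.

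The hard part — and the reason the statement is only conjectural — is to also guarantee that $D_{\min}^*$ has no oriented \emph{non-contractible} cycle, the analogue of Lemma~\ref{lem:incomingedges}. Minimality alone is insufficient: accessibility toward $f_0$ forbids directed cuts in the dual, not oriented cycles. In the torus one exploits that the chosen homology class (the crossing, hence HTC, one) contains monochromatic cycles $C_0,C_1,C_2$ that pairwise form homology bases and cross one another ``from left to right'', so any hypothetical oriented dual cycle $C^*$ has a strictly positive component along some $C_{i-1}$ and is therefore crossed left-to-right by the primal cycle $C_{i+1}$, a contradiction; and one checks that this property is stable under reversing $0$-homologous oriented subgraphs, hence holds throughout the lattice. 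To rerun this in genus $g\ge 2$ one needs a higher-genus substitute for a crossing Schnyder wood: a distinguished homology class of $\alpha$-orientations equipped with $2g$ (or more) ``monochromatic'' oriented subgraphs, each a union of non-contractible cycles, that together span $H(G)$ with enough positivity that every non-contractible closed curve admits a nonnegative expansion in one of them, and whose members cross each other with controlled signs. No such structure is known for $g\ge 2$ — the very existence of Schnyder woods there is open — and separating non-contractible cycles further complicate both the ``$\overline{C}$ is empty'' step and the stability-under-reversal step. My proposal therefore reduces the conjecture to constructing such a class; failing a clean construction, a fallback would be to argue directly, via a flip/minimality analysis inside $(\mc{O},\le_{f_0})$ or by cutting $G$ along a short system of loops, re-orienting in the resulting planar map with boundary, and re-gluing, that \emph{some} orientation homologous to $D_0$ has its dual acyclic on non-contractible cycles. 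I expect this last step to be the genuine obstacle.
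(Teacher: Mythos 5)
The statement you are asked to prove is labelled a \emph{conjecture} in the paper: the authors give no proof of it, and indeed Section~\ref{sec:highergenus} presents it precisely as an open problem. Your proposal does not close it either, and you say so yourself: the entire content of the conjecture is the existence of an orientation with the prescribed outdegrees whose dual has no oriented non-contractible cycle, and your argument reduces exactly this to ``constructing a distinguished homology class with enough crossing structure,'' i.e.\ to a higher-genus analogue of crossing Schnyder woods whose existence is unknown (the existence of Schnyder woods themselves is open for $g\geq 2$). Restating the difficulty in the language of Lemma~\ref{lem:incomingedges} is a faithful diagnosis of why the statement is hard, but it is not a proof; the fallback routes you mention (flip analysis in the lattice, cutting along loops and re-gluing) are not carried out and there is no argument that any of them terminates in the desired orientation. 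So there is a genuine gap, and it is the whole theorem.

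Two smaller points. First, you also leave unproved the extension of Theorem~\ref{th:AGK} from simple triangulations to all triangulations of genus $g\geq 1$; the paper only says the authors \emph{believe} this extension to hold, so you cannot treat it as available. Second, you are importing more than the conjecture requires: minimality of the orientation in $(\mc{O},\le_{f_0})$ and the accessibility of the dual toward $f_0$ are needed for the encoding and bijection machinery, but the conjecture itself asks only for existence of one orientation with the outdegree condition and no oriented non-contractible dual cycle. Folding Lemma~\ref{lem:maximal} into the statement makes your target strictly harder without helping with the actual obstruction.
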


Even if Conjecture~\ref{conj:accessibility} is true, more efforts
should be made to obtain a bijection since there might be several
minimal element satisfying the conjecture and one has to identify a
particular one (like the minimal HTC Schnyder wood in our situation).

\section*{Acknowledgments}
We thank Luca Castelli Aleardi, Nicolas Bonichon, Eric Fusy and
Frédéric Meunier for fruitful discussions about this work.

\end{document}